\title{Adapting Stable Matchings to Forced and Forbidden Pairs} 
\author{Niclas Boehmer}
\author{Klaus Heeger}
\affil{\small
  Technische Universit\"at Berlin, Faculty~IV, Algorithmics and Computational 
  Complexity\protect\\
  \{niclas.boehmer,heeger\}@tu-berlin.de}
\date{\today}
\newcommand{\lv}[1]{}
\newcommand{\appmark}{$\star$}
\newcommand{\appsymb}{$\bigstar$}
\newcommand{\appref}[1]{{\hyperref[proof:#1]{\appsymb}}}
\newtheorem{proposition}{Proposition}
\newtheorem{observation}{Observation}
\newtheorem{claim}{Claim}
\newtheorem{lemma}{Lemma}
\newtheorem{example}{Example}
\newtheorem{theorem}{Theorem}
\newcommand{\mytodo}[2]{\xspace}
\newcommand{\myrevtodo}[2]{{%
		\let\marginpamarginnote
		\reversemarginpar
		\renewcommand{\baselinestretch}{0.8}%
		}}
\newcommand{\myinlinetodo}[2]{\todo[size=\small, color=#1!50!white, inline,
	caption={}]{#2}\xspace}
\newcommand{\registerAuthor}[3]{%
	\expandafter\newcommand\csname #2com\endcsname[1]{\mytodo{#3}{\textsc{#2}:
			##1}}%
	\expandafter\newcommand\csname
	#2revcom\endcsname[1]{\myrevtodo{#3}{\textsc{#2}: ##1}}%
	\expandafter\newcommand\csname
	#2inline\endcsname[1]{\myinlinetodo{#3}{\textsc{#2}: ##1}}%
	\expandafter\newcommand\csname
	#2inlineLater\endcsname[1]{\lv{\myinlinetodo{#3}{\textsc{#2}: ##1}}}%
}
\newcommand{\rk}{\operatorname{rk}}
\newcommand{\forrot}{prohibited}
\newcommand{\Pvalid}{forced-respecting}
\tikzstyle{vertex}=[draw, circle, fill, inner sep = 2pt]
\tikzstyle{squared-vertex}=[draw, fill, inner sep = 2pt]
\DeclareMathOperator{\single}{single}
\newenvironment{claimproof}{\begin{proof}[Proof of Claim.]}{\end{proof}}
		\newcommand{\reduced}{\text{red}}
		\Crefname{enumi}{Step}{Steps}
		\crefname{enumi}{step}{steps}
\tikzstyle{vertex}=[draw, circle, fill, inner sep = 2pt]
\tikzstyle{bedge}=[line width=1.8pt]
\tikzstyle{squared-vertex}=[draw, fill, inner sep = 2pt]
\algnewcommand\algorithmicinput{\textbf{Input:}}
\algnewcommand\algorithmicoutput{\textbf{Output:}}
\algnewcommand\Input{\item[\algorithmicinput]}
\algnewcommand\Output{\item[\algorithmicoutput]}
\algnewcommand\algorithmicgoto{\textbf{GoTo }}
\algnewcommand\GoTo{\item[\algorithmicgoto]}
\algnewcommand{\IIf}[1]{\State\algorithmicif\ #1\ \algorithmicthen}
\algnewcommand{\EndIIf}{\unskip\ \algorithmicend\ \algorithmicif}
\algnewcommand{\IfThenElse}[3]{
	\State \algorithmicif\ #1\ \algorithmicthen\ #2\ \algorithmicelse\ #3}
\newcommand\tsup[2][2]{%
	\def\useanchorwidth{T}%
	\ifnum#1>1%
	\stackon[-.5pt]{\tsup[\numexpr#1-1\relax]{#2}}{\scriptscriptstyle\sim}%
	\else%
	\stackon[.5pt]{#2}{\scriptscriptstyle\sim}%
	\fi%
}
\DeclareMathOperator{\pend}{
	\succ 
	\overset{\raise0.3em\hbox{\text{\scriptsize{(rest)}}}}{\ldots}}
\providecommand*{\cupdot}{%
	\mathbin{%
		\mathpalette\@cupdot{}%
	}%
}
\newcommand*{\@cupdot}[2]{%
	\ooalign{%
		$\m@th#1\cup$\cr
		\hidewidth$\m@th#1\cdot$\hidewidth
	}%
}
\DeclareMathOperator{\Ac}{Ac}
\newcommand{\decprob}[3]{%
	\begin{center}%
		\begin{minipage}{0.9\linewidth}%
			\textsc{#1}\\
			\textbf{Input:} #2\\
			\textbf{Question:} #3
		\end{minipage}%
	\end{center}%
}
\newcommand{\ISRForcedForbidden}{\textsc{Adapt SR to Forced and Forbidden Pairs}\xspace}
\newcommand{\ISRtiesForcedForbidden}{\textsc{Adapt Weakly SR with Ties to Forced and Forbidden Pairs}\xspace}
\newcommand{\ISRStrongtiesForcedForbidden}{\textsc{Adapt Strongly SR with Ties to Forced and Forbidden Pairs}\xspace}
\newcommand{\ISMForcedForbidden}{\textsc{Adapt SM to Forced and Forbidden Pairs}\xspace}
\newcommand{\ISMtiesForcedForbidden}{\textsc{Adapt Weakly SM with Ties to Forced and Forbidden Pairs}\xspace}
\newcommand{\ISMStrongtiesForcedForbidden}{\textsc{Adapt Strongly SM with Ties to Forced and Forbidden Pairs}\xspace}
\renewcommand\thmcontinues[1]{Continued}
\begin{document}
\maketitle

\begin{abstract}
We introduce the problem of adapting a stable matching to forced and forbidden pairs. 
Specifically, given a stable matching $M_1$, a set~$Q$ of forced pairs, and a set $P$ of forbidden pairs, we want to find a stable matching that includes all pairs from $Q$, no pair from~$P$, and that is as close as possible to~$M_1$.
We study this problem in four classical stable matching settings: \textsc{Stable Roommates (with Ties)} and \textsc{Stable Marriage (with Ties)}.

As our main contribution, we employ the theory of rotations for \textsc{Stable Roommates} to develop a polynomial-time algorithm for adapting \textsc{Stable Roommates} matchings to forced pairs.
In contrast to this, we show that the same problem for forbidden pairs is NP-hard. 
However, our polynomial-time algorithm for the case of only forced pairs can be extended to a fixed-parameter tractable algorithm with respect to the number of forbidden pairs when both forced and forbidden pairs are present. 
Moreover, we also study the setting where preferences contain ties. 
Here, depending on the chosen stability criterion, we show either that our algorithmic results can be extended or that formerly tractable problems become intractable.
\end{abstract}

\section{Introduction}

Alice was recently hired as a tech lead and the company gave her the possibility to select her own team of software developers. 
After doing so, as it is a company-wide policy to use pair programming, Alice faces the problem of grouping her developers into pairs.
Because Alice is a fan of stable matchings, she organizes this by asking each software developer for his or her preferences over the other developers.
Subsequently, she computes and implements a stable matching (i.e., a matching where no two developers prefer each other to their assigned partner). 
Unfortunately, after a couple of weeks, Alice notices that Bob and Carol, who currently work together, like each other a little bit too much so that they spend most of their time not working productively.
Thus, she wants to assign both of them to a different partner. 
In contrast, Alice learns that Dan and Eve, who currently do not work together, have quite complementary skill sets. 
She believes that both of them would greatly benefit from working with each other.
Now, she faces the problem of finding a new stable matching that respects her wishes. 
However, as she observed that most pairs initially needed some time to find a joint way of working, she wants to minimize the number of new pairs, i.e., she wants the new matching to be as close as possible to the current one. 

More formally, the problem can be described as follows.
Alice is given a stable matching~$M_1$ of some agents with preferences over each other, a set $Q$ of forced pairs (those pairs need to be included in the new matching) and a set $P$ of forbidden pairs (none of these pairs is allowed to appear in the new matching), and she wants to find a new stable matching respecting the forced and forbidden pairs which is as close as possible to $M_1$.
We initiate the study of the decision variant of this problem, where we are additionally given an integer~$k$ and the symmetric difference between the old and the new matching shall be upper-bounded by~$k$,
in the following classical stable matching settings: 
\textsc{Stable Roommates} and its bipartite variant \textsc{Stable Marriage}, both combined with strict preferences or preferences containing ties.
We refer to the resulting problems as \textsc{Adapt Stable Roommates/Marriage [with Ties] to Forced and Forbidden Pairs}.\footnote{We consider two notions of stability if preferences contain ties, i.e., weak and strong stability. In weak stability, an agent pair $\{a,b\}$ is blocking a matching if both strictly prefer each other to their current partner, whereas in strong stability it is sufficient if $a$ strictly prefers $b$ to its partner and $b$ is indifferent between $a$ and its partner.}
For all six problems arising this way, we either present a polynomial-time algorithm or prove its NP-hardness.
Moreover, we provide a complete picture of the problems' parameterized computational complexity\footnote{Our results here are mostly along the parameterized complexity classes FPT and W[1]. A problem is fixed-parameter tractable (in FPT) with respect to some parameter $t$ if there is an algorithm solving every instance~$\mathcal{I}$ of the problem in $f(t) \cdot |\mathcal{I}|^{\mathcal{O}(1)}$~time for some computable function $f$. Under standard complexity theoretical assumptions, problems that are W[1]-hard for some parameter do not admit an FPT algorithm with respect to this parameter.} with respect to the problem-specific parameters $|P|$, $|Q|$, and $k$.

\paragraph{Related Work.}
Since the introduction of the \textsc{Stable Marriage} problem by Gale and Shapley \cite{DBLP:journals/tamm/GaleS13}, numerous facets of stable matching problems have been extensively studied in computer science and related areas (see, e.g., the monographs of Gusfield and Irving~\cite{DBLP:books/daglib/0066875}, Knuth \cite{Knuth76}, and Manlove \cite{DBLP:books/ws/Manlove13}). 
Our problem combines two previously studied aspects of stable matching problems: forced and forbidden pairs, and incremental algorithms. 

Dias et al.~\cite{DBLP:journals/tcs/DiasFFS03} initiated the study of stable matching problems with forced and forbidden pairs. 
The classical task here is to decide whether there is a stable matching including all forced pairs and no forbidden pair.\footnote{Note that our problem reduces to the classical problem associated with forced and forbidden pairs if we set the allowed distance between~$M_1$ and the matching to be found to infinity.}
While this problem can be solved in polynomial time if the preferences do not contain ties both in the roommates and marriage context~\cite{DBLP:journals/tcs/DiasFFS03,DBLP:journals/tcs/FleinerIM07}, the problem is NP-complete in the presence of ties for weak stability for marriage and roommates instances, even if there is only one forced and no forbidden pair~\cite{DBLP:journals/tcs/ManloveIIMM02} or one forbidden and no forced pair~\cite{DBLP:journals/disopt/CsehH20}.
Motivated by the straightforward observation that a stable matching including all given forced pairs and no forbidden pairs might not exist, Cseh and Manlove~\cite{DBLP:journals/disopt/CsehM16} studied the problem of finding a matching minimizing the number of ``violated constraints'' (where a violated constraint is either a blocking pair or a forced pair not contained in the matching or a forbidden pair contained in the matching).

Our problem also has a clear ``incremental'' dimension in the sense that we want to make as few changes as possible to a stable matching to achieve a certain goal. 
Many authors have studied such incremental problems in the context of various stable matching scenarios~\cite{DBLP:conf/icalp/BhattacharyaHHK15,DBLP:conf/mfcs/BoehmerHN22,uschanged,DBLP:conf/aaai/BredereckCKLN20,Feigenbaum17,DBLP:conf/fsttcs/GajulapalliLMV20,DBLP:conf/fsttcs/Gupta0R0Z20,DBLP:journals/algorithmica/MarxS10}.  
In the works of Bhattacharya et al.~\cite{DBLP:conf/icalp/BhattacharyaHHK15}, Boehmer et al. \cite{DBLP:conf/mfcs/BoehmerHN22,uschanged}, Bredereck et al. \cite{DBLP:conf/aaai/BredereckCKLN20}, Gajulapalli et al. \cite{DBLP:conf/fsttcs/GajulapalliLMV20}, and Feigenbaum et al. \cite{Feigenbaum17}, the focus lied on problems related to adapting matchings to change:
We are given a (stable) matching of agents, then some type of change occurs (e.g., some agents revise their preferences or some agents get added or deleted) and a new (stable) matching shall be found.
Here, as in our problem, it is often assumed that changing a pair in the matching is costly so the new matching should be as close as possible to the old one. 
As a second type of incremental problems, Marx and Schlotter~\cite{DBLP:journals/algorithmica/MarxS10} and Gupta et al. \cite{DBLP:conf/fsttcs/Gupta0R0Z20} analyzed the computational complexity of problems where one is given a stable matching $M$ and the task is to find a larger (almost) stable matching which is close to $M$.
On a more general note, this paper fits into the stream of works on incremental combinatorial problems \cite{DBLP:conf/icalp/BhattacharyaHHK15,DBLP:conf/atal/BoehmerN21,DBLP:journals/siamcomp/CharikarCFM04,DBLP:conf/icalp/EisenstatMS14,HMS21} where one aims at efficiently adapting solutions to changing inputs and requirements (in our case the requirement is that certain pairs are forbidden or forced), a core issue in modern algorithmics.

\begin{table*}
	\begin{center}
	\def\arraystretch{1.4}
	\resizebox{\textwidth}{!}{\begin{tabular}{ c|c|c|c|c }%
	& \textsc{SM}/\textsc{Strongly SM with Ties} & \textsc{Weakly SM/SR with Ties} & \textsc{SR}& \textsc{Strongly SR with Ties} \\  \hline
		Forced &  P (Pr. \ref{pr:SM-poly}) & \multirow{3}{*}{\parbox{5cm}{\centering NP-h. and W[1]-h. wrt. $k$+number of ties for one forced or forbidden pair (Pr.~\ref{pr:SM-tiesForc})}} & P (Th. \ref{thm:fpt-new}) & P (Th. \ref{thm:fpt-strong})  \\ \cline{1-2}\cline{4-5}
		Forbidden & P (Pr. \ref{pr:SM-poly}) & & NP-h. (Th. \ref{th:NP-h})  & NP-h. (Th. \ref{th:NP-h})\\ \cline{1-2}\cline{4-5} 
		Forced and Forbidden & P (Pr. \ref{pr:SM-poly}) & & FPT wrt. \#forbidden pairs in $M_1$ (Th. \ref{thm:fpt-new})  & FPT wrt. \#forbidden pairs in $M_1$ (Th. \ref{thm:fpt-strong}) 		  
	\end{tabular}}
\end{center}
\caption{Overview of our results. ``$k$'' denotes the allowed size of the symmetric difference between the old and new matching.
} \label{table:ov} 
\end{table*}

\paragraph{Our Contributions.}
We initiate the study of adapting stable matchings to forced and forbidden pairs.  
We consider this problem in six different settings and provide a complete dichotomy for the problems' (parameterized) computational complexity with respect to the problem-specific parameters $|P|$, $|Q|$, and $k$. See \Cref{table:ov} for an overview of our results.

In the first (short) part of the paper (\Cref{sec:SM}), we consider the bipartite marriage setting. 
We prove that adapting to forced and forbidden pairs is polynomial-time solvable for \textsc{Stable Marriages} without ties and in case of ties in combination with strong stability (\Cref{pr:SM-poly}). 
However, in case ties in the preferences are allowed and we are searching for weakly stable matchings, we obtain NP-hardness and W[1]-hardness with respect to the summed number of ties and the allowed difference $k$ between the old and the new matching (\Cref{pr:SM-tiesForc}). 
These hardness results hold even if there is only one forced and no forbidden pair or if there is only one forbidden and no forced pair.
As \textsc{Stable Roommates} generalizes \textsc{Stable Marriage}, these hardness results also hold for \textsc{Weakly Stable Roommates with Ties}.

In the second (main) part of the paper (\Cref{se:SR}), we focus on the \textsc{Stable Roommates} problem. 
Here, we first prove that in contrast to the bipartite setting, \textsc{Adapt Stable Roommates to Forced and Forbidden Pairs} is NP-hard, even if there are only forbidden pairs (\Cref{th:NP-h}). 
In contrast to this, the problem is fixed-parameter tractable with respect to the number of forbidden pairs (contained in the given matching; \Cref{thm:fpt-new}). In particular, if there are only forced pairs, then the problem is polynomial-time solvable.
To the best of our knowledge, this is the first problem which is tractable for forced but intractable for forbidden pairs.\footnote{Note that any problem involving only forced pairs can be reduced to a problem involving only forbidden pairs by setting for each forced pair~$\{a, b\}$ all pairs containing~$a$ except for~$\{a, b\}$ to be forbidden.}
The FPT-algorithm for adapting a \textsc{Stable Roommates} matching to forced and forbidden pairs is our main technical contribution.
Our algorithm relies on exploiting the structure of the rotation poset for \textsc{Stable Roommates} instances in a clever way: 
For this, we observe that for most pairs there is a necessary (and a prohibited) rotation that needs to be part of (cannot be part of) a set of rotations corresponding to a stable matching containing the pair. 
Using this, we can modify the set of rotations corresponding to the given matching to minimally change it to include all forced pairs. 
In fact, using some additional information, 
it is also possible to exclude forbidden pairs by modifying the  rotation set. 
Note that as each forbidden pair in $P$ requires a change in the matching $M_1$, this algorithm also constitutes a fixed-parameter tractable algorithm for the allowed difference $k$ between the old and the new matching.
Lastly, we describe how our algorithm can be modified to also work for the \textsc{Strongly SR with Ties} problem by exploiting the more intricate structure of the rotation poset for this problem using similar ideas as for \textsc{Stable Roommates} (\Cref{thm:fpt-strong}).

We defer the proofs (or their completions) of all results marked by (\appmark) to the appendix.

\section{Preliminaries}

In \textsc{Stable Roommates} (SR), we are given a set $A=\{a_1,\dots, a_{2n}\}$ of agents 
where each agent has a 
subset $\Ac(a)\subseteq A\setminus \{a\}$ of agents it finds \emph{acceptable}. 
We assume that acceptability is symmetric, i.e., $a\in \Ac(a')$ for some $a,a'\in A$ implies that $a'\in \Ac(a)$. 
Moreover, each agent~$a\in A$ has (strict) preferences $\succ_a$ over all agents it accepts, i.e., a total order over the agents $\Ac(a)$.
For agents~$a,a',a''\in A$, agent~$a$ \emph{prefers} $a'$ to $a''$ if $a'\succ_{a} a''$.

For a set $A$ of agents, we use $\binom{A}{2} $ to denote the 2-element subsets of~$A$; abusing notation, we will call these 2-element subsets \emph{pairs} although they are unordered.
A \emph{matching} $M$ is a set of pairs~$\{a,a'\}\in {A \choose 2}$ with $a \in \Ac (a')$ and $a' \in \Ac (a)$, where each
agent appears in at most one pair. 
An agent $a$ is \emph{matched} in some matching~$M$ if $M$ contains a pair containing $a$.
If $a$ is not matched in~$M$, then $a$ is \emph{unmatched}. 
A matching is \emph{complete} if all agents are matched. 
For an agent $a\in A$ and a matching $M$, we denote by~$M(a)$ the partner of~$a$ in $M$, i.e., $M(a)=a'$ if $\{a,a'\}\in M$.
For two matchings $M$ and $M'$ and an agent $a$ matched in both $M$ and $M'$, we say that $a$ prefers $M$ to~$M'$ if $a$ prefers $M(a)$ to $M'(a)$. 
An agent pair $\{a,a'\}\in {A \choose 2}$ \emph{blocks} a matching $M$ if 
\begin{enumerate*}[label=(\roman*)]
	\item $a\in \Ac(a')$ and $a'\in \Ac(a)$,
	\item $a$ is unmatched or $a$ prefers $a'$ to $M(a)$, and
	\item $a'$ is unmatched or $a'$ prefers $a$ to~$M(a')$.
\end{enumerate*}
A matching which is not blocked by any agent pair is called \emph{stable}. 
An agent pair $\{a,a'\}\in {A \choose 2}$ is a \emph{stable pair} if there is a stable matching $M$ with $\{a,a'\}\in M$. 
For two matchings $M$ and $M'$, we denote by~$M\triangle M'$ the set of pairs that only appear in one of $M$ and $M'$, i.e., $M \triangle M'=\{\{a,a'\}\mid \big(\{a,a'\}\in M \wedge \{a,a'\}\notin M' 
\big) \vee \big(\{a,a'\}\notin M \wedge \{a,a'\}\in M' \big)\}$. 
The main problem studied in this paper is the following:
\decprob{\textsc{Adapt SR to Forced and Forbidden Pairs}}{A 
	set $A$ of agents with strict preferences over each other,
	a stable 
	matching~$M_1$, a set of forced pairs $Q\subseteq {A \choose 2}$, a set of forbidden pairs $P\subseteq {A \choose 2}$, and 
	an integer~$k$.}{Is there a stable matching~$M_2$ with $Q\subseteq M_2$, $M_2\cap P = \emptyset$, and  $|M_1 \triangle 
	M_2| \le k$?}

In \textsc{SR with Ties}, a generalization of \textsc{SR}, each agent $a\in A$ has weak preferences $\succsim_a$ over all agents it accepts, i.e., $\succsim_a$ is a weak order over the agents $\Ac(a)$.
For agents $a,a',a''\in A$, agent $a$ \emph{weakly prefers} $a'$ to $a''$ if $a'\succsim_a a''$, agent~$a$ is \emph{indifferent} between $a'$ and~$a''$ (denoted as $a'\sim_a a''$) if both $a'\succsim_a a''$ and $a''\succsim_a a'$, and $a$ \emph{strictly prefers} $a'$ to $a''$ (denoted as $a'\succ_a a''$) if $a'\succsim_a a''$ but not $a''\succsim_a a'$. 
We distinguish two different types of stability in the presence of ties: 
Under weak/strong stability, an agent pair~$\{a,a'\}\in {A \choose 2}$ \emph{blocks} a matching $M$ if (i) $a\in \Ac(a')$ and $a'\in \Ac(a)$, (ii) $a$ is unmatched or $a$ strictly prefers $a'$ to $M(a)$ and (iii) $a'$ is unmatched or $a'$ strictly/weakly prefers $a$ to~$M(a')$.
The problems \textsc{Adapt Weakly/Strongly SR with Ties to Forced and Forbidden pairs} are defined analogous to
\textsc{Adapt SR to Forced and Forbidden Pairs}, where instead of strict preferences weak preferences are given and weak, respectively, strong stability is required.
	
In the bipartite variant of SR called \textsc{Stable Marriage} (SM), the agents are partitioned into two set $U$ and $W$. 
Following standard terminology, we call the elements from $U$ \emph{men} and the elements from $W$ \emph{women}. 
For each $m\in U$, we have $\Ac(m)\subseteq W$ and for each $w\in W$ we have $\Ac(w)\subseteq U$.
Consequently, agents from one side can only be matched to and form blocking pairs with agents from the other side. 
All other definitions from above still apply. 
The \textsc{Adapt (Strongly/Weakly) SM (with Ties) to Forced and Forbidden Pairs} problems are defined analogously to the respective variants for \textsc{SR} (the only difference being that the given instance is ``bipartite'', i.e., the set of agents can be split into two sets accepting only agents from the other set).

\section{Stable Marriage} \label{sec:SM}

In this section, we study the problem of adapting stable matchings to forced and forbidden pairs in the bipartite marriage setting. 

\subsection{(Strongly) Stable Marriage} \label{sub:SM-withoutTies}
We start by analyzing the case where agents' preferences are strict or when we are interested in strong stability in the presence of ties. 
We show that our problem is polynomial-time solvable in these settings by a simple reduction to the polynomial-time solvable \textsc{Weighted (Strongly) Stable Marriage (with Ties)} problem, where we are given an SM instance and a weight function on the pairs and the task is to compute a minimum-weight stable matching:

\begin{restatable}{proposition}{SMpoly}\label{pr:SM-poly}
	\ISMForcedForbidden and \ISMStrongtiesForcedForbidden are solvable in $\mathcal{O}(n \cdot m\log n)$~time. 
\end{restatable}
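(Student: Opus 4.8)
The plan is to reduce \ISMForcedForbidden (and its strongly stable analogue) to the \textsc{Weighted (Strongly) Stable Marriage (with Ties)} problem, which is known to be solvable in $\mathcal{O}(n\cdot m\log n)$~time. The key structural fact I will exploit is that in the (strong) stable marriage setting, handling forced and forbidden pairs is easy: a stable matching containing a given forced pair $\{m,w\}$ exists if and only if the instance obtained by deleting from each preference list all agents that $m$ (resp. $w$) strictly prefers to $w$ (resp. $m$) still admits a stable matching; and forbidding a pair $\{m,w\}$ is even simpler—one just deletes $m$ from $w$'s list and $w$ from $m$'s list. These reductions preserve the set of (strongly) stable matchings that respect the constraints (this is exactly the classical result of Dias et al. and Fleiner–Irving–Manlove cited in the related work), and crucially they do not create new stable matchings, so closeness to $M_1$ is unaffected.

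Concretely, I would proceed as follows. First, for every forced pair $\{m,w\}\in Q$, truncate the lists: remove from $m$'s list every agent $m$ strictly prefers to $w$, and symmetrically for $w$; also remove $m$ and $w$ from the lists of all other agents except their forced partners as needed so that any stable matching of the truncated instance matches $m$ to $w$. Second, for every forbidden pair $\{m,w\}\in P$, delete $m$ from $w$'s list and $w$ from $m$'s list. Let $I'$ be the resulting instance. I would then argue two things: (a) every (strongly) stable matching of $I'$ is a (strongly) stable matching of the original instance that contains all of $Q$ and none of $P$; and (b) conversely every (strongly) stable matching of the original instance respecting $Q$ and $P$ is a (strongly) stable matching of $I'$. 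Part (b) needs the standard observation that a stable matching containing a forced pair cannot use any of the deleted (more-preferred) edges, and that stability is inherited when passing to a sublist instance in which the matched partners survive. If $I'$ has no stable matching, output ``no''.

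Third, on $I'$ I would define a weight function that measures distance to $M_1$: give each pair $\{m,w\}\in M_1$ that is still present in $I'$ weight $0$ and every other surviving pair weight $1$ (or, more carefully, weight each pair so that the total weight of a matching $M_2$ equals $|M_1\setminus M_2|$, which differs from $\tfrac12|M_1\triangle M_2|$ only by the additive constant $|M_1\cap I'|$ minus terms that are determined; one can also just set the weight of a non-$M_1$ edge to $1$ and note $|M_1\triangle M_2| = |M_1| + |M_2| - 2|M_1\cap M_2|$, and since $|M_2|$ is constant across stable matchings of $I'$ when all agents matched in some stable matching stay matched in all—Gale–Shapley's ``same set matched'' theorem, with the appropriate version for strong stability—minimizing the number of non-$M_1$ edges is equivalent to minimizing $|M_1\triangle M_2|$). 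Then compute a minimum-weight (strongly) stable matching $M_2$ of $I'$ using the cited algorithm, and answer ``yes'' iff its weight corresponds to $|M_1\triangle M_2|\le k$.

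The main obstacle, and the point that needs the most care, is the bookkeeping in the weight-to-distance translation: I must ensure that minimizing the chosen weight on stable matchings of $I'$ genuinely minimizes $|M_1\triangle M_2|$ over all \emph{original} stable matchings respecting $Q,P$. This rests on (i) the correctness of the list-truncation reductions for forced/forbidden pairs in both the strict and strongly-stable-with-ties settings, and (ii) the fact that the set of agents matched by a (strongly) stable matching is invariant, so that $|M_2|$ is a constant and the $\triangle$-distance is an affine function of $|M_1\cap M_2|$, hence of the weight. Once these two facts are in place, the running time bound follows immediately because the reductions take $\mathcal{O}(n+m)$ time and are dominated by the $\mathcal{O}(n\cdot m\log n)$ cost of the weighted stable matching computation. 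I would also note that exactly the same argument works verbatim for \textsc{Strongly SM with Ties}, using the strong-stability versions of the truncation lemmas and of the ``same agents matched'' theorem.
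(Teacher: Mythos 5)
Your weight-to-distance bookkeeping (step three) is fine and is close in spirit to what the paper does, but the front end of your reduction has a genuine gap: deleting or truncating preference lists does \emph{not} preserve stability with respect to the original instance, so your claim (a) is false. Removing an edge removes a potential \emph{blocking} pair, hence the stable matchings of the truncated instance $I'$ form in general a strict superset of the original stable matchings respecting $Q$ and $P$. Concretely, take $U=\{m_1,m_2\}$, $W=\{w_1,w_2\}$ with $m_1: w_1 \succ w_2$, $m_2: w_1 \succ w_2$, $w_1: m_2 \succ m_1$, $w_2: m_1 \succ m_2$; the unique stable matching is $\{\{m_1,w_2\},\{m_2,w_1\}\}$. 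If $\{m_1,w_1\}$ is forced (or, equivalently, $\{m_2,w_1\}$ is forbidden), your deletions produce an instance in which $\{\{m_1,w_1\},\{m_2,w_2\}\}$ is stable, yet in the original instance this matching is blocked by $\{m_2,w_1\}$, so the algorithm would wrongly answer ``yes''. The same phenomenon breaks the forbidden-pair step: after deleting $\{m,w\}$, a stable matching of $I'$ may be blocked by $\{m,w\}$ itself. The results of Dias et al.\ and Fleiner--Irving--Manlove that you invoke do give polynomial algorithms for forced and forbidden pairs, but precisely because naive deletion fails they work through the rotation structure (or equivalent machinery); they do not assert the equivalence between stable matchings of $I'$ and constrained stable matchings of the original instance that your argument needs.

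The paper sidesteps this entirely: it never modifies the instance, and instead encodes the constraints into the weight function handed to the minimum-weight (strongly) stable matching algorithm --- weight $3n$ for forbidden pairs, roughly $-3n$ for forced pairs (with a $+2$ correction for forced pairs outside $M_1$), $0$ for $M_1$-pairs, and $2$ for all remaining pairs --- so that a single call decides feasibility and minimizes $|M_1\triangle M_2|$ simultaneously, using the Rural Hospitals Theorem exactly as you do to turn weight into symmetric difference. If you want to salvage your two-stage plan, you would have to replace the truncation by a correct characterization of the stable matchings containing $Q$ and avoiding $P$ (e.g.\ via rotations, as in \Cref{lem:cond-stable-pair} for the roommates case), which is considerably more work than the big-weight trick.
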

\begin{proof}
    Both problems can be solved using the same approach:
	We assume that $P\cap Q=\emptyset$, as otherwise we have a trivial no instance. 
	We define a weight function~$w$ as follows:
	For each forbidden pair~$e \in P$, we set $w(e ) := 3\cdot n$.
	For each forced pair~$e \in Q\setminus M_1$ that is not part of $M_1$, we set $w(e) : = 2 - 3 \cdot n$.
	For each forced pair~$e \in Q \cap M_1$ that is part of $M_1$, we set $w(e) := - 3\cdot n$.
	For each pair~$e \in M_1 \setminus (P \cup Q)$ that is part of $M_1$ but neither forced nor forbidden, we set $w (e) = 0$.
	For each remaining pair~$e$, we set $w(e) := 2$.
	We compute a minimum-weight stable matching~$M^*$ in $\mathcal{O} (n\cdot m \log n)$ time (see \cite{DBLP:journals/jcss/Feder92} for strict preferences and \cite{DBLP:conf/isaac/Kunysz18} for the case of ties with strong stability).
	Note that $w(M^*) = 3 \cdot n \cdot (|P \cap M^*| - |M^* \cap Q|) + 2 |M^* \setminus M_1| = 3 \cdot n (|P \cap M^*| - |M^* \cap Q|) + |M^* \triangle M_1| $ using that each stable matching has the same size by the Rural Hospitals Theorem \cite{Manlove99,roth1986allocation} (and thus $|M^*| = |M_1|$) for the second inequality.
	Since $|M^*| \le n$, it follows that $w(M^*) \le - 3 \cdot n \cdot |Q| + k$ if and only if $P \cap M^* = \emptyset$, $Q\subseteq M^*$, and $|M^* \triangle M_1 | \le k$.
\end{proof}

\subsection{Weakly Stable Marriage With Ties} \label{sub:SM-withties}

In contrast to the previous polynomial-time solvability result for strict preferences and for strong stability from \Cref{sub:SM-withoutTies}, we obtain strong intractability results if we consider weak stability.
Note that for \textsc{Weakly SM with Ties} already deciding the existence of a stable matching containing a single forced pair~\cite{DBLP:journals/tcs/ManloveIIMM02} or a single forbidden pair~\cite{DBLP:journals/disopt/CsehH20} is NP-complete, implying that \ISMtiesForcedForbidden\ is NP-complete already if $|P| = 1$ or~$|Q| = 1$.
We extend this hardness by showing W[1]-hardness when parameterized by the number of ties plus~$k$.

\begin{restatable}{proposition}{SMtiesForc}\label{pr:SM-tiesForc}
	\ISMtiesForcedForbidden restricted to instances where only agents from one side of the bipartition have ties in their preferences parameterized by the 
	number of ties  plus $k$ is W[1]-hard, even if~$|Q| = 
	1$ and $P = \emptyset$ or $Q = \emptyset $ and $|P| = 1$.
\end{restatable}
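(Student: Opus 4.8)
The plan is to give a parameterized reduction from \textsc{Multicolored Clique}, which is \textsf{W[1]}-hard when parameterized by the number~$\kappa$ of color classes $V_1,\dots,V_\kappa$ of the input graph~$G$. The point to exploit is that under weak stability a single ``large'' tie lets one agent be matched to an arbitrary member of a large set of agents without ever creating a blocking pair with the remaining members of that set. I use one such tie to encode the choice of a vertex in a color class and one such tie to encode the choice of an edge between two color classes, so that the whole selection is described by only $\kappa+\binom{\kappa}{2}$ ties, all placed on one side of the bipartition, and so that realizing any selection requires changing only $O(\kappa^2)$ pairs of~$M_1$.

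Concretely, for each class~$V_i$ I introduce a woman~$x_i$ who is indifferent among a set of \emph{vertex men} $\{m_{i,v}:v\in V_i\}$ and a dedicated \emph{null man} $z_i$; this is one tie, on the women's side. Each vertex man~$m_{i,v}$ has only~$x_i$ and a private dummy woman~$d_{i,v}$ on his (strict) list, preferring~$x_i$. Analogously, for each pair~$\{i,i'\}$ I add an \emph{edge selector}~$y_{i,i'}$ indifferent among \emph{edge men} $\{m^e_{i,i'}:e\in E(V_i,V_{i'})\}$ and a null man, each edge man having a private dummy. In the initial matching~$M_1$ every selector is matched to its null man and every vertex/edge man to its private dummy, and one checks that this ``dormant'' matching is weakly stable. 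Switching a selector from its null man to some vertex or edge man---i.e.\ making a genuine choice---then costs only a constant number of pairs in the symmetric difference with~$M_1$.

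The single forced pair $\{a^*,b^*\}\notin M_1$ serves as a global activator. The agents~$a^*$ and~$b^*$ are matched in~$M_1$ to dummies whose preferences are set so that forcing~$\{a^*,b^*\}$ releases these dummies and starts a \emph{resolution chain} that must thread all $\kappa+\binom{\kappa}{2}$ selectors: a released dummy is preferred by the next null man over its selector, hence bumps that selector off its null man, which forces that selector to match a genuine vertex/edge man, which releases that man's private dummy, continuing the chain until it is absorbed on~$b^*$'s side. Thus every weakly stable~$M_2$ with $\{a^*,b^*\}\in M_2$ makes a genuine choice at every selector, and carrying out such a chain changes a fixed $O(\kappa^2)$ pairs; accordingly I set $k:=c\kappa^2$ for a suitable constant~$c$. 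It then remains to add, using strict preferences only, \emph{consistency gadgets}---part of or activated along the chain---that admit a blocking-pair-free configuration precisely when, for every pair~$\{i,i'\}$, the chosen edge of $E(V_i,V_{i'})$ is incident to the chosen vertices of~$V_i$ and~$V_{i'}$; that is, precisely when the selection forms a multicolored clique of~$G$. This yields that $(G,\kappa)$ is a yes-instance iff some weakly stable~$M_2$ satisfies $\{a^*,b^*\}\in M_2$ and $|M_1\triangle M_2|\le k$, with parameter $k+(\text{number of ties})=O(\kappa^2)$. For the case $Q=\emptyset$, $|P|=1$ I use the mirror construction: instead of a pair absent from~$M_1$, I forbid a pair of~$M_1$ that ``holds the chain dormant'', which again releases the two dummies and triggers the identical cascade.

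The main obstacle I expect is exactly this joint engineering of the resolution chain and the consistency gadgets under the simultaneous constraints: (a)~$M_1$ is genuinely weakly stable; (b)~once the chain is forced, a blocking-pair-free completion exists iff the selections encode a multicolored clique; (c)~every such completion differs from~$M_1$ in only $O(\kappa^2)$ pairs, so that the polynomially many consistency gadgets are not all touched by the chain; and (d)~the instance is bipartite with every tie on one side. In particular, the delicate point is to verify that no ``unexpected'' blocking pair---between a released dummy and a far-away selector, or inside a triggered consistency gadget---allows one to avoid committing to a consistent full selection, which will require a careful case analysis of all preference lists.
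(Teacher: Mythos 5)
There is a genuine gap: the entire clique-consistency machinery is asserted rather than constructed. Your activation chain (forced pair releases dummies, which bump null men, which force every selector to commit to a vertex/edge man within an $O(\kappa^2)$ budget) is plausible, but the step ``it then remains to add, using strict preferences only, consistency gadgets \dots that admit a blocking-pair-free configuration precisely when the chosen edges are incident to the chosen vertices'' is exactly the technical heart of any \textsf{W[1]}-hardness proof for weakly stable matching with few ties, and you give no construction, no preference lists, and no argument that such gadgets exist under your constraints. Worse, your specific architecture faces a concrete obstruction: the only signal that vertex $v\in V_i$ was selected is that the single dummy $d_{i,v}$ changes partner (and in your design she is already consumed by the chain to bump the next null man). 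One agent has one partner, so this single ``token'' per color class cannot simultaneously certify incidence to the up to $\kappa-1$ edge gadgets that involve class~$i$; fanning the selection signal out to all $\binom{\kappa}{2}$ edge gadgets would need either additional ties (blowing up the parameter) or auxiliary chains/copies whose stability and whose contribution to $|M_1\triangle M_2|$ you have not analyzed. You would also need to argue that an inconsistent selection yields a blocking pair that cannot be repaired by spending slack budget elsewhere, i.e., that consistency is enforced by stability alone; nothing in the sketch addresses this. So as it stands the proposal is a plan whose hardest component is missing, and the analogous claims for the forbidden-pair ``mirror construction'' inherit the same gap.

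For contrast, the paper sidesteps all of this by reducing from the \textsc{Local Search Complete Weakly Stable Marriage with Ties} problem of Marx and Schlotter, which is already \textsf{W[1]}-hard in the number of ties plus the local-search radius (with ties on one side only): it adds two agents $u^*,w^*$ (respectively $w^*,u',w'$ for the forbidden-pair case), appends them at the ends of the existing preference lists, forces $\{u^*,w^*\}$ (respectively forbids $\{u',w'\}$), and sets $k=\ell+3$, so that the forced/forbidden pair exactly forces a complete stable matching close to the given one. All of the clique-encoding gadgetry you are trying to rebuild is inherited from that known result. If you want to salvage your direct reduction from \textsc{Multicolored Clique}, you would essentially have to reproduce a Marx--Schlotter-style consistency construction, which is precisely the part your write-up defers.
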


\begin{figure*}
	\begin{center}
		\begin{tikzpicture}[xscale = 2]
		\node[vertex, label=270:$u^*$] (a1) at (0,0) {};
		\node[vertex, label=270:$u_{\single}$] (a2) at (2,0) {};
		\node[vertex, label=270:$u_1$] (a3) at (3,0) {};
		\node[vertex, label=270:$u_2$] (a4) at (4,0) {};
		\node[vertex, label=270:$u_3$] (a5) at (5,0) {};
		
		\node[vertex, label=90:$w^*$] (b1) at (0,2) {};
		\node[vertex, label=90:$w_{\single}$] (b2) at (2,2) {};
		\node[vertex, label=90:$w_1$] (b3) at (3,2) {};
		\node[vertex, label=90:$w_2$] (b4) at (4,2) {};
		\node[vertex, label=90:$w_3$] (b5) at (5,2) {};
		
		\draw (a1) edge node[pos=0.2, fill=white, inner sep=2pt] {\scriptsize $2$}  node[pos=0.9, fill=white, inner sep=2pt] {\scriptsize $4$} (b3);
		\draw (a3) edge node[pos=0.1, fill=white, inner sep=2pt] {\scriptsize $3$}  node[pos=0.76, fill=white, inner sep=2pt] {\scriptsize $2$} (b1);
		
		\draw (a1) edge node[pos=0.2, fill=white, inner sep=2pt] {\scriptsize $3$}  node[pos=0.9, fill=white, inner sep=2pt] {\scriptsize $3$} (b4);
		\draw (a4) edge node[pos=0.1, fill=white, inner sep=2pt] {\scriptsize $4$}  node[pos=0.76, fill=white, inner sep=2pt] {\scriptsize $3$} (b1);
		
		\draw (a1) edge node[pos=0.2, fill=white, inner sep=2pt] {\scriptsize $4$}  node[pos=0.9, fill=white, inner sep=2pt] {\scriptsize $3$} (b5);
		\draw (a5) edge node[pos=0.1, fill=white, inner sep=2pt] {\scriptsize $3$}  node[pos=0.76, fill=white, inner sep=2pt] {\scriptsize $4$} (b1);
		
		\draw (a1) edge node[pos=0.2, fill=white, inner sep=2pt] {\scriptsize $5$}  node[pos=0.76, fill=white, inner sep=2pt] {\scriptsize $5$} (b1);
		\draw (a1) edge[ultra thick] node[pos=0.2, fill=white, inner sep=2pt] {\scriptsize $1$}  node[pos=0.9, fill=white, inner sep=2pt] {\scriptsize $2$} (b2);
		
		\draw (a2) edge[ultra thick] node[pos=0.1, fill=white, inner sep=2pt] {\scriptsize $2$}  node[pos=0.76, fill=white, inner sep=2pt] {\scriptsize $1$} (b1);
		\draw (a2) edge node[pos=0.1, fill=white, inner sep=2pt] {\scriptsize $1$}  node[pos=0.9, fill=white, inner sep=2pt] {\scriptsize $3$} (b3);
		
		\draw (a3) edge node[pos=0.1, fill=white, inner sep=2pt] {\scriptsize $1$}  node[pos=0.9, fill=white, inner sep=2pt] {\scriptsize $1$} (b2);
		\draw (a3) edge[ultra thick] node[pos=0.1, fill=white, inner sep=2pt] {\scriptsize $1$}  node[pos=0.9, fill=white, inner sep=2pt] {\scriptsize $1$} (b3);
		
		\draw (a4) edge node[pos=0.1, fill=white, inner sep=2pt] {\scriptsize $2$}  node[pos=0.9, fill=white, inner sep=2pt] {\scriptsize $2$} (b3);
		\draw (a4) edge[ultra thick] node[pos=0.28, fill=white, inner sep=2pt] {\scriptsize $3$}  node[pos=0.73, fill=white, inner sep=2pt] {\scriptsize $1$} (b4);
		\draw (a4) edge node[pos=0.25, fill=white, inner sep=2pt] {\scriptsize $1$}  node[pos=0.76, fill=white, inner sep=2pt] {\scriptsize $2$} (b5);
		
		\draw (a5) edge node[pos=0.2, fill=white, inner sep=2pt] {\scriptsize $2$}  node[pos=0.76, fill=white, inner sep=2pt] {\scriptsize $1$} (b4);
		\draw (a5) edge[ultra thick] node[pos=0.2, fill=white, inner sep=2pt] {\scriptsize $1$}  node[pos=0.76, fill=white, inner sep=2pt] {\scriptsize $1$} (b5);  
		
		\end{tikzpicture}
		
	\end{center}
	\caption{An example of the reduction from \Cref{pr:SM-tiesForc} for $P = \emptyset$ and $|Q| = 1$. Edges from~$M_1$ are depicted bold. The preferences of the agents are encoded in the numbers on the edges: For an edge $\{a,a'\}$, the number~$x$ closer to~$a$ denotes the position in which~$a'$ appear in the preference order of $a$, i.e., there are ${x -1}$~agents which $a$ strictly prefers to~$a'$.}
	\label{fig:forced:NP}
\end{figure*}
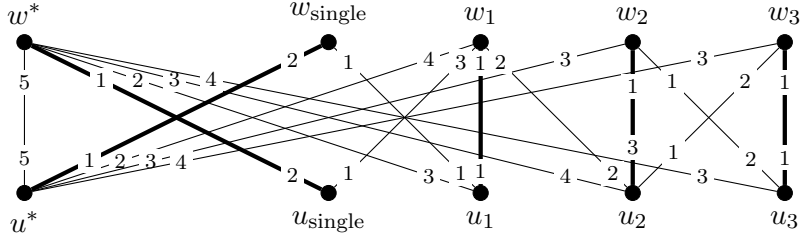
\begin{proof}
	\noindent \textbf{$\boldsymbol{P = \emptyset}$ and $\boldsymbol{|Q| = 1}$.}
	We reduce from the following problem related to 
	finding a complete stable matching in an \textsc{SM with Ties} 
	instance which we call \textsc{Local Search Complete Weakly Stable Marriage with Ties}: Given a 
	\textsc{SM with Ties} instance, and a stable matching~$N$ leaving only two agents unmatched, decide whether there exists a complete stable 
	matching~$N^*$ with $|N \triangle N^*| \le \ell$. Marx and 
	Schlotter showed that this problem 
	is W[1]-hard parameterized by the number ties plus~$\ell$, even if only the 
	preferences of agents from~$W$ contain ties \cite[Theorem 
	2]{DBLP:journals/algorithmica/MarxS10}.\footnote{Notably, 
		Marx and 
		Schlotter \cite{DBLP:journals/algorithmica/MarxS10} use a different measure 
		for the difference between two matchings, i.e., the number of agents that 
		are matched differently. However, as we here know that $|N|  + 1 = |N^*|$, this distance measure can be written as $|N\triangle N^*| + 1$.}
	
	We now establish a reduction from the above defined problem 
	to \ISMtiesForcedForbidden with only one forced and no forbidden pair.
	Let $(U \cup W,\mathcal{P})$ be an instance of \textsc{SM with Ties}, and 
	let $N$ be a stable matching of size $n-1$, where $n = |U| = |W|$. 
	Let $u_{\single}$ and $w_{\single}$ be the two agents unmatched in $N$.
	We add two agents~$u^*$ and~$w^*$. 
	The preferences of~$u^*$ respectively~$w^*$ start with all agents from $W$ respectively~$U$ in some arbitrary order followed by $w^*$ respectively~$u^*$. We set~$Q:= 
	\{\{u^*, w^*\}\}$ to be the set of forced pairs. Moreover, we add $u^*$ at 
	the 
	end of the preferences of every agent from~$W$ and $w^*$ at the end of the preferences 
	of every agent from~$U$. 
	Matching~$M_1$ is defined as $M_1 := N\cup \{\{u^*, w_{\single}\}, \{u_{\single}, 
	w^*\}\}$.
	The stability of~$M_1$ follows from the stability of $N$, as neither 
	$w^*$ nor $u^*$ is part of a blocking 
	pair.
	We set $k: = \ell + 3$. We now prove that there exists a complete stable matching~$N^*$ with $|N\triangle N^*| \leq k$ if and only if there exists a stable 
	matching $M^*$ such that $\{u^*, w^*\} \in 
	M^*$ and $| M_1 \triangle M^*| \le k$.
	Note that the constructed instance contains the same ties as $(U \cup W, \mathcal{P})$ (see \Cref{fig:forced:NP} for a visualization of the construction).
	
	$(\Rightarrow):$
	Given a complete stable matching $N^*$ with $|N \triangle N^*| \le \ell$, we 
	set~$M^* := N^* \cup \{\{u^*, 
	w^*\}\}$.
	Since $|N \triangle N^*| \le \ell$, it follows that $|M_1 \triangle M^*| \le 
	\ell+ 3 = k$.
	By definition, $M^* $ contains~$Q = \{\{u^*, w^*\}\}$.
	It remains to show that $M^*$ is a stable matching.
	Because $N^*$ is a matching, $M^*$ is also a matching.
	As $N^*$ is stable, every blocking pair must contain~$w^*$ or~$u^*$.
	Since $N$ is a complete matching and as all agents from $U \cup W$ rank $u^*$ and $w^*$ last, no agent from~$U \cup W$ prefers $u^*$ or~$w^*$ to their partner in $M^*$.
	Therefore, $M^*$ is stable. 
	
	$(\Leftarrow):$   
	Vice versa, let $M^*$ be a stable matching 
	such that $\{u^*, w^*\} \in M^*$ and $| M_1 \triangle M^*| \le k$.
	As $M^*$ contains $\{u^*,w^*\}$, it follows that for each agent~$a\in U \cup W$, agent~$a$ has to be matched to an agent it prefers to $u^*$ and $w^*$, i.e., an agent from $U \cup W$.
	Therefore, $N := M^* \setminus \{ \{u^*, w^*\}\}$ is a complete matching on~$U \cup W$.
	It is also a stable one, as any blocking pair would also be a blocking pair for $M^*$.
	Furthermore, $|N \triangle N^*| =|M_1 \triangle M^*|  - 3 \le \ell$. \medskip
	
	\noindent\textbf{$\boldsymbol{|P| = 1}$ and $\boldsymbol{Q = \emptyset}$.}
	We now modify our reduction to show hardness when $Q = \emptyset$ and $|P| = 1$ using a somewhat similar approach as in the NP-hardness of \textsc{Stable Marriage with Ties and Forced Edges} by Cseh and Heeger~\cite{DBLP:journals/disopt/CsehH20}.
	We again reduce from the above described \textsc{Local Search Complete Weakly Stable Marriage with Ties} problem and modify an instance of this problem as follows.
	This time, we only add~$w^*$ (but not~$u^*$) and additionally add two agents~$u'$ and~$w'$.
	Again, $w^*$ is added at the end of the preferences of each agent from~$U$.
	The preferences of~$w^*$ start with all agents from~$U$ (in an arbitrary order), followed by~$u'$.
	Agent~$u'$ prefers~$w^*$ to~$w'$, while $w'$ only accepts~$u'$.
	Finally, we set~$M_1 := N \cup \{\{u_{\single}, w^*\}, \{u', w'\}\}$, the set of forbidden pairs to $P := \{ \{u', w'\}\}$, and $k := \ell + 3$.
	
	$(\Rightarrow)$:
	Given a complete stable matching $N^*$ with $|N \triangle N^*| \le \ell$, we 
	set~$M^* := N^* \cup \{\{m' , w^*
	\}\}$.
	Since $|N \triangle N^*| \le \ell$, it follows that $|M_1 \triangle M^*| \le 
	\ell+ 3 = k$.
	By definition, $M^* $ does not contain the forbidden pair~$\{u', w'\}$.
	It remains to show that $M^*$ is a stable matching.
	Because $N^*$ is a matching, also $M^*$ is a matching.
	As $N^*$ is stable, every blocking pair must contain~$w^*$, $u'$, or $w'$.
	Since $N$ is a complete matching, no agent from~$U \cup W$ prefers to be matched to~$w^*$, so $w^*$ is not part of a blocking pair.
	Neither~$u'$ nor $w'$ are part of a blocking pair as $u'$ prefers~$w^*$ to~$w'$.
	Therefore, $M^*$ is stable.
	
	$(\Leftarrow):$   
	Vice versa, let $M^*$ be a stable matching 
	such that $\{u', w'\} \notin M^*$ and $| M_1 \triangle M^*| \le k$.
	As $\{u', w'\} \notin M^*$, it follows that $\{u', w^*\} \in M^*$ (otherwise $\{u', w'\}$ would block~$M^*$).
	As $\{u, w^*\}$ does not block~$M^*$ for some~$u \in U$, it follows that each $u\in U$ is matched to some~$w \in W$.
	As $|U|=|W|$ it follows that $N := M^* \setminus \{ \{u', w^*\}\}$ is a complete matching on~$U \cup W$. 
	It is also a stable one, as any blocking pair would also be a blocking pair for~$M^*$.
	Furthermore, $|N \triangle N^*| =|M_1 \triangle M^*|  - 3 \le \ell$. 
\end{proof}

\section{Stable Roommates} \label{se:SR}
As the strong intractability results for \ISMtiesForcedForbidden from \Cref{sub:SM-withties} extend to \ISRtiesForcedForbidden (as \textsc{SR with Ties} generalizes \textsc{SM with Ties}) in this section we focus on \textsc{Adapt (Strongly) SR (with Ties) to Forced and Forbidden Pairs}.
We first prove in \Cref{SR:hard} that adapting a \textsc{SR} matching to forbidden pairs is NP-hard even without ties. 
Afterwards, in our core \Cref{sub:FPT}, we prove that \ISRForcedForbidden parameterized by the number of forbidden pairs that appear in $M_1$ is fixed-parameter tractable (and thus that adapting an \textsc{SR} matching to forced pairs is polynomial-time solvable) by exploiting the rotation poset.
Lastly, in \Cref{fpt:strong}, we extend this result to also work for  \ISRStrongtiesForcedForbidden. 

\subsection{NP-hardness of Adapt SR to Forbidden Pairs} \label{SR:hard}
In this section, we prove that in contrast to the bipartite marriage setting, \ISRForcedForbidden (without ties) is already NP-hard (even if we only have forbidden pairs). 

\begin{theorem} \label{th:NP-h}
	\ISRForcedForbidden is NP-hard, even if $Q = \emptyset$ and $P\subseteq M_1$. 
\end{theorem}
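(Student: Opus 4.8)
By the results of Dias et al.\ and Fleiner et al.\ cited in the introduction, deciding whether \emph{some} stable matching avoiding a given set of forbidden pairs exists is polynomial for \textsc{SR} without ties, so the hardness of \ISRForcedForbidden can only come from the distance budget~$k$. The plan is to reduce from \textsc{Vertex Cover} (or a similar covering problem): given a graph $G=(V,E)$ and an integer $t$, I would build an \textsc{SR} instance together with a stable matching $M_1$, a set $P\subseteq M_1$ of forbidden pairs, and a bound~$k$, so that the stable matchings $M_2$ with $P\cap M_2=\emptyset$ and $|M_1\triangle M_2|\le k$ correspond exactly to vertex covers of $G$ of size at most $t$. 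A covering problem is the natural source because a forbidden pair $e\in M_1$ can be destroyed, in a stable matching, only by performing one of a bounded number of local modifications; each forbidden pair thus behaves like an edge that must be ``covered'', while each such modification is shared by the forbidden pairs it destroys — exactly the \textsc{Vertex Cover} structure once every modification is made to correspond to a vertex.

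\textbf{Construction.} For each vertex $v\in V$ I would install a \emph{switch gadget}: a constant-size \textsc{SR} sub-instance whose only two stable matchings are $N_v^{\mathrm{off}}$ and $N_v^{\mathrm{on}}$, differing on a fixed number $c$ of pairs. For each edge $e=\{u,v\}\in E$ I would install an \emph{edge gadget} containing a distinguished pair $p_e$ and preferences arranged so that (i) in isolation every stable matching of the edge gadget contains $p_e$, but (ii) whenever the switch gadget of $u$ or of $v$ is in state ``on'', the edge gadget admits a stable configuration in which $p_e$ is absent, at an extra cost of a fixed number $d$ of pairs that does not depend on which incident switch is used. Cross-gadget acceptabilities occur only between an edge gadget and its two incident switch gadgets, and the interface preferences are chosen by preference domination so that an ``on'' switch creates a blocking threat which the edge gadget can neutralise only by dropping $p_e$, while an ``off'' switch blocks nothing. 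Set $M_1$ to be the matching in which every switch gadget is in state ``off'' and every edge gadget is in its $p_e$-containing state (then stability of $M_1$ is immediate inside each gadget and, by the interface design, across gadgets); set $P:=\{p_e: e\in E\}\subseteq M_1$, $Q:=\emptyset$, and $k:=c\cdot t+d\cdot|E|$.

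\textbf{Correctness.} For the forward direction, from a vertex cover $C$ with $|C|\le t$ one turns every switch gadget of $C$ to ``on'' and reconfigures each edge gadget using one incident ``on'' switch, obtaining a stable matching $M_2$ with $P\cap M_2=\emptyset$ and $|M_1\triangle M_2|\le c|C|+d|E|\le k$. For the converse, let $M_2$ be such a matching. Since every $p_e$ is absent, property (i) forces at least one switch gadget incident to $e$ to differ from $M_1$; hence the set $C$ of switch gadgets that differ from $M_1$ in $M_2$ is a vertex cover of $G$. As each differing switch contributes at least $c$ and each of the $|E|$ reconfigured edge gadgets at least $d$ to $|M_1\triangle M_2|$, the budget $k=ct+d|E|$ forces $|C|\le t$. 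Thus the constructed instance is a yes-instance if and only if $G$ has a vertex cover of size at most $t$, and since $Q=\emptyset$ and $P\subseteq M_1$ the claimed restriction holds.

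\textbf{Main obstacle.} The technical heart is controlling the \emph{entire} set of stable matchings of the composed instance: one must rule out ``unintended'' stable matchings that destroy some $p_e$ with no incident switch on, or that use cross-gadget pairs to save distance, and verify that switches can be turned on independently and that turning on extra switches never lowers the cost. A clean way to organise this — and the language the rest of the paper develops anyway — is the rotation poset of an \textsc{SR} instance: a stable matching is a closed subset of rotations, a forbidden pair $p\in M_1$ becomes the disjunction ``omit the rotation that created $p$ together with its up-set, or add the rotation that destroys $p$ together with its down-set'', and $|M_1\triangle M_2|$ equals, up to bounded additive error, the symmetric difference of the two closed subsets. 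Designing a poset whose closed subsets emulate \textsc{Vertex Cover} (using constant-length rotations on pairwise disjoint agent sets so that the distance counting is exact), and then certifying that such a poset is realised by an actual \textsc{SR} instance, is where the remaining work lies.
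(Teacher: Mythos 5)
Your write-up is a reduction \emph{plan}, not a proof: the entire technical content of an NP-hardness argument of this kind lies in exhibiting concrete preference lists for the switch and edge gadgets and verifying their claimed behaviour, and that is precisely what you defer (``where the remaining work lies''). The properties you assume are nontrivial to realise in \textsc{SR}: an edge gadget that, in isolation, has \emph{every} stable matching containing $p_e$, yet is forced to drop $p_e$ at a fixed cost $d$ whenever either (or both) incident switch gadgets are ``on''; switch gadgets that can be toggled independently without creating blocking pairs across gadget boundaries; a guarantee that no stable matching of the composed instance uses cross-gadget pairs (in \textsc{SR} this does not follow from the gadgets being ``separate'' once you add interface acceptabilities — the paper has to invoke the Rural Hospitals Theorem to exclude such pairs); and an exact or at-least accounting of the per-gadget contribution to $|M_1\triangle M_2|$. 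Until gadgets with these properties are written down and these global properties are checked, the equivalence with \textsc{Vertex Cover} is only asserted, so there is a genuine gap.

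For comparison, the paper's reduction (from \textsc{Independent Set}, the complement view of your covering intuition) is leaner than your architecture: there are no edge gadgets at all. Each vertex $v$ gets a ten-agent gadget with the single forbidden pair $\{a_2^v,b_2^v\}\in M_1$, and adjacency is encoded by letting $a_2^v$ rank the agents $a_2^w$ of neighbours $w$ directly in its list. Destroying the forbidden pair leaves exactly two stable configurations per gadget: a cheap one (symmetric difference $4$, matching $a_2^v$ to its last choice, hence exposed to blocking by neighbouring cheap gadgets) and an expensive one (symmetric difference $8$, matching $a_2^v$ to its top choice). Stability then forces the cheap gadgets to form an independent set, and the budget $k=8|V|-4\ell$ forces at least $\ell$ of them; completeness of stable matchings (Rural Hospitals) rules out the unintended cross pairs $\{a_2^v,a_2^w\}$. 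If you want to complete your own route, you will need gadget constructions and a completeness/decomposition argument of comparable concreteness; alternatively, note that per-edge gadgets can be avoided entirely by pushing the ``coverage'' constraint into blocking pairs between vertex gadgets, as the paper does.
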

	\begin{figure*}
		\centering
		\begin{subfigure}[t]{0.3\textwidth}
			\centering
			\resizebox{.65\textwidth}{!}{\begin{tikzpicture}
				\node[vertex, label=180:$a_1^v$] (a1) at (0,0) {};
				\node[vertex, label=180:$a_2^v$] (a2) at (0,2) {};
				\node[vertex, label=180:$a_3^v$] (a3) at (0,3) {};
				\node[vertex, label=180:$a_4^v$] (a4) at (0,4) {};
				\node[vertex, label=180:$a_5^v$] (a5) at (0,5) {};
				
				\node[vertex, label=0:$b_1^v$] (b1) at (2,0) {};
				\node[vertex, label=0:$b_2^v$] (b2) at (2,2) {};
				\node[vertex, label=0:$b_3^v$] (b3) at (2,3) {};
				\node[vertex, label=0:$b_4^v$] (b4) at (2,4) {};
				\node[vertex, label=0:$b_5^v$] (b5) at (2,5) {};
				
				\draw[line width=2.5pt] (a1) edge node[pos=0.2, fill=white, inner sep=2pt] {\scriptsize $1$}  node[pos=0.76, fill=white, inner sep=2pt] {\scriptsize $2$} (b1);
				\draw (a1) edge node[pos=0.2, fill=white, inner sep=2pt] {\scriptsize $2$}  node[pos=0.76, fill=white, inner sep=2pt] {\scriptsize $1$} (b2);
				
				\draw (a2) edge node[pos=0.2, fill=white, inner sep=2pt] {\scriptsize $|N(v)|+3$}  node[pos=0.76, fill=white, inner sep=2pt] {\scriptsize $1$} (b1);
				\draw[line width=2.5pt] (a2) edge node[pos=0.15, fill=white, inner sep=2pt] {\scriptsize $2$}  node[pos=0.76, fill=white, inner sep=2pt] {\scriptsize $2$} (b2);
				\draw (a2) edge node[pos=0.25, fill=white, inner sep=2pt] {\scriptsize $1$}  node[pos=0.76, fill=white, inner sep=2pt] {\scriptsize $3$} (b3);
				
				\draw (a3) edge node[pos=0.25, fill=white, inner sep=2pt] {\scriptsize $1$}  node[pos=0.76, fill=white, inner sep=2pt] {\scriptsize $3$} (b2);
				\draw[line width=2.5pt] (a3) edge node[pos=0.2, fill=white, inner sep=2pt] {\scriptsize $2$}  node[pos=0.76, fill=white, inner sep=2pt] {\scriptsize $1$} (b3);
				
				\draw (a4) edge node[pos=0.25, fill=white, inner sep=2pt] {\scriptsize $2$}  node[pos=0.76, fill=white, inner sep=2pt] {\scriptsize $2$} (b3);
				\draw[line width=2.5pt] (a4) edge node[pos=0.2, fill=white, inner sep=2pt] {\scriptsize $3$}  node[pos=0.76, fill=white, inner sep=2pt] {\scriptsize $1$} (b4);
				\draw (a4) edge node[pos=0.25, fill=white, inner sep=2pt] {\scriptsize $1$}  node[pos=0.76, fill=white, inner sep=2pt] {\scriptsize $2$} (b5);
				
				\draw (a5) edge node[pos=0.2, fill=white, inner sep=2pt] {\scriptsize $1$}  node[pos=0.76, fill=white, inner sep=2pt] {\scriptsize $2$} (b4);
				\draw[line width=2.5pt] (a5) edge node[pos=0.3, fill=white, inner sep=2pt] {\scriptsize $2$}  node[pos=0.7, fill=white, inner sep=2pt] {\scriptsize $1$} (b5);    
				\end{tikzpicture}}
			\caption{Initial matching.}
			\label{fig:y equals x}
		\end{subfigure}
		\hfill
		\begin{subfigure}[t]{0.3\textwidth}
			\centering
			\resizebox{.65\textwidth}{!}{\begin{tikzpicture}
				\node[vertex, label=180:$a_1^v$] (a1) at (0,0) {};
				\node[vertex, label=180:$a_2^v$] (a2) at (0,2) {};
				\node[vertex, label=180:$a_3^v$] (a3) at (0,3) {};
				\node[vertex, label=180:$a_4^v$] (a4) at (0,4) {};
				\node[vertex, label=180:$a_5^v$] (a5) at (0,5) {};
				
				\node[vertex, label=0:$b_1^v$] (b1) at (2,0) {};
				\node[vertex, label=0:$b_2^v$] (b2) at (2,2) {};
				\node[vertex, label=0:$b_3^v$] (b3) at (2,3) {};
				\node[vertex, label=0:$b_4^v$] (b4) at (2,4) {};
				\node[vertex, label=0:$b_5^v$] (b5) at (2,5) {};
				
				\draw (a1) edge node[pos=0.2, fill=white, inner sep=2pt] {\scriptsize $1$}  node[pos=0.76, fill=white, inner sep=2pt] {\scriptsize $2$} (b1);
				\draw[line width=2.5pt] (a1) edge node[pos=0.2, fill=white, inner sep=2pt] {\scriptsize $2$}  node[pos=0.76, fill=white, inner sep=2pt] {\scriptsize $1$} (b2);
				
				\draw[line width=2.5pt] (a2) edge node[pos=0.2, fill=white, inner sep=2pt] {\scriptsize $|N(v)|+3$}  node[pos=0.76, fill=white, inner sep=2pt] {\scriptsize $1$} (b1);
				\draw (a2) edge node[pos=0.15, fill=white, inner sep=2pt] {\scriptsize $2$}  node[pos=0.76, fill=white, inner sep=2pt] {\scriptsize $2$} (b2);
				\draw (a2) edge node[pos=0.25, fill=white, inner sep=2pt] {\scriptsize $1$}  node[pos=0.76, fill=white, inner sep=2pt] {\scriptsize $3$} (b3);
				
				\draw (a3) edge node[pos=0.25, fill=white, inner sep=2pt] {\scriptsize $1$}  node[pos=0.76, fill=white, inner sep=2pt] {\scriptsize $3$} (b2);
				\draw[line width=2.5pt] (a3) edge node[pos=0.2, fill=white, inner sep=2pt] {\scriptsize $2$}  node[pos=0.76, fill=white, inner sep=2pt] {\scriptsize $1$} (b3);
				
				\draw (a4) edge node[pos=0.25, fill=white, inner sep=2pt] {\scriptsize $2$}  node[pos=0.76, fill=white, inner sep=2pt] {\scriptsize $2$} (b3);
				\draw[line width=2.5pt] (a4) edge node[pos=0.2, fill=white, inner sep=2pt] {\scriptsize $3$}  node[pos=0.76, fill=white, inner sep=2pt] {\scriptsize $1$} (b4);
				\draw (a4) edge node[pos=0.25, fill=white, inner sep=2pt] {\scriptsize $1$}  node[pos=0.76, fill=white, inner sep=2pt] {\scriptsize $2$} (b5);
				
				\draw (a5) edge node[pos=0.2, fill=white, inner sep=2pt] {\scriptsize $1$}  node[pos=0.76, fill=white, inner sep=2pt] {\scriptsize $2$} (b4);
				\draw[line width=2.5pt] (a5) edge node[pos=0.3, fill=white, inner sep=2pt] {\scriptsize $2$}  node[pos=0.7, fill=white, inner sep=2pt] {\scriptsize $1$} (b5);    
				\end{tikzpicture}}
			\caption{Matching $M^v$ ($v$ is selected to be part of the independent set).}
			\label{fig:three sin x}
		\end{subfigure}
		\hfill
		\begin{subfigure}[t]{0.3\textwidth}
			\centering
			\resizebox{.65\textwidth}{!}{\begin{tikzpicture}
				\node[vertex, label=180:$a_1^v$] (a1) at (0,0) {};
				\node[vertex, label=180:$a_2^v$] (a2) at (0,2) {};
				\node[vertex, label=180:$a_3^v$] (a3) at (0,3) {};
				\node[vertex, label=180:$a_4^v$] (a4) at (0,4) {};
				\node[vertex, label=180:$a_5^v$] (a5) at (0,5) {};
				
				\node[vertex, label=0:$b_1^v$] (b1) at (2,0) {};
				\node[vertex, label=0:$b_2^v$] (b2) at (2,2) {};
				\node[vertex, label=0:$b_3^v$] (b3) at (2,3) {};
				\node[vertex, label=0:$b_4^v$] (b4) at (2,4) {};
				\node[vertex, label=0:$b_5^v$] (b5) at (2,5) {};
				
				\draw[line width=2.5pt] (a1) edge node[pos=0.2, fill=white, inner sep=2pt] {\scriptsize $1$}  node[pos=0.76, fill=white, inner sep=2pt] {\scriptsize $2$} (b1);
				\draw (a1) edge node[pos=0.2, fill=white, inner sep=2pt] {\scriptsize $2$}  node[pos=0.76, fill=white, inner sep=2pt] {\scriptsize $1$} (b2);
				
				\draw (a2) edge node[pos=0.2, fill=white, inner sep=2pt] {\scriptsize $|N(v)|+3$}  node[pos=0.76, fill=white, inner sep=2pt] {\scriptsize $1$} (b1);
				\draw (a2) edge node[pos=0.15, fill=white, inner sep=2pt] {\scriptsize $2$}  node[pos=0.76, fill=white, inner sep=2pt] {\scriptsize $2$} (b2);
				\draw[line width=2.5pt] (a2) edge node[pos=0.25, fill=white, inner sep=2pt] {\scriptsize $1$}  node[pos=0.76, fill=white, inner sep=2pt] {\scriptsize $3$} (b3);
				
				\draw[line width=2.5pt] (a3) edge node[pos=0.25, fill=white, inner sep=2pt] {\scriptsize $1$}  node[pos=0.76, fill=white, inner sep=2pt] {\scriptsize $3$} (b2);
				\draw (a3) edge node[pos=0.2, fill=white, inner sep=2pt] {\scriptsize $2$}  node[pos=0.76, fill=white, inner sep=2pt] {\scriptsize $1$} (b3);
				
				\draw (a4) edge node[pos=0.25, fill=white, inner sep=2pt] {\scriptsize $2$}  node[pos=0.76, fill=white, inner sep=2pt] {\scriptsize $2$} (b3);
				\draw (a4) edge node[pos=0.2, fill=white, inner sep=2pt] {\scriptsize $3$}  node[pos=0.76, fill=white, inner sep=2pt] {\scriptsize $1$} (b4);
				\draw[line width=2.5pt] (a4) edge node[pos=0.25, fill=white, inner sep=2pt] {\scriptsize $1$}  node[pos=0.76, fill=white, inner sep=2pt] {\scriptsize $2$} (b5);
				
				\draw[line width=2.5pt] (a5) edge node[pos=0.2, fill=white, inner sep=2pt] {\scriptsize $1$}  node[pos=0.76, fill=white, inner sep=2pt] {\scriptsize $2$} (b4);
				\draw (a5) edge node[pos=0.3, fill=white, inner sep=2pt] {\scriptsize $2$}  node[pos=0.7, fill=white, inner sep=2pt] {\scriptsize $1$} (b5);    
				\end{tikzpicture}}
			\caption{Matching $\bar M^v$ ($v$ is not selected to be part of the independent set).}
			\label{fig:five over x}
		\end{subfigure}
		\caption{The preferences of agents~$a_i^v$ and $b_i^v$ for some~$v\in V(G)$ with different matchings highlighted in bold. $\{a_2^v,b_2^v\}$ is the forbidden pair. The preferences of the agents are encoded in the numbers on the edges: For an edge $\{a,a'\}$, the number~$x$ closer to~$a$ denotes the position in which~$a'$ appears in the preferences of $a$, i.e., $a$ prefers exactly $x-1$ agents to~$a'$.}
		\label{fig:adaptSR}
	\end{figure*}
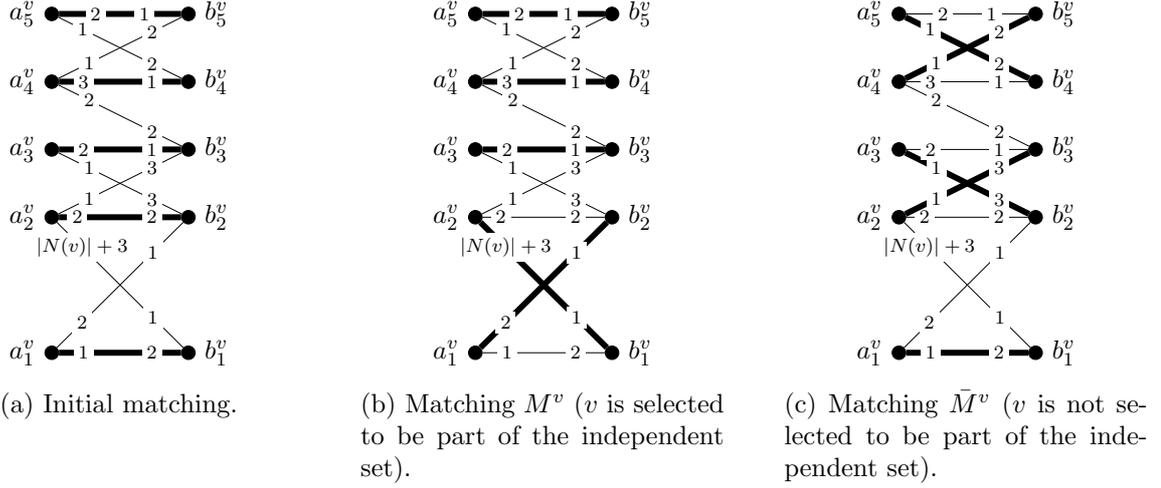
\begin{proof}
	We reduce from the NP-hard \textsc{Independent Set} problem~\cite{DBLP:conf/coco/Karp72}.
	Let $(G, \ell)$ be an instance of \textsc{Independent Set}.
	For a vertex~$v\in V(G)$, we denote by~$N(v)$ the set of its neighbors in~$G$.
	For each vertex~$v \in V(G)$, the \ISRForcedForbidden instance contains ten agents $a_1^v$, \dots, $a_5^v$, $b_1^v$, \dots, $b_5^v$.
	For each $v\in V(G)$, fix an arbitrary strict order of $\{a_2^w  \mid w \in N(v)\}$ and denote this order by $[N^* (v)]$.
	For each $v\in V(G)$ the preferences of the respective ten agents are as follows (see also \Cref{fig:adaptSR}): 
		\begin{align*}
		&a_1^v : b_1^v \succ b_2^v, \quad a_2^v : b_3^v \succ b_2^v \succ [N^*(v)] \succ b_1^v,\\ &a_3^v  : b_2^v \succ b_3^v, \quad a_4^v : b_5^v \succ b_3^v \succ b_4^v, \quad a_5^v : b_4^v \succ b_5^v \\
		&b_1^v  : a_2^v \succ a_1^v, \quad b_2^v  : a_1^v \succ a_2^v \succ a_3^v, \quad  b_3^v  : a_3^v \succ a_4^v \succ a_2^v, \quad\\ &b_4^v  : a_4^v \succ a_5^v, \quad b_5^v  : a_5^v \succ a_4^v
		\end{align*}
	Finally, we set $M_1 := \{\{a_i^v, b_i^v\} \mid i \in [5] , v \in V(G)\}$, $P := \{\{a_2^v, b_2^v\} \mid v \in V(G)\}$, and~$k:= 8 |V(G)| - 4\ell$.
	Note that $M_1$ is stable, as for each $v\in V(G)$, agents~$b_3^v$, $b_4^v$, $b_5^v$, and $a_1^v$ are matched to their top-choices (so they cannot be part of a blocking pair) and $a_2^v$ and $b_2^v$ are matched to their most preferred agents that are not listed above.
	
	$(\Rightarrow)$:
	Let~$X$ be an independent set of size~$\ell$ in $G$.
	For a vertex~$v \in V(G)$, we set~$M^v := \{ \{a_1^v, b_2^v\},\{a_2^v, b_1^v\},\{a_3^v, b_3^v\},\{a_4^v, b_4^v\},\{a_5^v, b_5^v\}\}$ and $\bar M^v := \{ \{a_1^v, b_1^v\},\{a_2^v, b_3^v\},\{a_3^v, b_2^v\},\{a_4^v, b_5^v\},\{a_5^v, b_4^v\}\}$.
	We set~$M^* := \bigcup_{v \in X} M^v \cup \bigcup_{v \in V \setminus X } \bar M^v$.
	Then $M^* \triangle M_1 =  \{ \{a_1^v, b_2^v\},\{a_2^v, b_1^v\},\allowbreak\{a_1^v, b_1^v\},\allowbreak\{a_2^v, b_2^v\} \mid v\in X\} \cup \{\{a_2^v, b_3^v\},\allowbreak\{a_3^v, b_2^v\},\allowbreak\{a_4^v, b_5^v\}, \allowbreak\{a_5^v, b_4^v\}, \allowbreak\{a_2^v, b_2^v\},\allowbreak \{a_3^v, b_3^v\},\allowbreak \{a_4^v, b_4^v\},\allowbreak \{a_5^v, b_5^v\} \mid v\in V(G) \setminus X\}$.
	Consequently, we have~$|M^* \triangle M_1| = 4|X| + 8 \cdot (|V(G)| -|X|) = 8|V(G) | - 4\ell$.
	As $M^*$ clearly does not contain any forbidden pair, it remains to show that $M^*$ is stable.
	
	It is straightforward to verify that no pair~$\{a_i^v, b_j^v\}$ for~${i, j \in [5]}$ and $v \in V(G)$ is blocking.
	The remaining acceptable pairs are~$\{a_2^v, a_2^w\}$ for some $\{v, w\}\in E(G)$.
	Since~$X$ is an independent set, we may assume without loss of generality that $v \notin X$.
	This implies that $M^* (a_2^v) = b_3^v \succ_{a_2^v} a_2^w$, implying that $\{a_2^v, a_2^w\}$ does not block~$M^*$.
	Thus, $M^*$ is stable.
	
	$(\Leftarrow)$:
	Let~$M^*$ be a stable matching with~$|M^* \triangle M_1| \le k = 8|V(G)| - 4\ell$ in the constructed instance.
	First note that the Rural Hospitals Theorem~\cite[Theorem~4.5.2]{DBLP:books/daglib/0066875} (which states that every stable matching matches the same set of agents) implies that every stable matching is complete in the constructed instance.
	Consequently, $M^*$ does not contain a pair of the form~$\{a_2^v, a_2^w\}$ (as otherwise one of $b_1^v$, \dots, $b_5^v$ would be unmatched in~$M^*$).
	Thus, for each~$v \in V(G)$, we have $M^* (a_2^v) \in \{b_1^v, b_3^v\}$ (recall that we forbid the pair $\{a_2^v,b_2^v\}$ for all $v\in V(G)$).
	Note that $X := \{v \in V(G) \mid \{a_2^v, b_1^v\} \in M^*\}$ is an independent set:
	If $\{v, w\} \in E(G)$ for $v\neq w \in X$, then $\{a_2^v, a_2^w\}$ blocks~$M^*$.
	
	It remains to show that $|X| \ge \ell$.
	For each $v\in X$, we have $\{a_2^v, b_1^v\} \in M^*$ (by the definition of~$X$) and $\{a_1^v, b_2^v\} \in M^*$ (as $a_1^v$ would be unmatched otherwise).
	Consequently, $|\big(M^* \triangle M_1 \big) \cap \{\{a_i^v, b_j^v\}: i, j\in [5]\}| \ge 4$.
	For each $v \in V(G) \setminus X$, by the definition of $X$ we have $\{a_2^v,b_3^v\}\in M^*$. Moreover, note that $M^*$ contains~$\{a_1^v, b_1^v\}$ (as otherwise $b_1^v$ would be unmatched) and $\{a_3^v, b_2^v\}$ (otherwise $b_2^v$ would be unmatched).
	Further, $M^*$ contains~$\{a_4^v, b_5^v\}$ (otherwise $\{a_4^v, b_3^v\}$ would be blocking) and $\{a_5^v, b_4^v\}$ (otherwise $a_5^v$ and $b_4^v$ would be unmatched).
	Consequently, we have $|\big(M^* \triangle M_1 \big) \cap \{\{a_i^v, b_j^v\}: i, j\in [5]\}| \ge 8$.
	Summing up, we get that $k = 8|V(G)| - 4 \ell \ge |M_1 \triangle M^* | \ge 4|X| + 8 (|V(G) | - |X|) = 8 |V(G)| - 4|X|$, which is equivalent to $|X| \ge \ell$.
\end{proof}

\subsection{(FPT-)Algorithm for Adapt SR to Forced and Forbidden Pairs} \label{sub:FPT}
In this section, we develop an FPT-algorithm for the \ISRForcedForbidden problem parameterized by the number of forbidden pairs in $M_1$ (note that this algorithm is a polynomial-time algorithm if no forbidden pairs are present).
Our algorithm  heavily relies on the rotation poset for \textsc{Stable Roommates}.
Thus, we start this section by defining rotations (\Cref{rotations:intro}) and describing the high-level idea of our algorithm together with proving some useful facts concerning rotations (\Cref{rotations:facts}), before we present our algorithm (\Cref{alg}).

In the following we assume that all considered stable matchings (and in particular the initial matching $M_1$) are complete matchings, as we can otherwise modify the instance accordingly in $\mathcal{O}(m)$ time.\footnote{If $M_1$ is not complete, let $B$ be the set of agents unmatched in $M_1$. For each agent~$b\in B$, we add an agent $b'$ to the instance which only finds $b$ acceptable and which is added at the end of the preferences of $b$. Then, using the Rural Hospitals Theorem for SR~\cite{DBLP:books/daglib/0066875}, which states that each stable matching in a SR instance matches the same set of agents, it follows that all stable matchings in the modified instance contain pairs~$\{\{b,b'\}\mid b\in B\}$. Consequently, the modified instance is equivalent to the original one.} 
\subsubsection{Rotations: Introduction} \label{rotations:intro}
We first formally define what a rotation is, then discuss their relationship to Irving's algorithm, and lastly identify different types of rotations.

\paragraph{Basic Definitions for Rotations.}
	For an instance of \textsc{SR}, an \emph{exposed rotation} is a sequence of agent pairs~$(a_{i_0}, a_{j_0}), \dots, (a_{i_{r-1}}, a_{j_{r-1}})$ such that, for each~${s\in [r]}$, agent~$a_{i_s}$ ranks $a_{j_{s}}$ first and $a_{j_{s+1}}$ second (where all indices in this paragraph are taken modulo $r$).\footnote{
	Notably, a rotation has no fixed start point, as we can start with any pair from the sequence resulting in shifted versions of the same rotation. In the following, we do not distinguished between these different shifted variants of the same rotation as they are the same for our purposes.}
	\emph{Eliminating} an exposed rotation $(a_{i_0}, a_{j_0}), \dots, (a_{i_{r-1}}, a_{j_{r-1}})$ means deleting, for all~${s \in [r]}$, all agents which~$a_{j_s}$ ranks after $a_{i_{s-1}}$ from the preferences of~$a_{j_s}$. 
	The \emph{dual} $\bar{\varphi}$ of a rotation~$\varphi=(a_{i_0}, a_{j_0}), \dots, (a_{i_{r-1}}, a_{j_{r-1}})$ is $\bar{\varphi}=(a_{j_0}, a_{i_{r-1}}), (a_{j_1}, a_{i_0}), (a_{j_2}, a_{i_1}), \dots, (a_{j_{r-1}}, a_{j_{r-2}})$.
	Note that the dual of the dual of a rotation is again the rotation itself.
	
\paragraph{Irving's Algorithm.}	
	The theory of rotations is closely connected to Irving's algorithm \cite{DBLP:journals/jal/Irving85}. 
	Irving's algorithm constructs a stable matching in an \textsc{SR} instance (if it exists) in two phases. 
	In the first phase, similar to the Gale-Shapely algorithm for \textsc{SM}, agents make proposals to each other, which are accepted or rejected. 
	Doing so, certain parts of the agent's preferences get deleted. 
	Let $P_0$ be the preference profile of the agents after the termination of Phase 1. 
	Now Phase~2 consists of eliminating exposed rotations one after each other until no rotation is exposed anymore (note that after eliminating a rotation, some agents delete agents from their preferences, causing the set of exposed rotations to change). 
	If no rotation is exposed, then either there is at least one agent with empty preferences, implying that no stable matching exists, or every agent has exactly one other agent left in its preferences, implying that matching the agents to the remaining agent in their preferences results in a stable matching.
	We call a preference profile a \emph{stable table} if it can be derived from $P_0$ after successively eliminating exposed rotations.
	For an instance of \textsc{Stable Roommates}, the \emph{rotations} are sequences of agent pairs which may arise as an exposed rotation in some execution of Irving's algorithm (since Irving's algorithm may eliminate any exposed rotation, different executions of Irving's algorithm may result in different stable matchings and different exposed rotations)
	
	\paragraph{(Non)-Singular Rotations and Further Definitions.}	
	Using the view of Irving's algorithm now allows us to identify different types of rotations. 
	A rotation $\varphi$ is \emph{nonsingular} if its dual $\bar{\varphi}$ is again a rotation.
	Otherwise, the rotation $\varphi$ is \emph{singular}.
	For two rotations~$\varphi, \rho$, we say $\varphi \vartriangleright \rho$ (or $\varphi$ precedes~$\rho$) if $\varphi $ must be eliminated from a stable table to give a stable table where $\rho $ is exposed. 
	A set~$Z$ of rotations is \emph{closed} if whenever $\rho \in Z$ and $\varphi \vartriangleright \rho$, then also $\varphi \in Z$.
	A set~$Z$ of rotations is \emph{complete} if it contains all singular rotation and for each nonsingular rotation~$\varphi$, it contains either $\varphi$ or $\bar \varphi$.
	An agent pair is called \emph{fixed} if it is contained in every stable matching.
	An agent~$b$ is a \emph{stable partner} of agent~$a$ if there is a stable matching containing~$\{a, b\}$, i.e., if $\{a, b\}$ is a stable pair.
	
	\begin{example}[label=ex:rotation]
	 Consider the following instance of \textsc{Stable Roommates} (in fact, this is even an instance of \textsc{Stable Marriage}).
	 \begin{align*}
	  m_1 & : w_1 \succ w_2 \succ w_3 & \qquad & w_1  : m_2 \succ m_3 \succ m_1\\
	  m_2 & : w_2 \succ w_3 \succ w_1 & \qquad & w_2  : m_3 \succ m_1 \succ m_2\\
	  m_3 & : w_3 \succ w_1 \succ w_2 & \qquad & w_3  : m_1 \succ m_2 \succ m_3
	 \end{align*}
	 Phase 1 of Irving's algorithm does not alter the preferences of this instance. 
	 Thus, the above preference profile is $P_0$.
	 In $P_0$, rotations $\varphi_1 = (m_1, w_1), (m_2, w_2), (m_3, w_3)$ and $\varphi_2 = (w_1, m_2), (w_2, m_3), (w_3,m_1)$ are exposed.
	 After eliminating $\varphi_1$, rotations~$\varphi_2$ and $\varphi_3 = (m_1, w_2), (m_2, w_3), (m_3, w_1)$ are exposed. 
	 After eliminating $\varphi_2$, rotations~$\varphi_1$ and $\varphi_4 = (w_1, m_3), (w_2, m_1), (w_3, m_2)$ are exposed. 
	 $\varphi_1,\varphi_2,\varphi_3,\varphi_4$ are the only rotations.
	 Note that $\bar \varphi_1 = \varphi_4$ and $\bar \varphi_2 = \varphi_3$, implying that all four rotations are nonsingular and that no singular rotation exists.
	 The rotation poset contains only the following two relations:
	 $\varphi_1 $ precedes $\varphi_3$ and $\varphi_2 $ precedes $\varphi_4$.
	 Consequently, there are three closed and complete subsets of the rotation poset:
	 $\{\varphi_1, \varphi_2\}$ (whose elimination results in the stable matching~$\{\{m_1, w_2\}, \{m_2, w_3\}, \{m_3, w_1\}\}$), $\{\varphi_1, \varphi_3\}$ (whose elimination results in~$\{\{m_1, w_3\}, \{m_2, w_1\}, \{m_3, w_2\}\}$), and $\{\varphi_2, \varphi_4\}$ (whose elimination results in~$\{\{m_1, w_1\}, \{m_2, w_2\}, \{m_3, w_3\}\}$).
	\end{example}

	We continue by observing the following basic fact about rotations:
\begin{lemma}[{\cite[p. 169 and Lemma 4.2.7]{DBLP:books/daglib/0066875}}]\label{obs}\label{obs2}
 If rotation $\varphi=(a_{i_1}, a_{j_1}), \dots, (a_{i_{r}}, a_{j_{r}})$ is exposed in some stable table~$T$, then $a_{i_k}$ is the last agent in the preferences of $a_{j_k}$ in~$T$ for each $k\in [r]$. Eliminating $\varphi$ in particular includes deleting the pair $\{a_{i_k}, a_{j_k}\}$ for each~$k\in [r]$. 
\end{lemma}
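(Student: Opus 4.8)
The plan is to obtain both assertions from the basic structural invariant of stable tables: in every stable table, an agent $x$ has $y$ at the top of its list if and only if $x$ is the bottom entry of $y$'s list. I would first recall why this invariant holds. It is true for the profile $P_0$ produced by Phase~1 of Irving's algorithm (the proposal phase only leaves $a$ ranking $b$ first once $a$ has become the last remaining entry on $b$'s list, and symmetrically), and it is preserved under rotation elimination, since eliminating an exposed rotation merely truncates some lists, and the truncation is defined precisely so that this top/bottom correspondence and symmetric acceptability are maintained; this is exactly what the cited passages of Gusfield and Irving provide. With the invariant in hand, the first claim is immediate: since $\varphi$ is exposed in $T$, agent $a_{i_k}$ ranks $a_{j_k}$ first in $T$, hence $a_{i_k}$ is the last agent on $a_{j_k}$'s list in $T$.

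For the second claim I would first note that $a_{i_{k-1}}\neq a_{i_k}$: the rotation pair $(a_{i_{k-1}}, a_{j_{k-1}})$ forces $a_{i_{k-1}}$ to rank $a_{j_{k-1}}$ first and $a_{j_k}$ second, so $a_{j_{k-1}}\neq a_{j_k}$, and if $a_{i_{k-1}}=a_{i_k}$ held then this agent would rank both $a_{j_{k-1}}$ and $a_{j_k}$ first, a contradiction. Since $a_{j_k}$ occurs (in second position) on $a_{i_{k-1}}$'s list and acceptability in a stable table is symmetric, $a_{i_{k-1}}$ occurs on $a_{j_k}$'s list; as $a_{i_k}$ is the bottom entry of that list and $a_{i_{k-1}}\neq a_{i_k}$, the agent $a_{i_k}$ is ranked by $a_{j_k}$ strictly after $a_{i_{k-1}}$. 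By definition, eliminating $\varphi$ deletes from $a_{j_k}$'s list every agent it ranks after $a_{i_{k-1}}$ (equivalently, removes the corresponding pairs), so in particular the pair $\{a_{i_k}, a_{j_k}\}$ is removed.

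The only real work lies in the invariant and the fact that rotation elimination preserves it together with symmetric acceptability; both are standard facts about Phase~1 of Irving's algorithm and about rotation elimination, so I would invoke the cited references rather than reprove them, and treat that as the single nontrivial ingredient. Everything else is a direct unwinding of the definitions of ``exposed rotation'' and ``eliminating a rotation''.
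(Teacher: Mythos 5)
Your proof is correct. The paper gives no proof of this lemma at all---it imports it directly from Gusfield and Irving (p.~169 and Lemma~4.2.7)---and your derivation from the standard stable-table invariant (an agent $x$ is first on $y$'s list iff $y$ is last on $x$'s list, together with symmetric acceptability preserved by Phase~1 and rotation elimination) is exactly the content of the cited material, with the remaining steps being the same routine unwinding of the definitions of ``exposed'' and ``eliminating'' a rotation.
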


For our algorithm, we will exploit that it is possible to work on sets of rotations instead of stable matchings, as there is a bijection between closed complete subsets of rotations and stable matchings. 
In particular, give a closed and complete subset of rotations $Z$, there is an ordering of the rotations from~$Z$ such that starting with $P_0$ we can eliminate the exposed rotations one by one, resulting in a preference profile where the preferences of each agent $a$ only contain the partner of $a$ in the matching corresponding to $Z$ (see also \Cref{ex:rotation}):
\begin{lemma}[{\cite[Theorem 4.3.2]{DBLP:books/daglib/0066875}}]
	\label{lem:poset}
	There is a bijection between closed and complete subsets of rotations and stable matchings.
	The bijection maps each closed and complete subset~$Z$ of rotations to the matching arising through the elimination of each rotation of~$Z$.
\end{lemma}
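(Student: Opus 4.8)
Since this is the fundamental structure theorem of the rotation poset (\cite[Chapter~4]{DBLP:books/daglib/0066875}), the plan is to reconstruct both directions of the correspondence and to verify that the proposed map is well defined; the engine driving everything is a confluence (``diamond'') property of rotation elimination. \textbf{Step 1 (Confluence).} I would first prove the local lemma: if two distinct rotations $\varphi$ and $\rho$ are both exposed in a stable table $T$, then $\rho$ is still exposed after eliminating $\varphi$ from $T$, and eliminating $\varphi$ then $\rho$ produces the same stable table as eliminating $\rho$ then $\varphi$. This follows from \Cref{obs}: eliminating $\varphi$ only removes, from each preference list, entries lying strictly below a pair of $\varphi$, and since $\rho$ is also exposed in $T$ none of its relevant entries are among these. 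An induction on the number of eliminated rotations then gives: if some valid elimination order starting from $P_0$ removes exactly the rotations of a set $S$, then every valid order removing $S$ is possible and all reach the same stable table, which I call $T_S$.

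\textbf{Step 2 (The map is well defined).} By the definition of $\vartriangleright$, any set removable from $P_0$ is closed. Conversely, for a closed set $Z$, repeatedly eliminating a $\vartriangleright$-minimal not-yet-removed rotation of $Z$ is legal (such a rotation is exposed) and removes all of $Z$, reaching $T_Z$, which by Step~1 is independent of the choices. Then I would show that $T_Z$ has all lists of size one exactly when $Z$ is complete: if some agent still has at least two entries, Irving's algorithm exposes a further rotation $\sigma$, and closedness forces $\sigma \notin Z$, so $Z$ omits a singular rotation or both halves of a nonsingular dual pair, i.e.\ $Z$ is not complete; conversely, if $Z$ is complete, no further rotation can be exposed from $T_Z$. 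Hence $Z \mapsto M_Z$, the matching read off the singleton lists of $T_Z$, is well defined on closed complete sets, and its image consists of stable matchings.

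\textbf{Step 3 (Bijectivity).} For injectivity, if $Z \neq Z'$ are closed and complete, pick $\sigma$ that is $\vartriangleright$-minimal in the symmetric difference, say $\sigma \in Z \setminus Z'$; the rotations of $Z \cap Z'$ preceding $\sigma$ form a common legal prefix, after which $\sigma$ is exposed, so by \Cref{obs} the pairs of $\sigma$ are all deleted in $T_Z$ while at least the first one survives in $T_{Z'}$, whence $M_Z \neq M_{Z'}$. For surjectivity, given a stable matching $M$, I would run Phase~2 always eliminating an exposed rotation whose elimination does not delete any pair of $M$ from the current table; such a rotation exists while some list is still non-singleton (using the stability of $M$ and the properties of stable tables), so the run terminates in the singleton table of $M$, and the set $Z$ of rotations it eliminates is closed (by the definition of $\vartriangleright$) and complete (by Step~2), with $M = M_Z$.

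\textbf{Main obstacle.} The two delicate parts are Step~1 --- making the bookkeeping of which preference entries get deleted precise enough that simultaneously exposed rotations provably commute --- and the surjectivity claim that every stable matching arises from some $M$-avoiding Phase~2 execution. Both are exactly where the special combinatorics of stable tables (rather than of an arbitrary poset equipped with a partial involution) is genuinely needed; they are developed in full in \cite[Chapter~4]{DBLP:books/daglib/0066875}, which for our purposes we simply invoke.
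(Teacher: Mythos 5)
The paper offers no proof of \Cref{lem:poset}: it is imported verbatim as Theorem~4.3.2 of Gusfield and Irving, so there is nothing paper-internal to compare against. Your reconstruction follows the standard textbook development — confluence of rotation elimination, well-definedness of $Z \mapsto M_Z$ on closed complete sets, then injectivity and surjectivity — and is sound as a roadmap; in particular, your confluence claim in Step~1 does go through, because two distinct rotations exposed in the same stable table cannot share an oriented pair (an exposed rotation is determined by any one of its pairs via the first/last correspondence), so neither a first-choice nor a second-choice entry of $\rho$ can be among the entries deleted when $\varphi$ is eliminated. The one place where you lean on more than the definitions is the injectivity step: that ``at least the first pair of $\sigma$ survives in $T_{Z'}$'' is not immediate from \Cref{obs} but is essentially the content of the characterization underlying \Cref{lem:cond-stable-pair} (eliminating $\bar\sigma$ pins the corresponding agents to the pairs of $\sigma$); since you explicitly defer these details to the cited source, exactly as the paper does, this is acceptable rather than a gap.
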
 

\subsubsection{High-Level Idea and Useful Lemmas}   \label{rotations:facts}

The general idea behind our algorithm for \ISRForcedForbidden is to successively alter the closed and complete set of rotations $Z_1$ corresponding to the given matching $M_1$ in order to include all forced and exclude all forbidden pairs. 
At the core of our algorithm lies the observation that rotations come with certain identifiable guarantees how ``good'' an agent is matched in a resulting stable matching:
For instance, in case we eliminate an exposed rotation that makes agent $c$ the last agent in the preferences of $a$ (recall \Cref{obs}), we know that $a$ is either matched to $c$ or an agent it prefers to $c$ in the corresponding stable matching.
This allows one to identify, for some agent pair $\{a,b\}$ certain (prohibited) rotations that if included in a set of rotations guarantee that the pair cannot be part of the corresponding stable matching (those rotations guarantee that $a$ is matched better than $b$). 
Conversely, there is often also a (necessary) rotation that needs to be included in a set of rotations corresponding to a stable matching containing the pair (the rotation that ensures that $a$ is matched better than all agents to which it prefers $b$).
These necessary and prohibited rotations then allow us to control whether pairs are (not) included in the output stable matching. 
For instance, in order to ensure that all forced pairs are contained in the matching, we alter $Z_1$ to include all necessary and exclude all prohibited rotations of forced pairs (thereby changing~$Z_1$ as little as possible to ensure that all forced pairs get included). 
For forbidden pairs, the situation will be slightly more complicated, as we can either not include the necessary rotation or include one of the prohibited rotations. 

In order to identify necessary and prohibited rotations, we start by stating a useful characterization under which circumstances and agent $b$ can become the last agent in the preferences of $a$ in some stable table due to
Gusfield~\cite{DBLP:journals/siamcomp/Gusfield88a}.
For this, for an agent pair $\{a,b\}$, let $\rho^{a,b}$ be the dual rotation of the rotation containing $(a,b)$ (if there is a stable table exposing a nonsingular rotation containing $(a,b)$). 
Considering \Cref{ex:rotation}, we have e.g.\ $\rho^{m_1, w_2} = \varphi_2$.
\begin{lemma}[{\cite[Corollary~5.1]{DBLP:journals/siamcomp/Gusfield88a}}] \label{lem:uniquerotation}
  Let $\{a, b \}$ be a stable pair such that there is a stable pair~$\{a, b'\}$ with $ a $ preferring~$b$ to~$b'$.
  Then, there is a rotation including $(a,b)$. 
  Moreover, $\rho^{a,b}$ is the unique rotation whose elimination makes $b$ the last choice of~$a$.
\end{lemma}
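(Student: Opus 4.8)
The plan is to reason at the level of the preference lists that Irving's algorithm manipulates, using two ingredients. \textbf{(1)}~Preference lists only shrink during the algorithm, and by the deletion rule together with \Cref{obs2}: the first entry of agent $a$'s list changes exactly at the steps where $a$ is a left agent of the eliminated rotation, always becoming worse for $a$; while the last entry of $a$'s list changes exactly at the steps where $a$ is a right agent of the eliminated rotation (in a pair $(\cdot,a)$ preceded by a pair $(c,\cdot)$), always becoming better for $a$ and equal to that $c$. \textbf{(2)}~By the definition of a complete set and \Cref{lem:poset}, for each nonsingular rotation $\psi$ every closed complete rotation set (equivalently, the rotation set eliminated in any complete run) contains exactly one of $\psi$ and $\bar\psi$, and every singular rotation is eliminated in the derivation of \emph{every} stable matching.

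\emph{A rotation through $(a,b)$ exists.} Fix a stable matching $N$ with $\{a,b'\}\in N$ and a run of Irving's algorithm deriving $N$ (it exists by \Cref{lem:poset}). Since $\{a,b\}$ is a stable pair, $b$ is present in $a$'s list in $P_0$; since $a$ ends matched to $b'$ with $b\succ_a b'$, the pair $\{a,b\}$ is deleted at some step of the run, so by \Cref{obs2} the rotation eliminated there contains the pair $(a,b)$ or the pair $(b,a)$. If it contained $(b,a)$, then $b$ would be the last entry of $a$'s list just before that step, so every agent that $a$ ranks below $b$ — in particular $b'$ — would already have been deleted from $a$'s list, contradicting that $a$ ends matched to $b'$. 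So it contains $(a,b)$; call this rotation $\varphi$. If $\varphi$ were singular, it would be eliminated in the derivation of every stable matching, so $\{a,b\}$ would lie in no stable matching, contradicting that it is a stable pair. Hence $\varphi$ is nonsingular and $\rho^{a,b}=\bar\varphi$ is well defined.

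\emph{$\rho^{a,b}$ is the unique rotation whose elimination makes $b$ the last choice of $a$.} Write $\varphi=(a_{i_0},a_{j_0}),\dots,(a_{i_{r-1}},a_{j_{r-1}})$ with $a_{i_k}=a$, $a_{j_k}=b$; by the definition of the dual, $\bar\varphi$ contains the pair $(a_{j_{k+1}},a)$ immediately preceded by the pair $(b,a_{i_{k-1}})$, so in $\bar\varphi$ the agent $a$ is a right agent whose preceding left agent is $b$. Consider any run eliminating $\bar\varphi$ (one exists since $\bar\varphi$ is a rotation); in it $\varphi$ is not eliminated by~(2), and a short check (using that $\bar\varphi$ can only be exposed while $\{a,b\}$ is still present) shows $b$ is present in $a$'s list when $\bar\varphi$ is exposed, where by \Cref{obs2} the last entry of $a$'s list is $a_{j_{k+1}}\neq b$; so by~(1) eliminating $\bar\varphi$ makes the last entry of $a$'s list become $b$. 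For uniqueness, suppose the elimination of some rotation $\psi$ makes $b$ the last choice of $a$ (so $b$ was not $a$'s last choice beforehand), in a run $R$. Then $b$ is present in $a$'s list at $\psi$'s step, so $\{a,b\}$, and hence $\varphi$, has not been touched yet in $R$; and once $b$ is $a$'s last choice it can never again become $a$'s first choice among two or more entries, so $\varphi$ (whose elimination requires exactly that) is never eliminated in $R$ at all. By~(2), $\bar\varphi$ is then eliminated in $R$, and by the argument just given it, too, makes $b$ the last choice of $a$. Since the last entry of $a$'s list improves monotonically, it becomes $b$ at only one step of $R$, so $\psi=\bar\varphi=\rho^{a,b}$. (The same reasoning applied to two rotations through $(a,b)$ forces them to coincide, so ``the'' rotation through $(a,b)$ is unambiguous.)

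The step I expect to be most delicate is the bookkeeping packed into ingredient~(1) and into the phrase ``$\varphi$ is never eliminated in $R$'': one must carefully combine the deletion rule, \Cref{obs2}, and the monotonicity of the first and last entries of $a$'s list to see that $\{a,b\}$ cannot be deleted ``from $b$'s side'' while $b'$ still has to survive, and that $\varphi$ cannot be re-exposed once $b$ has become $a$'s last choice. Beyond that, the proof is a direct unwinding of the definitions of rotation, dual, and complete set, and uses nothing past \Cref{obs2} and \Cref{lem:poset}.
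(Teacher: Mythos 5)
First, note that the paper itself gives no proof of \Cref{lem:uniquerotation}: it is imported verbatim from Gusfield (Corollary~5.1), so your attempt has to stand on its own. The uniqueness half of your argument is essentially sound modulo standard background on Irving's algorithm (symmetric pair deletions, consistency of stable tables, and the fact that every successful run eliminates a closed and complete set): the observations that $\bar\varphi$ can only be exposed while $\{a,b\}$ is still present, that its elimination truncates $a$'s list exactly at $b$, that $\varphi$ can never be exposed once $b$ is last on $a$'s list, and that the last entry of $a$'s list improves monotonically, do combine to give uniqueness once $\varphi$ exists and is nonsingular.

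The genuine gap is in the existence part, at the sentence ``the pair $\{a,b\}$ is deleted at some step of the run, so by \Cref{obs2} the rotation eliminated there contains the pair $(a,b)$ or the pair $(b,a)$.'' \Cref{obs2} only states that the pairs of an eliminated rotation are among the pairs deleted at that step; it does not say these are the only deletions. Eliminating a rotation removes, from the list of each right agent $a_{j_s}$, \emph{every} agent ranked below $a_{i_{s-1}}$, so many pairs other than the rotation pairs disappear. In particular, $\{a,b\}$ could be deleted collaterally from $b$'s side: the eliminated rotation contains a pair $(c,b)$ with $c\neq a$ the last entry of $b$'s list, while $a$ merely lies below the threshold in $b$'s list; then neither $(a,b)$ nor $(b,a)$ belongs to the rotation, and the symmetric removal of $b$ from $a$'s list deletes nothing else there, so your $b'$-based contradiction does not apply (and this is exactly the side from which the deletion is expected in your chosen run, since by \Cref{lem:circular-prefs} $b$ prefers $N(b)$ to $a$). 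Your $b'$ observation does rule out collateral deletion from $a$'s side (truncating $a$'s list would also delete $b'$) and the case $(b,a)\in\psi$, but excluding the collateral deletion of a \emph{stable} pair from $b$'s side is precisely the substantive content of Gusfield's result; it is not a consequence of \Cref{obs2} and \Cref{lem:poset}, and your proof assumes it rather than proves it. As written, the existence claim (and hence the well-definedness of $\rho^{a,b}$, on which the uniqueness half relies) is not established.
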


\Cref{lem:uniquerotation} directly implies that in case a closed and complete subset $Z$ contains $\rho^{a, b}$, agent $a$ cannot be matched worse than $b$ in the matching corresponding to $Z$:
\begin{lemma}
\label{lem:blocking-rotation}
  Let~$\{a, b\}$ be a stable pair such that there is a stable pair~$\{a, b'\}$ with $a$ preferring $b$ to~$b'$ and let $M$ be the stable matching corresponding to a closed and complete subset~$Z$.
  If $\rho^{a, b} \in Z$, then $\{a,b\}\in M$ or $a$ prefers $M(a)$ to $b$.
\end{lemma}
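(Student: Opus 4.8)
The plan is to derive \Cref{lem:blocking-rotation} as an almost immediate consequence of \Cref{lem:uniquerotation} together with the monotonicity of rotation elimination. First I would recall the key mechanism: by \Cref{lem:uniquerotation}, since $\{a,b\}$ is a stable pair and there is a stable pair $\{a,b'\}$ with $a$ preferring $b$ to $b'$, the rotation $\rho^{a,b}$ is well-defined and is the unique rotation whose elimination makes $b$ the last choice of $a$. By \Cref{lem:poset}, the stable matching $M$ corresponds to a closed and complete subset $Z$, and there is an ordering of the rotations in $Z$ such that eliminating them one by one starting from $P_0$ yields a preference profile in which every agent $a$ retains only $M(a)$ in its preference list.

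Next I would track what happens to the preference list of $a$ along this elimination sequence. Since $\rho^{a,b} \in Z$, at the point in the sequence when $\rho^{a,b}$ is eliminated, by \Cref{lem:uniquerotation} (and the fact that rotation elimination only ever deletes agents from preference lists, hence shortens them monotonically) $b$ becomes the last agent in the preference list of $a$; in particular, at that moment and at every later moment $a$ does not strictly prefer anyone worse than $b$, i.e.\ $a$'s remaining list is contained in $\{b' : b' \succsim_a b\}$. Since $M(a)$ is the sole agent left in $a$'s list at the end, we conclude $M(a) \succsim_a b$, which is exactly the claim that $\{a,b\} \in M$ or $a$ prefers $M(a)$ to $b$.

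There is one subtlety to address: one must make sure that $\rho^{a,b}$ is actually eliminated \emph{before} the profile collapses to $M(a)$ alone — but this is automatic, since $\rho^{a,b} \in Z$ means it appears somewhere in the elimination ordering of $Z$, and after eliminating all of $Z$ the list of $a$ has length one; a one-element list trivially has $M(a)$ as its last element, and eliminating $\rho^{a,b}$ at its proper place makes $b$ the last element, so $M(a)$ cannot be strictly worse than $b$ (deletions only remove agents, they never reinsert $M(a)$ above $b$). I would phrase this carefully using \Cref{obs}, which guarantees that eliminating $\rho^{a,b}$ deletes from $a$'s list precisely the agents that $a$ ranks after $b$ (equivalently, makes $b$ last), and the observation that no subsequent elimination can make $a$'s partner someone $a$ ranks below $b$ since $b$ has already been certified as the tail.

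The main obstacle is mostly bookkeeping rather than a genuine mathematical difficulty: one has to be precise about the distinction between ``the unique rotation whose elimination makes $b$ the last choice of $a$'' (a statement about a single elimination step in \emph{some} stable table) and ``$b$ is eventually at least as good as $M(a)$'' (a statement about the full closed set $Z$), and argue that the former propagates to the latter via monotone shrinking of preference lists. No case analysis on singular versus nonsingular rotations is needed here, since \Cref{lem:uniquerotation} already supplies the existence and uniqueness of $\rho^{a,b}$ under the stated hypotheses.
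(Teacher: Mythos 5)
Your proposal is correct and follows essentially the same route as the paper: invoke \Cref{lem:uniquerotation} to conclude that eliminating $\rho^{a,b}$ makes $b$ the last choice of $a$, and then use the fact that eliminations only delete entries to conclude $M(a)$ is $b$ or better. The paper states this in two sentences; your version merely spells out the monotonicity bookkeeping in more detail.
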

\begin{proof}
 If we eliminate $\rho^{a, b}$, then by \Cref{lem:uniquerotation}, agent~$b$ will become last in the preferences of $a$. 
 Thus, $a$ needs to be matched to~$b$ or better in the resulting matching.
\end{proof}

Combining \Cref{lem:uniquerotation,lem:poset} gives a characterization of when a pair~$\{a, b\}$ is contained in a stable matching:\footnote{\Cref{lem:cond-stable-pair} has been already implicitly used in the literature, e.g., \cite[Section~5]{DBLP:journals/tcs/FleinerIM07}, but we are not aware of an explicit formulation or proof of it.}
\begin{lemma}
\label{lem:cond-stable-pair}
  Let~$\{a, b\}$ be a stable pair such that there is a stable pair~$\{a, b'\}$ with $a$ preferring $b$ to~$b'$ and let $M$ be the stable matching corresponding to a closed and complete subset~$Z$.
  Then $\{a, b\} \in M$ if and only if $\rho^{a, b} \in Z$ and for any stable partner~$b^*$ which $a$ prefers to~$b$, we have $\rho^{a, b^*} \notin Z$. 
\end{lemma}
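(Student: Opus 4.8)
The plan is to prove the biconditional in \Cref{lem:cond-stable-pair} by combining \Cref{lem:uniquerotation} with the elimination-order view of \Cref{lem:poset}. Throughout, fix a closed and complete subset $Z$ with corresponding stable matching $M$, and recall that by \Cref{lem:poset} we may pick an ordering of $Z$ so that eliminating its rotations one by one starting from $P_0$ leaves each agent with exactly its $M$-partner in its preference list.

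\textbf{Forward direction.} Assume $\{a,b\}\in M$. First I would argue $\rho^{a,b}\in Z$: since $\{a,b\}$ is a stable pair with some stable partner $b'$ of $a$ that $a$ ranks below $b$, \Cref{lem:uniquerotation} applies, so a rotation containing $(a,b)$ exists and $\rho^{a,b}$ is the unique rotation whose elimination makes $b$ the last choice of $a$. If $\rho^{a,b}\notin Z$, then in the terminal stable table for $Z$ the agent $b$ is never made last in $a$'s list by the elimination of $Z$'s rotations; but $a$ ends up with only $M(a)=b$ in its list, so $b$ \emph{is} last in $a$'s list at the end. Here I need the (standard) fact — implicit in \Cref{obs} and the structure of rotation elimination — that $b$ becomes the unique last element of $a$'s list precisely when the rotation making $b$ last, namely $\rho^{a,b}$, has been eliminated; since rotations are only ever eliminated, $\rho^{a,b}\in Z$. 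Next, for any stable partner $b^*$ that $a$ prefers to $b$: if $\rho^{a,b^*}\in Z$ then by \Cref{lem:blocking-rotation} (applied with the pair $\{a,b^*\}$, which is a stable pair and has the strictly-worse stable partner $b$ of $a$) we get $\{a,b^*\}\in M$ or $a$ prefers $M(a)$ to $b^*$; either way $a$ does not end up matched to $b$, contradicting $\{a,b\}\in M$. Hence $\rho^{a,b^*}\notin Z$.

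\textbf{Backward direction.} Assume $\rho^{a,b}\in Z$ and $\rho^{a,b^*}\notin Z$ for every stable partner $b^*$ that $a$ prefers to $b$. By \Cref{lem:blocking-rotation}, $\rho^{a,b}\in Z$ gives $\{a,b\}\in M$ or $a$ prefers $M(a)$ to $b$. Suppose for contradiction that $a$ strictly prefers $M(a)=:b^*$ to $b$. Then $b^*$ is a stable partner of $a$ ranked above $b$, and $a$ also has the strictly worse stable partner $b$; by \Cref{lem:uniquerotation} the rotation $\rho^{a,b^*}$ is well defined, and since $a$ is matched to $b^*$ in $M$, the argument of the forward direction shows $\rho^{a,b^*}\in Z$ — contradicting our hypothesis. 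Therefore $a$ does not prefer $M(a)$ to $b$, and so $\{a,b\}\in M$.

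\textbf{Main obstacle.} The delicate point is the repeatedly used claim that ``$a$ is matched to $b$ in $M$ implies $\rho^{a,b}\in Z$,'' i.e.\ translating the statement ``$b$ is the last (and only) agent in $a$'s terminal list'' into ``the unique rotation that makes $b$ last, $\rho^{a,b}$, lies in $Z$.'' This requires knowing that during any valid elimination sequence for $Z$, an agent's list shrinks monotonically and $b$ can become last in $a$'s list only via elimination of $\rho^{a,b}$ (by the uniqueness in \Cref{lem:uniquerotation}), so that once $b$ is last at the end, $\rho^{a,b}$ must have been eliminated and hence belongs to $Z$. I would make this precise by invoking \Cref{obs} together with \Cref{lem:poset}, observing that the hypothesis ``there is a stable partner $b'$ of $a$ with $b\succ_a b'$'' is exactly what guarantees $b$ is not already last in $a$'s list in $P_0$ and thus that such a rotation $\rho^{a,b}$ is genuinely needed; the rest is bookkeeping with the rotation poset.
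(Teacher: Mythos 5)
Your proof is correct and follows essentially the same route as the paper: the forward direction uses the elimination-sequence view of \Cref{lem:poset} together with the uniqueness statement of \Cref{lem:uniquerotation} to force $\rho^{a,b}\in Z$ and exclude $\rho^{a,b^*}$, and the backward direction combines \Cref{lem:blocking-rotation} with the same uniqueness argument to rule out a better partner. The only cosmetic difference is that you invoke \Cref{lem:blocking-rotation} on $\{a,b^*\}$ where the paper argues directly via the deletion of $\{a,b\}$ upon eliminating $\rho^{a,b^*}$, and you are in fact slightly more explicit than the paper about why $b$ is not already last in $a$'s list in $P_0$.
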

\begin{proof} 
 We start by proving the forward direction. 
 Let $M$ be a stable matching with $\{a,b\} \in M$ corresponding to the closed and complete subset $Z$ of rotations. 
 Successively eliminating rotations from $Z$ to arrive at matching $M$, at some point $b$ needs to become the last choice of $a$. 
 By \Cref{lem:uniquerotation} for this we need to eliminate rotation $\rho^{a,b}$, implying that $\rho^{a,b}\in Z$. 
 Moreover, note that in case we eliminate a rotation $\rho^{a, b^*}$ where $b^*$ is a stable partner of $a$ which $a$ prefers to $b$, then by \Cref{lem:uniquerotation} agent~$b^*$ becomes the last agent in the preferences of $a$. 
 As $a$ prefers $b^*$ to $b$, this implies that $b$ got deleted from the preferences of $a$, a contradiction.
 
 For the backwards direction, assume that $\rho^{a, b}\in Z$ and $\rho^{a, b^*} \notin Z$ for every stable partner~$b^*$ of~$a$ which $a$ prefers to~$b^*$.
 As $\rho^{a,b}\in Z$, \Cref{lem:blocking-rotation} implies that $a$ is matched at least as good as~$b$ in $M$. 
 Assume for the sake of contradiction that $a$ is matched to an agent $b^*$ it prefers to $b$ in $M$.
 However, for $b^*$ to become the only agent in the preferences of $a$ it in particular needs to become the last agent.
 By \Cref{lem:uniquerotation} this requires $\rho^{a,b^*}\in Z$, a contradiction.
\end{proof}
Going back to our initially described intuition, for stable pairs~$\{a,b\}$ covered by \Cref{lem:cond-stable-pair}, $\rho^{a,b}$ can be interpreted as the necessary rotation  and the rotations $\rho^{a,b^*}$ for all stable partners $b^*$ which $a$ prefers to $b$ can be interpreted as the prohibited rotations.
To give an example for this, consider again \Cref{ex:rotation}, and let us focus on the stable pair $\{m_1,w_2\}$. 
Agent $m_1$ has a stable partner~$w_1$ it prefers to~$w_2$ and a stable partner $w_3$ to which it prefers $w_2$. 
Thus, by \Cref{lem:cond-stable-pair}, for $\{m_1,w_2\}$ to be included in a stable matching, the corresponding set of rotations needs to include $\rho^{m_1,w_2}=\varphi_2$ and cannot include $\rho^{m_1,w_1}=\varphi_4$ (in fact the single stable matching containing $\{m_1,w_2\}$ corresponds to the rotation set $\{\varphi_1,\varphi_2\}$).

Finally, we conclude by observing that for every stable pair $\{a,b\}$
and each stable matching $N$ not including $\{a,b\}$ exactly one of $a$ and $b$ prefers
the other to its partner in $N$:
\begin{restatable}[{\cite[Lemma 4.3.9]{DBLP:books/daglib/0066875}}]{lemma}{circularprefs}\label{lem:circular-prefs}
	Let~$N$ be a stable matching and $e = \{a, b\} \notin N$ be a stable pair.
	Then either $N(a)\succ_{a} b$ and $a\succ_{b} N(b)$ or 
	$b\succ_{a} N(a)$ and $N(b)\succ_{b} a$.
\end{restatable}

\subsubsection{The Algorithm}  \label{alg}
Using the machinery from \Cref{rotations:facts}, we are now ready to present our algorithm. 

\begin{theorem}\label{thm:fpt-new}
	\ISRForcedForbidden can be solved in $\mathcal{O} (2^{|P\cap M_1|} \cdot n \cdot m) $ time.
\end{theorem}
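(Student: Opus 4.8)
The plan is to recast the search for the best stable matching $M_2$ respecting $Q$ and $P$ entirely in the language of closed and complete subsets of rotations, using the bijection from \Cref{lem:poset}. Let $Z_1$ be the closed complete subset corresponding to $M_1$. First I would handle trivial and degenerate cases: if $Q \cap P \neq \emptyset$, or if some forced pair is not a stable pair, or if some forced pair is ``fixed'' (contained in every stable matching) but conflicts with another forced pair or with a forbidden pair, we can either answer \textbf{no} immediately or simplify. For a forced pair $\{a,b\} \in Q$ that is fixed, it is automatically in every stable matching, so it imposes no further constraint; likewise a forbidden pair that is fixed makes the instance a \textbf{no}-instance. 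The interesting pairs are the non-fixed stable pairs, to which \Cref{lem:cond-stable-pair} applies (after observing that a non-fixed stable pair $\{a,b\}$ has a stable partner $b'$ of $a$ that $a$ prefers to $b$, or a stable partner $a'$ of $b$ that $b$ prefers to $a$ — in the latter case we apply the lemma symmetrically with the roles of $a$ and $b$ swapped).

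The core of the algorithm: by \Cref{lem:cond-stable-pair}, each forced pair $\{a,b\} \in Q$ translates into the requirement that the target set $Z_2$ must contain the \emph{necessary} rotation $\rho^{a,b}$ and must exclude every \emph{prohibited} rotation $\rho^{a,b^*}$ for stable partners $b^*$ of $a$ that $a$ prefers to $b$ (and symmetrically from $b$'s side). So from $Q$ we get a set $\mathcal{R}^+$ of rotations that must be in $Z_2$ and a set $\mathcal{R}^-$ of rotations that must be out of $Z_2$; if $\mathcal{R}^+ \cap \mathcal{R}^- \neq \emptyset$ or if $\mathcal{R}^+$ forces in a rotation whose dual is also forced in (for nonsingular rotations only one of $\varphi,\bar\varphi$ can be in a complete set), we answer \textbf{no}. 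To obtain a valid closed complete subset we must take the downward closure of $\mathcal{R}^+$ under $\vartriangleright$ and, dually, push $\mathcal{R}^-$ upward: if $\rho \notin Z_2$ then every $\sigma$ with $\rho \vartriangleright \sigma$ is also out (equivalently, every rotation forced out pulls its dual in, and we re-close). Since the rotation poset and all the $\rho^{a,b}$ can be computed in $\mathcal{O}(n \cdot m)$ time (Gusfield), and there are $\mathcal{O}(m)$ rotations, this ``closure under forcing'' computation runs in polynomial time and either fails or yields a canonical closed complete subset $Z_Q$ — and, crucially, $Z_Q$ is the \emph{unique minimal} closed complete subset respecting $Q$, but it need not be the one closest to $Z_1$. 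To get closest to $M_1$ we then want, among all closed complete subsets $Z$ with $Z_Q \subseteq Z$ (for the ``in'' constraints) and $Z$ avoiding the forced-out rotations, the one minimizing $|M_1 \triangle M(Z)|$; this is where I would invoke the standard fact that minimizing matching distance to a fixed stable matching over a sublattice of the rotation lattice reduces to a minimum-cut / min-weight-closure computation on the rotation poset restricted to the allowed rotations, solvable in $\mathcal{O}(n\cdot m)$ time. This handles the case $P \cap M_1 = \emptyset$ (in particular all of $P$ can be dealt with by simply forbidding, for each $\{a,b\} \in P$, the configuration that puts $\{a,b\}$ in — i.e. forbidding the simultaneous pattern ``$\rho^{a,b} \in Z$ and all better $\rho^{a,b^*} \notin Z$'' — which is a monotone constraint and folds into the closure/min-cut machinery whenever it does not already hold for $M_1$).

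The genuinely new ingredient, and the reason the running time carries a $2^{|P \cap M_1|}$ factor, is forbidden pairs that \emph{are} currently in $M_1$. For $\{a,b\} \in P \cap M_1$, by \Cref{lem:cond-stable-pair} we currently have $\rho^{a,b} \in Z_1$ and all better $\rho^{a,b^*} \notin Z_1$, and to exclude $\{a,b\}$ from $M_2$ we must either (i) remove $\rho^{a,b}$ from $Z_2$, or (ii) add some $\rho^{a,b^*} \in Z_2$ for a stable partner $b^*$ of $a$ that $a$ strictly prefers to $b$ (or, symmetrically, add some $\rho^{a',b}$-type rotation from $b$'s side). Naively (ii) has many choices, but here is the key simplification I would argue: by \Cref{lem:circular-prefs}, in any stable matching $N$ not containing $\{a,b\}$ either $a$ does strictly better than $b$ or $b$ does strictly better than $a$; so it suffices to branch on which of these two ``sides'' happens, and within a side the requirement ``$a$ does strictly better than $b$'' is again a single monotone rotation constraint (include the unique rotation making $a$'s partner better than $b$ — obtainable from the $\rho^{a,b^*}$ of the least-preferred-by-$a$ stable partner still better than $b$, or just ``$\rho^{a,b} \notin Z$'' for the $a$-side). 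Thus each $\{a,b\} \in P \cap M_1$ contributes exactly one binary choice, giving $2^{|P \cap M_1|}$ branches; for each branch we have a collection of monotone ``in''/``out'' constraints on rotations, we run the closure-and-consistency check (fail if inconsistent), and then the min-distance computation over the resulting sublattice in $\mathcal{O}(n \cdot m)$ time, finally returning the best $M_2$ over all branches and comparing $|M_1 \triangle M_2|$ with $k$.

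I expect the main obstacle to be two-fold. First, making rigorous the claim that every constraint arising from $Q$ and $P$ is \emph{monotone} in the rotation-lattice sense — i.e. can be written purely as ``this rotation must be in $Z$'' or ``this rotation must be out of $Z$'' — and in particular carefully handling singular rotations, fixed pairs, and the case where $\{a,b\}$ is a stable pair but neither $a$ has a worse stable partner nor $b$ does (so \Cref{lem:cond-stable-pair} must be applied on the correct side, or the pair is fixed). Second, justifying the min-distance-over-a-sublattice step: I would need to verify that the set of closed complete subsets satisfying a fixed family of ``$\rho_i \in Z$'' and ``$\sigma_j \notin Z$'' constraints, when non-empty, is itself a sublattice with a unique minimal element $Z_{\min}$ and unique maximal element $Z_{\max}$, that every rotation ``free'' to be toggled between them forms a sub-poset, and that $|M_1 \triangle M(Z)|$ decomposes additively over rotations/agents so that minimizing it is a weighted closure problem on that sub-poset — this is folklore for the classical rotation lattice but needs to be transported to the constrained setting. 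Everything else (Irving's algorithm, computing $\rho^{a,b}$, the poset, the bijection) is quoted from \Cref{lem:poset,lem:uniquerotation,lem:cond-stable-pair} and Gusfield, and the bookkeeping to fit inside $\mathcal{O}(2^{|P\cap M_1|}\cdot n\cdot m)$ is routine once the structural claims are in place.
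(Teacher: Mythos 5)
Your proposal shares the paper's skeleton: the bijection of \Cref{lem:poset}, the translation of forced pairs into ``necessary rotation in, prohibited rotations out'' via \Cref{lem:cond-stable-pair}, and the $2^{|P\cap M_1|}$ branching (justified by \Cref{lem:circular-prefs}) on which endpoint of each forbidden pair of $M_1$ must improve. But there are two genuine gaps. The first is your treatment of forbidden pairs \emph{not} in $M_1$. Excluding such a pair $\{a,b\}$ from $M_2$ is the constraint ``$\rho^{a,b}\notin Z$ \emph{or} some $\rho^{a,b^*}\in Z$ for a better stable partner $b^*$''; this is a disjunction, not a monotone in/out constraint, so it does not ``fold into'' a closure computation as a hard constraint, and branching on it is not an option because $|P\setminus M_1|$ is not bounded by the parameter. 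Nor can these pairs be ignored when the offending pattern fails for $Z_1$: integrating rotations for other pairs can make such a pair appear in the current matching. The paper resolves exactly this with an iterative step: whenever a pair $\{a,b\}\in P\setminus M_1$ shows up in $M_Z$, the endpoint $a$ preferring $b$ to $M_1(a)$ is identified, and an inductive argument shows $\rho^{a,b}\notin Z_1$ yet $\rho^{a,b}$ was added as a \emph{necessary} rotation, hence lies in every feasible $Z^*$ respecting the guesses; by \Cref{lem:blocking-rotation} $a$ must then be matched strictly better than $b$ in $M^*$, which pins down the disjunct and makes integrating $\rho^{a,b^*}$ (for the least-preferred better stable partner) itself necessary. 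Your write-up has no mechanism playing this role.

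The second gap is the optimization step you invoke as folklore: that minimizing $|M_1\triangle M(Z)|$ over the constrained family of closed and complete subsets is a min-weight closure/min-cut problem solvable in $\mathcal{O}(n\cdot m)$ time. This is not immediate here: in the roommates setting the stable matchings do not form a lattice, ``completeness'' (exactly one of $\varphi,\bar\varphi$) is an extra constraint on top of closedness, and the objective does not decompose linearly over rotations --- by \Cref{lem:cond-stable-pair}, retaining a pair of $M_1$ is a \emph{conjunction} of conditions on several rotations. Perhaps such a reduction can be engineered, but it is the crux of the argument and you leave it unproven. The paper sidesteps it entirely: starting from $Z_1$ and performing only integrations that are provably contained in every feasible $Z^*$ yields $Z_1\triangle Z\subseteq Z_1\triangle Z^*$, and \Cref{lem:cond-stable-pair} then gives $M_1\cap M^*\subseteq M_1\cap M$, so the greedily obtained matching is already closest to $M_1$ within each guess --- no min-cut, and no appeal to lattice structure, is needed. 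As it stands, your proof would need either this necessity argument or a worked-out encoding of both the disjunctive forbidden-pair constraints and the non-linear retention objective; without one of these, the correctness and the claimed $\mathcal{O}(2^{|P\cap M_1|}\cdot n\cdot m)$ bound do not follow.
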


\begin{proof}
    In the algorithm, we will guess\footnote{``Guessing'' can be interpreted as iterating over all possibilities.} for each forbidden pair~$e = \{a, b\} \in P \cap M_1$ whether $a $ or $b$ prefers its partner in the output matching to its partner in $M_1$.
    We say that a matching $M$ \emph{respects our guesses} if for each forbidden pair $e=\{a,b\}\in P\cap M_1$, $a$ prefers its partner in $M$ to $b$ if and only if we guessed that this is the case.
	We assume that there is at least one stable matching containing all forced and none of the forbidden pairs that respects our guesses, as we can reject the current guess otherwise (and this can be checked in $\mathcal{O}(m)$ time by reducing it to an instance of \textsc{Stable Roommates with Forced and Forbidden Pairs} \cite{DBLP:journals/tcs/FleinerIM07}).
	We further assume without loss of generality that $P$ only contains stable pairs (otherwise, we can delete the pair from $P$, as each stable matching will trivially not contain this pair). 
	
	In the following, when we say that we \emph{integrate a (nonsingular) rotation~$\varphi$} in a closed and complete set~$Z$ of rotations, then  we add~$\varphi$ and all rotations preceding $\varphi$ to $Z$ and delete~$\bar \varphi$ and all rotations preceded by~$\bar \varphi$ from~$Z$. 
	Before we present the algorithm, we now argue that after integrating a nonsingular rotation $\varphi$ to a closed and complete set~$Z$, the resulting set $Z'$ is still closed and complete: 
	$Z'$ is closed, as $Z$ is closed and in case we add a rotation we also add all its predecessors  and in case we delete a rotation we also delete all its successors.
	Moreover, $Z'$ is complete: 
	When integrating $\varphi$, we first add $\varphi$ and delete $\bar{\varphi}$.
	For all other rotations that we add, i.e., all rotations preceding $\varphi$, we delete their dual and for all ``dual'' rotations we delete, i.e., all rotations succeeding $\bar{\varphi}$, we add the ``primal'', as $\varphi \vartriangleright \rho$ if and only if $\bar \rho \vartriangleright \bar \varphi$ \cite[Lemma~4.3.7]{DBLP:books/daglib/0066875}.
	
	In the algorithm, we start with a closed and complete subset of rotations $Z$ and then only modify $Z$ by integrating rotations. Thus, $Z$ remains to be closed and complete over the course of the algorithm. We denote as $M_Z$ the stable matching corresponding to~$Z$ (the correspondence between matchings and sets of rotations is described in \Cref{lem:poset}). 
	
	\paragraph{The Algorithm.} Our algorithm works as follows:
	\begin{enumerate}
		\item Compute the rotation digraph which contains a vertex for each rotation and an arc from rotation $\varphi$ to rotation $\rho$ if $\varphi$ precedes $\rho$.
		Let $Z_1$ be the closed complete subset of rotations corresponding to~$M_1$, which exists and is unique by \Cref{lem:poset}.
		Set $Z:= Z_1$.
		
		\item For each forced pair $\{a,b\}\in Q$ that is not a fixed pair, assume without loss of generality that there is a stable pair~$\{a, b'\}$ with~$a $ preferring~$b$ to~$b'$ (for one of the two agents such a pair needs to exist by \Cref{lem:circular-prefs} and as $\{a,b\}$ is stable but not fixed). 
		We integrate rotation~$\rho^{a, b}$ to~$Z$.
		Further, for each stable pair~$\{a, b^*\}$ with $a$ preferring~$b^*$ to~$b$, we integrate~$\bar \rho^{a, b^*}$ to~$Z$.
		\label{item:forced}
		
		\item \label{item:guess}
		For each forbidden pair~$e = \{a, b\} \in P \cap M_1$, we guess whether $a $ or $b$ prefers its partner in the desired matching to its partner in $M_1$ (note that by \Cref{lem:circular-prefs}, exactly one of $a$ and $b$ has to do this).
		We assume without loss of generality that we guessed that $a$ prefers its partner in the desired matching to~$b$.
		Let $b^*$ be the least-preferred (by $a$) stable partner of~$a$ which $a$ prefers to~$b$ (such a partner needs to exist by our guess).
		Integrate~$\rho^{a, b^*}$ to~$Z$.
		
		\item \label{item:forbidden}
		As long as the matching~$M_Z$ contains a pair~$e = \{a, b\} \in P \setminus M_1$, assume without loss of generality that $a$ prefers~$M_Z (a)=b$ to~$M_1 (a)$ (for one of the two agents this needs to hold by \Cref{lem:circular-prefs}, as $\{a,b\}$ is a stable pair).
		Let $b^*$ be the least-preferred (by $a$) stable partner of~$a$ which $a$ prefers to~$b$. If $b^*$ exists, we integrate~$\rho^{a, b^*}$ to~$Z$; otherwise we do nothing.
		
		\item Return the matching~$M:=M_Z$.
	\end{enumerate}
	
	\paragraph{Proof of Correctness.} We start by showing that all changes made to $Z$ over the course of the algorithm are indeed necessary.
	\begin{claim}\label{claim:FPT-algo}
		Let~$M^*$ be a stable matching containing all forced pairs and no forbidden pairs which respects our guesses.
		Further, let $Z^*$ be the corresponding closed and complete subset of rotations.
		Then $Z^*$ contains all rotations added in \Cref{item:forced,item:guess,item:forbidden}.
		Moreover, the agent $b^*$ defined in \Cref{item:forbidden} always exists.
	\end{claim}
	
	\begin{claimproof}
        We start by proving the first part of the claim. 
        Note that it is sufficient to prove the statement for all integrated rotations, as all other rotations $\rho$ that we added to~$Z$ precede an integrated rotation.
        
		Each rotation integrated in \Cref{item:forced} is contained in $Z^*$ by \Cref{lem:cond-stable-pair} and as $Z^*$ needs to be complete.
		
		Next, we consider the rotations integrated in \Cref{item:guess}.
		Assume without loss of generality that we guessed that $a$ prefers~its partner in the desired matching to $b$.
		Then $a$ must be matched at least as good as~$b^*$ in the desired matching.
		Assume towards a contradiction that $\rho^{a, b^*} \notin Z^*$, implying $\bar \rho^{a, b^*} \in Z^*$.
		Rotation~$\bar \rho^{a, b^*}$ contains~$(a, b^*)$ (by definition of $\bar \rho^{a, b^*}$ and as the dual of $\bar \rho$ is again $\rho$). 
		Recall that in case $\bar \rho^{a, b^*}$ is exposed, then $a$ is the last choice of $b^*$ (\Cref{obs}) and eliminating the rotation implies deleting the pair $\{a,b^*\}$ (\Cref{obs2}). 
		Thus, after the elimination of $\bar \rho^{a, b^*}$, agent~$b^*$ prefers its last choice to~$a$.
		This implies that $b^*$ prefers~$M^* (b^*)$ to~$a$.
		By \Cref{lem:circular-prefs} and as $\{a,b^*\}$ is a stable pair not contained in $M^*$, it follows that $a $ prefers~$b^*$ to~$M ^* (a)$.
		This is a contradiction to~$ a $ preferring~$M^* (a)$ to~$b$ and the definition of~$b^*$.
		Consequently, $\rho^{a, b^*}  \in Z^*$.
		
		For \Cref{item:forbidden}, we show by induction that the claim holds after the $i$-th execution of this step.
		The statement clearly holds before the first execution of the step.
		Let~$Z^i$ be the set $Z$ before the $i$-th execution.
		Let~$\{a, b\}$ be the pair examined in this execution, with $a$ preferring~$b$ to~$M_1 (a)$.
		\Cref{lem:cond-stable-pair} implies that $\rho^{a, b} \in Z^i$, as $a$ is matched to $b$ in $M_{Z^i}$ and $M_1(a)$ is a stable partner of $a$ to which $a$ prefers $b$.
		Moreover, we need to have that $\rho^{a, b} \notin Z_1$: If $\rho^{a, b} \in Z_1$, then by \Cref{lem:blocking-rotation}, $a$ is matched at least as good as $b$ in $M_1$, contradicting our assumption that $a$ prefers $b$ to $M_1(a)$.
		By our induction hypothesis it follows that $\rho^{a,b}\in Z^*$. 
		Applying again \Cref{lem:blocking-rotation} it follows that $a$ is matched at least as good as $b$ in $M^*$. 
		As $\{a,b\}$ is a forbidden pair, we even get that $a$ prefers $M^*(a)$ to $b$. 
		The remainder of the proof is now analogous to \Cref{item:guess}.

    Concerning the second part of the claim, observe that we have established above that in each iteration of \Cref{item:forbidden}, $a$ prefers $M^*(a)$ to $b$. From this it follows that $a$ has a stable partner it prefers to $b$ and thus in particular that $b^*$ exists in each execution of \Cref{item:forbidden}.
	\end{claimproof}
	
	Recall that we have assumed that there is a stable matching~$M^*$ containing all forced and no forbidden pairs that respects our guesses.
	Thus, as all rotations integrated to~$Z$ must be contained in~$Z^*$ by \Cref{claim:FPT-algo}, there is no rotation~$\rho$ such that $\rho$ as well as $\bar \rho$ get added to~$Z$ during the algorithm (as in this case, $Z^*$ would not be a complete subset of rotations).
	\Cref{item:forced} now ensures by \Cref{lem:cond-stable-pair} that $M$ contains all forced pairs.
	\Cref{item:guess,item:forbidden} ensure that $M$ contains no forbidden pair by \Cref{lem:blocking-rotation} (note that the case that $b^*$ does not exist in \Cref{item:forbidden} never occurs as proven in \Cref{claim:FPT-algo}).
	
	Next, we show the optimality of~$M$.
	Let~$Z^*$ be the subset of the rotation poset corresponding to~$M^*$.
	By \Cref{claim:FPT-algo}, we get 
	$Z_1 \triangle Z \subseteq Z_1 \triangle Z^*$ (\Cref{claim:FPT-algo} directly implies that $Z\setminus Z_1\subseteq Z^*\setminus Z_1$ but also gives us $Z_1\setminus Z\subseteq Z_1\setminus Z^*$ as deleting a rotation corresponds to adding its dual).
	We now show that we can conclude from this that there is no pair~$e \in (M_1 \cap M^*) \setminus (M_1 \cap M)$:
	Assume towards a contradiction that there is some~$e = \{a, b\} \in (M_1 \cap M^*) \setminus (M_1 \cap M)$.
	Note that as $\{a,b\}$ is not contained in the stable matching $M$, it is not a fixed pair.
	Assume without loss of generality that there is a stable pair~$\{a, b'\}$ with $a$ preferring~$b$ to~$b'$ (for one of the two agents this needs to exist by \Cref{lem:circular-prefs}, as $\{a,b\}$ is a stable pair not contained in $M$).
	Thus, by \Cref{lem:cond-stable-pair}, $Z_1 \cap Z^*$ contain~$\rho^{a, b}$ as well as $\bar \rho^{a, b^*}$ for any stable partner~$b^*$ which $a$ prefers to~$b$.
	Since $Z_1 \cap Z^* \subseteq Z_1 \cap Z$ these rotations are also contained in $Z$  and by \Cref{lem:cond-stable-pair} it follows  that $\{a, b\}$ is also contained in $M$, a contradiction to $\{a, b\} \in (M_1 \cap M^*) \setminus (M_1 \cap M)$.
	
	\paragraph{Running Time.}
	Computing the rotation digraph can be done in $\mathcal{O}( n \cdot m)$ time~\cite{DBLP:journals/algorithmica/Feder94}.
	In \Cref{item:guess}, there are $2^{|P \cap M_1|}$ guesses.
	For each guess, any pair can be added at most once to~$M_Z$ and any rotation can be added at most once to~$Z$.
	Thus, the remaining part of \Cref{item:forced,item:guess,item:forbidden} can be done in $\mathcal{O}(m)$ total time.
	Consequently, the algorithm runs in~$\mathcal{O}\bigl((2^{|P \cap M_1|} + n) \cdot m\bigr)$~time.
\end{proof}

\subsection{(FPT-)Algorithm for Adapt Strongly SR with Ties to Forced and Forbidden Pairs}\label{fpt:strong}

In case of strong stability, we can employ a similar algorithm as for strict preferences, as this problem also admits a (slightly more complicated) rotation poset.
Although the definition of the rotations and their duals differ from the ``classical'' case without ties, they still fulfill crucial properties exploited in \Cref{thm:fpt-new}:
\begin{enumerate} 
	\item Analogous to \Cref{lem:poset}, each stable matching corresponds to a closed and complete subset of the rotation poset.
	\item \label{item:edge}
	Somewhat analogous to \Cref{lem:cond-stable-pair},
	for each stable pair~$e$ there are two rotations~$\rho$ and $\varphi$ such that $e$ may be contained in a stable matching corresponding to a complete and closed set~$Z$ of rotations if and only if~$\rho \in Z$ and $\varphi \notin Z$.
\end{enumerate}
However, for strong stability, there are now multiple possible stable matchings corresponding to the same set of rotations (as rotations here only encode the rank of the partner of an agent in the matching instead of the partner itself).
In order to solve \ISRStrongtiesForcedForbidden, it in fact suffices to compute the closed and complete set~$Z$ of rotations corresponding to an optimal solution (as subsequently we can find the stable matching corresponding to $Z$ closest to $M_1$ using a minimum-cost matching algorithm).
Turning to the constraints that forced and forbidden pairs impose on $Z$, as for the case of strict preferences, each forced pair gives rise to the constraint that one rotation is contained and one rotation is not contained in~$Z$ (due to part (2) of the above enumeration).
For forbidden pairs, however, the situation is different and more complicated:
As there may be multiple stable matchings for the same set of rotations with only some of them not containing a forbidden pair in question, a forbidden pair does not necessarily lead to a constraint on~$Z$ (even if this forbidden pair is also contained in~$M_1$).
In order to be able to solve the problem, we show as a crucial step that we can determine whether (a set of) forbidden pairs lead to constraints on~$Z$.
Overall, we can show the following: 
\begin{restatable}[\appsymb]{theorem}{SMstrong}\label{thm:fpt-strong}
	\textsc{Adapt Strongly SR with Ties to Forced and Forbidden Edges} can be solved in $\mathcal{O} \bigl( (2^{|P\cap M_1|} + m) \cdot \sqrt{n} m \log n\bigr)$ time.
\end{restatable}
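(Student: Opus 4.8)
The plan is to transfer the two-layer structure of the proof of \Cref{thm:fpt-new} to the rotation poset of strongly stable matchings, treating properties~(1)--(2) stated above the theorem as black boxes. The one genuinely new feature is that a closed and complete set~$Z$ of rotations no longer determines a single matching: eliminating all rotations of~$Z$ leaves a reduced table in which every agent~$a$ keeps a set~$S_a$ of candidate partners all tied at the top of~$a$'s remaining list, and the strongly stable matchings ``corresponding to~$Z$'' are exactly the perfect matchings of the reduced graph~$G_Z$ on the agents with edge set $\{\,\{a,b\} : b \in S_a,\ a \in S_b\,\}$. Accordingly I would run the algorithm on a \emph{rotation layer} that, as in \Cref{thm:fpt-new}, decides which~$Z$ to aim for, and a \emph{matching layer} that, with~$Z$ fixed, picks the perfect matching of~$G_Z$ containing~$Q$, avoiding~$P$, and minimizing $|M_1 \triangle M|$. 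Since all perfect matchings of~$G_Z$ have the same size, the matching layer is a constrained minimum-weight perfect matching problem (contract~$Q$-edges, delete~$P$-edges, weight~$M_1$-edges by~$0$ and the rest by~$1$); this runs in $\mathcal{O}(\sqrt n\, m \log n)$ time via a scaling weighted-matching algorithm for general graphs, and its infeasibility certifies that no solution corresponds to the current~$Z$.

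On the rotation layer, forced pairs are handled verbatim as in \Cref{thm:fpt-new}: for a non-fixed forced pair~$e$, property~(2) gives a rotation that must lie in~$Z$ and one that must be absent, so we integrate the former and the dual of the latter, and the argument of \Cref{claim:FPT-algo} (via \Cref{lem:circular-prefs} and property~(2)) shows both integrations belong to the set~$Z^{*}$ of any solution. For a forbidden pair $e = \{a,b\} \in P \cap M_1$ we branch over a binary guess recording which of~$a$,~$b$ (if either) is pushed strictly above the other at the rotation level; the ties-specific possibility that both keep their $M_1$-rank but swap partners is folded into one branch and postponed to the matching layer. In the ``push'' case we integrate the rotation enforcing that strict improvement, again proving necessity as in \Cref{claim:FPT-algo}. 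After this we hold a closed and complete~$Z$ together with the constraints ``$Q \subseteq M$, $M \cap P = \emptyset$'' for the matching layer.

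The main obstacle is the response to an infeasible matching-layer call. In the strict setting (Step~4 of \Cref{thm:fpt-new}) a forbidden pair outside~$M_1$ that survives in~$M_Z$ is a single offending edge whose endpoint can be pushed up; with ties this fails, because a set of forbidden pairs can be \emph{jointly} unavoidable for~$Z$ --- every $Q$-respecting perfect matching of~$G_Z$ meets~$P$ --- without any individual forbidden pair being forced into~$M_Z$. The heart of the appendix proof will therefore be a certificate lemma: if the constrained matching problem on~$G_Z$ is infeasible, then from a Tutte--Berge- or Hall-type barrier one extracts an agent~$a$ whose top tie~$S_a$ is over-subscribed in the constrained graph, together with a rotation~$\rho' \notin Z$ whose elimination is the only way to let~$a$ descend below~$S_a$, so that $\rho' \in Z^{*}$ for the set~$Z^{*}$ of every solution. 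Integrating~$\rho'$ moves~$Z$ strictly toward~$Z^{*}$, so the loop terminates, and $Z_1 \triangle Z \subseteq Z_1 \triangle Z^{*}$ then yields optimality by the same ``no common $M_1$-pair of~$M^{*}$ is lost'' transfer argument as in \Cref{thm:fpt-new}. For the running time, the rotation digraph and all deterministic integrations cost $\mathcal{O}(nm)$ in total, and a careful accounting shows that at most $\mathcal{O}(2^{|P \cap M_1|} + m)$ constrained minimum-weight-perfect-matching computations are performed, each in $\mathcal{O}(\sqrt n\, m \log n)$ time, which gives the claimed bound.
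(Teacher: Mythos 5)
Your high-level architecture (rotation layer plus a constrained minimum-weight perfect matching layer on the reduced graph~$G_Z$) matches the paper's, and your treatment of forced pairs and of the final optimality transfer is essentially the paper's. The gap is the piece you yourself identify as the heart of the argument: the ``certificate lemma'' extracting, from an infeasible matching-layer call, a single rotation~$\rho'\notin Z$ that must lie in \emph{every} solution's rotation set~$Z^*$. As stated this is too strong and you give no proof of it. When a minimal set~$F$ of forbidden pairs is jointly unavoidable for the current~$Z$, what is forced is only that, for each pair of~$F$, \emph{one} of its two endpoints is matched at a strictly different rank in any solution (\Cref{obs:indifference}); which endpoint improves is in general not determined by any Hall/Tutte--Berge barrier and is exactly what has to be \emph{guessed} for pairs of~$P\cap M_1$ --- this is where the $2^{|P\cap M_1|}$ factor comes from. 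Your plan instead folds the ``both endpoints keep their $M_1$-rank'' possibility into one branch and defers it to the matching layer, but that is precisely the case in which the deferral can be wrong: whether a group of forbidden pairs can be dodged inside the equivalence class or forces a rotation-level change is a nontrivial dichotomy, and the paper needs \Cref{lem:A-sim,cor:nec-prh-rot-F,lem:strictly-other,lem:Z'irrelevant} to show (i) that all pairs of a minimal blocking set share the same necessary/prohibited rotation, (ii) that avoidability is \emph{independent of the particular} closed complete set containing that necessary rotation (so it can be tested once, on an arbitrary~$Z'$, in the preprocessing Step~(2.5)), and (iii) via \Cref{lem:forced-forbidden}, that in the all-indifferent situation the correct answer is to reject the guess globally rather than to integrate anything. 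Your sketch has no analogue of (ii) or (iii), and without them the loop has no sound response when the barrier does not single out a unique rotation.

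The running-time claim inherits this gap. In your organization the matching oracle is called after every rotation integration \emph{inside each guess}, so a naive count gives up to $\mathcal{O}(2^{|P\cap M_1|}\cdot m)$ weighted-matching computations, exceeding the stated bound; the ``careful accounting'' is asserted but not supplied. The paper avoids this by front-loading all $\mathcal{O}(m)$ matching computations into the guess-independent Step~(2.5) (one per group of forbidden pairs with common necessary/prohibited rotation, legitimate only because of the independence statement of \Cref{lem:Z'irrelevant}), doing only $\mathcal{O}(m)$ rotation work per guess, and finishing with a single weighted-matching computation per guess --- which is what yields $\mathcal{O}\bigl((2^{|P\cap M_1|}+m)\sqrt{n}\,m\log n\bigr)$. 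So the proposal is a reasonable outline of the same strategy, but the ties-specific structural lemmas and the reorganization of where the matching computations happen are exactly the missing content, not routine details.
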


\section{Conclusion}
We have conducted a complete and fine-grained
analysis of minimally changing stable matchings to include forced and exclude forbidden pairs. 
As our main result, we have proven that in the absence of ties in the agent's preferences \ISRForcedForbidden
is fixed-parameter tractable with respect to the number of forbidden pairs in the given matching (and thus polynomial-time solvable if there are only forced pairs). 
At the core of this algorithm lies a clever exploitation of the rotation poset that might inspire similar approaches for related problems.
For example, one might want to adapt a matching to (dis)satisfy certain groups of agents or to improve the situation of the agent which is worst of. 
All these requirements can be encoded if we are given for each agent an upper and lower bound for how the agent is matched in the new matching $M_2$ (i.e., for each agent~$a\in A$, we are given two agents $b_a$ and $v_a$ and we require $b_a\succ_a M_2(a) \succ_a v_a$).
Slightly adapting the initialization procedure of our algorithm (by starting with incorporating rotations realizing these constraints) this problem becomes polynomial-time solvable.

To the best of our knowledge, the idea of minimally changing a given matching to incorporate external requirements has not been studied in previous works.
Thus, extending our studies of forced and forbidden pairs for stable matchings to other incremental requirements such as group fairness or diversity constraints or other matching problems such as popular matching is an interesting direction for future work.

\newpage

\appendix

\newpage

\section{Additional Material for Subsection \ref{fpt:strong}}
This section is devoted to proving the following result: 
\SMstrong*

In the following we assume that all considered stable matchings (and in particular the initial matching $M_1$) are complete matchings, as we can otherwise modify the instance accordingly in $\mathcal{O}(m)$ time.\footnote{As in the case with strict preferences, if $M_1$ is not complete, let $B$ be the set of agents unmatched in $M_1$. For each agent $b\in B$, we add an agent $b'$ to the instance which only finds $b$ acceptable and which is added at the end of the preferences of $b$ (so $b$ prefers all other agents it accepts strictly to $b
$). Then, using the Rural Hospitals Theorem for Strongly SR \cite{Manlove99}, which says that each stable matching in a \textsc{Strongly SR with Ties} instance matches the same set of agents, it follows that all stable matchings in the modified instance contain pairs $\{\{b,b'\}\mid b\in B\}$. From this the correctness follows.} 

For agents $a,a'\in A$ with $a'\in \Ac(a)$, $\rk_a(a')$ denotes the rank of $a'$ in the preferences of $a$, i.e., the number of agents $a$ strictly prefers to $a'$ plus one. 

The rest of this section is structured as follows. In \Cref{rotation-StrongSM}, we explain what rotations are in \textsc{Strongly SR with Ties} instances and how they relate to rotations in an ``equivalent'' \textsc{Strongly SM with Ties} instance. 
Subsequently, in \Cref{sec:necprostrong}, we establish that pairs have a necessary rotation (that needs to be included in a rotation set for the pair to be part of some stable matching corresponding to the set) and a prohibited rotation (that cannot be included in a rotation set for the pair to be part of some stable matching corresponding to the set). 
Afterwards, in \Cref{strong:forbidden}, we establish some useful lemmas that will help us to deal with identifying the constraints imposed by forbidden pairs.
Finally, in \Cref{strong:alg} we present our algorithm (that shares some similarities with the algorithm from \Cref{thm:fpt-new}) and prove its correctness. 

\subsection{Relation between \textsc{Strongly SM with Ties} and \textsc{Strongly SR with Ties}: Definitions and Rotations~\cite{DBLP:conf/esa/Kunysz16}} \label{rotation-StrongSM}
To understand the structure of stable matchings in \textsc{Strongly SR with Ties} instances,  it is helpful to understand \textsc{Strongly SM with Ties}.
The reason for this is that one may ``reduce'' \textsc{Strongly SR with Ties} to \textsc{Strongly SM with Ties} by replacing each agent~$a$ by a man~$a^m$ and a woman~$a^w$ (with the same preferences, i.e., if the preferences of~$a$ are $a_1 \succ a_2 \succ \dots \succ a_k$, then the preferences of~$a^m$ are $a_1^w \succ a_2^w \succ \dots \succ a_k^w$ and the preferences of~$a^w$ are $a_1^m \succ a_2^m \succ \dots \succ a_k^m$).
For an instance~$\mathcal{I}$ of \textsc{Strongly SR with Ties}, we will call the resulting instance of \textsc{Strongly SM with Ties} $\mathcal{I}^M$.
Then, stable matchings in~$\mathcal{I}$ correspond to ``symmetric'' stable matchings in~$\mathcal{I}^M$ (these are matchings where for each $a\in A$ there is an $b\in A$ such that both~$a^w$ is matched to $b^m$ and $a^m$ is matched to $b^w$).
Thus, we start by reviewing important definitions for \textsc{Strongly SM with Ties}.

	Consider an instance $(U,W,\succsim_{a\in U\cup W})$ of \textsc{Strongly SM with Ties}.
	Let $\ge^{\rk}$ be the partial order on the set of (stable) matchings with $M\ge^{\rk} N$ if and only if $\rk_m (M(m)) \ge^{\rk} \rk_m (N(m))$ for every man~$m\in U$.
	This induces an equivalence relation~$\sim^{\rk} $ on the (stable) matchings with $M \sim^{\rk} N$ if and only if $M\ge^{\rk} N$ and $N\ge^{\rk} M$.
	In the following, in case we speak of an equivalence class of matchings, we always refer to the equivalence classes induced by $\sim^{\rk}$.
	A \emph{strict successor} of a stable matching~$M$ is a stable matching~$N$ with $N >^{\rk} M$ such that there is no matching~$M'$ with $N >^{\rk} M' >^{\rk} M$.
	A \emph{rotation transforming~$M$ into a strict successor~$N$} is the set~$\rho := \{\big(a, \rk_a (M(a)), \rk_a (N(a)) \big) \mid \forall a\in U\cup W: M(a) \neq N(a)\}$. Note that we have by definition that $\rk_m (M(m))<\rk_m (N(m))$ for each man $m\in U$.
	In the following, we call a set of triples a rotation (in an instance) if there exists two matchings $M$ and $N$ with $N$ being a strict successor of $M$ and the rotation transforms $M$ into $N$.
	
	A \emph{maximal} sequence of stable matchings is a sequence~$M_1, \dots, M_k$ of stable matchings where $M_1$ is a men-optimal matching,\footnote{We call a matching men-optimal if each man is matched to a best stable partner and women-optimal if each women is matched to a best stable partner.} $M_k$ is a women-optimal matching, and for~$i \in [k-1]$, matching~$M_{i+1}$ is a strict succesor of~$M_i$.
An important property of rotations is that for any maximal sequence~$M_1, \dots, M_k$ of stable matchings all rotations of the instance appear as a rotation transforming $M_i$ to~$M_{i+1}$ for some~$i \in [k-1]$~\cite{DBLP:conf/soda/KunyszPG16}.
This allows one to define a partial order on the set of rotations by saying that a rotation~$\rho $ \emph{precedes} a rotation~$\pi$ if for every maximal sequence~$M_1, \dots, M_k$ of stable matchings, we have that $\rho$ transforms~$M_i$ to~$M_{i+1}$ and $\pi$ transforms~$M_j$ to $M_{j+1}$ with~$i < j$.
Note that for two rotation~$\rho$ and $\pi$ with $(m, x_\rho ,y_\rho) \in \rho$ and $(m, x_\pi , y_\pi) \in \pi$ with $x_\rho < x_\pi$ for some man~$m$, rotation~$\rho $ precedes~$\pi$.
For a man $m$, let $\rk^*_m$ be the rank that $m$ has for his partner in the men-optimal matchings.
We say that a set of rotations is \emph{closed} if for each included rotation all predecessors are also included. 

A crucial theorem is the following, establishing the correspondence between stable matchings in an instance of \textsc{Strongly SM} and the rotation poset:
\begin{theorem}[\cite{DBLP:conf/soda/KunyszPG16}]
\label{thm:strong-rotation-equivalence}
	There is a one-to-one-correspondence between equivalence classes of stable matchings in a \textsc{Strongly SM with Ties} instance and closed subsets of the rotation poset.
	This bijection maps each closed rotation subset~$Z$ to the class of matchings~$M$ with $\rk_m (M(m)) = \max (\{ y\mid \exists x: (m, x, y) \in Z\} \cup\{\rk^*_m\})$ for each man~$m\in M$.
\end{theorem}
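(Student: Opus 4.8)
The plan is to transfer the rotation-poset algorithm of \Cref{thm:fpt-new} to the strong-stability setting, working on closed subsets of the (more intricate) rotation poset for \textsc{Strongly SR with Ties} instead of directly on matchings. Two features distinguish this case and drive the argument: a closed rotation set now encodes only the \emph{ranks} of the agents' partners and hence corresponds to an entire equivalence class of stable matchings (by \Cref{thm:strong-rotation-equivalence}), and consequently a forbidden pair need not translate into a constraint on the rotation set.

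First I would set up the poset. As recalled in \Cref{rotation-StrongSM}, stable matchings in the instance $\mathcal{I}$ correspond to symmetric stable matchings in the doubled \textsc{Strongly SM with Ties} instance $\mathcal{I}^M$, and this yields a rotation poset for $\mathcal{I}$ whose closed subsets correspond (via \Cref{thm:strong-rotation-equivalence}) to equivalence classes of stable matchings. Next I would establish property~(2) from the overview: for each stable pair $e = \{a,b\}$ there is a necessary rotation $\rho_e$ and a prohibited rotation $\varphi_e$ such that some matching in the class of a closed set $Z$ realizes $e$ if and only if $\rho_e \in Z$ and $\varphi_e \notin Z$. These are read off from the ranks that $a$ and $b$ must attain, using the partial-order fact that $(m,x_\rho,y_\rho)\in\rho$ and $(m,x_\pi,y_\pi)\in\pi$ with $x_\rho<x_\pi$ force $\rho$ to precede $\pi$.

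With this in hand, the algorithm mirrors \Cref{thm:fpt-new}. Starting from the closed set $Z_1$ corresponding to the class of $M_1$, I integrate rotations as in \Cref{item:forced,item:guess,item:forbidden}: each forced pair contributes its necessary rotation and forbids its prohibited rotation; for each forbidden pair in $P\cap M_1$ I guess which endpoint is matched strictly better in the target (the $2^{|P\cap M_1|}$ branches) and integrate the corresponding rotation; and I iteratively repair any forbidden pair in $P\setminus M_1$ that is still realizable. The analogue of \Cref{claim:FPT-algo} shows each integration is forced and that the resulting $Z$ is minimal with respect to $Z_1 \triangle Z^*$, using the strong-stability versions of \Cref{lem:blocking-rotation} and the above necessary/prohibited characterization in place of \Cref{lem:cond-stable-pair}.

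The main obstacle is that excluding a forbidden pair $e$ is no longer equivalent to excluding the rotation $\rho_e$, since the class of $Z$ may contain both matchings that realize $e$ and matchings that avoid it. The crucial step is therefore to decide, for a fixed closed set $Z$, whether a (set of) forbidden pair(s) is \emph{unavoidable} within the class of $Z$; only then does it impose a genuine constraint forcing a further integration, and otherwise it can be discharged at the final matching-selection stage. Once the optimal closed set $Z$ is fixed and consistent, I recover the concrete matching minimizing $|M_1 \triangle M_2|$ within the class of $Z$: every agent's rank is pinned by $Z$, so the partners attainable at that rank respecting all forced and forbidden pairs form the admissible edges, and running a minimum-cost (strongly stable) matching with cost $0$ on edges of $M_1$ and $1$ elsewhere yields the desired $M_2$. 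This computation, via the weighted strongly stable matching algorithm of Kunysz~\cite{DBLP:conf/isaac/Kunysz18}, runs in $\mathcal{O}(\sqrt{n}\,m\log n)$ time and dominates the per-branch cost; over the $2^{|P\cap M_1|}$ guesses and the $\mathcal{O}(m)$ repair iterations this yields the claimed bound $\mathcal{O}\bigl((2^{|P\cap M_1|}+m)\cdot\sqrt{n}\,m\log n\bigr)$.
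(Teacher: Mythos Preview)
Your proposal does not address the stated theorem. \Cref{thm:strong-rotation-equivalence} is a structural result about \textsc{Strongly SM with Ties}---the bijection between equivalence classes of strongly stable matchings and closed subsets of the rotation poset---which the paper does not prove but simply cites from Kunysz, Paluch, and Ghosal~\cite{DBLP:conf/soda/KunyszPG16}. There is nothing to prove here beyond pointing to that reference.

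What you have written is instead a proof sketch for \Cref{thm:fpt-strong}, the FPT algorithm for \ISRStrongtiesForcedForbidden. That sketch is broadly along the right lines and does track the paper's actual argument for \Cref{thm:fpt-strong} (integrate necessary and dual-of-prohibited rotations for forced pairs, guess for forbidden pairs in $P\cap M_1$, repair remaining forbidden pairs, then select a concrete matching within the resulting equivalence class by minimum-cost matching). But it is attached to the wrong statement, and even as a sketch for \Cref{thm:fpt-strong} it glosses over the part the paper treats most carefully: determining \emph{when} a set of forbidden pairs sharing the same necessary and prohibited rotation actually forces a constraint on~$Z$ (the content of \Cref{lem:A-sim,cor:nec-prh-rot-F,lem:strictly-other,lem:forced-forbidden,lem:Z'irrelevant} and Step~(2.5) of the algorithm). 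You acknowledge this as ``the crucial step'' but do not indicate how to decide it; the paper shows that all pairs in such a minimal set~$F$ share the same necessary and prohibited rotation and that unavoidability within one compatible closed set implies unavoidability within all of them, which is what makes the test in Step~(2.5) sound.
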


Note that the partial order $\ge^{\rk}$ formerly defined for \textsc{Strongly SM with Ties} naturally extends to \textsc{Strongly SR with Ties} by saying that $M\ge^{\rk} N$ if and only if $\rk_a (M(a)) \ge^{\rk} \rk_a (N(a))$ for each agent~$a\in A$.
\Cref{thm:strong-rotation-equivalence} can be extended to a correspondence between strongly stable matchings in the \textsc{Strongly SR with Ties} instance~$\mathcal{I}$ and the rotations of~$\mathcal{I}^M$.
For this, we first need to define the dual of a rotation $\rho$:
For a rotation $\rho $ in $\mathcal{I}^M$, its \emph{dual}~$\bar \rho$ is $\{(a^w, s, f) \mid  (a^m, f, s) \in \rho\} \cup \{ (a^m, s, f) \mid (a^w, f, s) \in \rho\}$.
Kunysz~\cite[Theorem~26]{DBLP:conf/esa/Kunysz16} showed if $\mathcal{I}$ admits a strongly stable matching, then the dual of each rotation in $\mathcal{I}^M$ is again a rotation in $\mathcal{I}^M$.
A subset of rotations of $\mathcal{I}^M$ is \emph{complete} if it contains for each rotation~$\rho$ from $\mathcal{I}^M$ either $\rho$ or $\bar \rho$.
The crucial observation here is now that each closed and complete subsets of rotations in $\mathcal{I}^M$ corresponds to an equivalence class containing a symmetric stable matching in $\mathcal{I}^M$ and thus also to an equivalence class of stable matchings in $\mathcal{I}$.
This allows one to conclude the following:

\begin{theorem}[\cite{DBLP:conf/esa/Kunysz16}]\label{poset:equiv}
	There is a one-to-one correspondence between equivalence classes  of stable matchings in a \textsc{Strongly SR with Ties} instance~$\mathcal{I}$ and closed and complete subsets of the rotation poset for~$\mathcal{I}^M$. 
	This bijection maps each closed and complete rotation subset~$Z$ in $\mathcal{I}^M$ to the class of matchings~$M$ in $\mathcal{I}$ with $\rk_a (M(a)) = \max (\{ y\mid \exists x: (a^m, x, y) \in Z\} \cup\{\rk^*_{a^m}\})$ for each agent~$a\in A$.
\end{theorem}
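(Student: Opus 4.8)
The plan is to obtain \Cref{poset:equiv} by composing the correspondence between stable matchings of $\mathcal{I}$ and \emph{symmetric} stable matchings of $\mathcal{I}^M$ with the bijection of \Cref{thm:strong-rotation-equivalence} between equivalence classes of stable matchings of $\mathcal{I}^M$ and closed subsets of its rotation poset. The only real content is to single out, among all closed subsets, exactly those whose equivalence class contains a symmetric matching, and to show that these are precisely the \emph{complete} ones. Concretely, I would isolate the key lemma: an equivalence class $C$ of stable matchings in $\mathcal{I}^M$ contains a symmetric stable matching if and only if the closed set $Z$ assigned to $C$ by \Cref{thm:strong-rotation-equivalence} is complete.

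To set this up, I would introduce the \emph{transpose} involution $M \mapsto M^{T}$ on complete matchings of $\mathcal{I}^M$, where $a^m$ is matched to $b^w$ in $M^{T}$ exactly when $b^m$ is matched to $a^w$ in $M$. Because $\mathcal{I}^M$ is built symmetrically from $\mathcal{I}$ (the man- and woman-copies carry mirrored preferences), a short check shows that $M^{T}$ is stable whenever $M$ is, that transposition maps each $\sim^{\rk}$-class into a $\sim^{\rk}$-class (for stable matchings an equivalence class fixes the rank profile on both the man- and the woman-copies), and that $M$ is symmetric if and only if $M = M^{T}$. Hence transposition descends to an involution on equivalence classes, which under \Cref{thm:strong-rotation-equivalence} transports to a map $Z \mapsto Z^{\star}$ on closed sets; here I would identify $Z^{\star} = \{\bar\rho \mid \rho \notin Z\}$, cross-checking on the extremes ($\emptyset$, the men-optimal class, transposes to the full rotation set, the women-optimal class). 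Using Kunysz's result that every dual $\bar\rho$ is again a rotation together with the order-reversal property of duality (the strong-stability analogue of \cite[Lemma~4.3.7]{DBLP:books/daglib/0066875}, i.e.\ that duality reverses precedence), one verifies that $Z^{\star}$ is again closed and that the order ideal $Z$ satisfies $Z = Z^{\star}$ precisely when $Z$ contains exactly one of $\rho,\bar\rho$ for every rotation, that is, precisely when $Z$ is complete.

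With these tools the key lemma splits cleanly. If $C$ contains a symmetric matching $M$, then $M = M^{T} \in C$, so $C$ is fixed by the involution and thus $Z = Z^{\star}$ is complete. The converse is the main obstacle: given a complete (self-dual) $Z$, I must produce an \emph{actual} symmetric matching inside the corresponding class, not merely argue that the class is invariant. Here I would invoke the rank formula of \Cref{thm:strong-rotation-equivalence}: self-duality of $Z$ forces $\rk_a(M(a))$ to coincide whether $a$ is realised through $a^m$ or through $a^w$, so every matching in $C$ assigns each agent the same target rank on both of its copies. Fixing these ranks, a symmetric matching is then a choice of partners simultaneously compatible with the ties on both sides, and its existence follows from a Hall-type / augmenting-path selection within the tie classes; this is exactly the point at which Kunysz's construction is needed.

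Finally I would assemble the statement. The three correspondences compose to the claimed one-to-one correspondence: stable matchings of $\mathcal{I}$ to symmetric stable matchings $\widehat{M}$ of $\mathcal{I}^M$; the induced map on equivalence classes (the ranks match because $\rk_{a^m}(\widehat{M}(a^m)) = \rk_a(M(a))$, so $\sim^{\rk}$ on $\mathcal{I}$ is the restriction of $\sim^{\rk}$ on $\mathcal{I}^M$ to symmetric matchings); and the key lemma, which identifies the $\mathcal{I}^M$-classes containing a symmetric matching with the closed complete sets. The explicit formula $\rk_a(M(a)) = \max(\{y \mid \exists x: (a^m, x, y) \in Z\} \cup \{\rk^*_{a^m}\})$ is then just the man-copy instance of the formula in \Cref{thm:strong-rotation-equivalence} read off on $a^m$, which by self-duality of $Z$ governs $a^w$ as well. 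The hard step throughout is the converse of the key lemma, i.e.\ converting the abstract self-duality of a closed complete set into a concrete symmetric matching within its equivalence class.
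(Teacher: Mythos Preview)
The paper does not give its own proof of this theorem: it is stated with the attribution \cite{DBLP:conf/esa/Kunysz16} and no argument beyond the single sentence preceding it, namely that ``each closed and complete subset of rotations in $\mathcal{I}^M$ corresponds to an equivalence class containing a symmetric stable matching in $\mathcal{I}^M$ and thus also to an equivalence class of stable matchings in $\mathcal{I}$.'' Your proposal is therefore not competing with a proof in the paper but rather fleshing out this one-line sketch, and at that level your decomposition (stable matchings of $\mathcal{I}$ $\leftrightarrow$ symmetric stable matchings of $\mathcal{I}^M$ $\leftrightarrow$ equivalence classes in $\mathcal{I}^M$ $\leftrightarrow$ closed subsets, restricted to complete ones) is exactly the intended route.

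Your elaboration is sound in outline. The transpose involution and the identification $Z^{\star} = \{\bar\rho \mid \rho \notin Z\}$ are the right tools; the fact that $\sim^{\rk}$ (defined via men's ranks) is preserved under transposition does rely on \Cref{obs:indifference} to pin down the women's ranks as well, which you note. You are also right that the substantive step is the converse direction of your key lemma: self-duality of $Z$ gives only that the class is invariant under transposition, and extracting an actual fixed point (a symmetric matching) from this is nontrivial. Your appeal to ``Kunysz's construction'' here is appropriate but is also where all the work lies; in Kunysz's paper this is done not by a generic Hall-type argument within tie classes but by a more hands-on analysis of how duality acts on the rotation structure, so if you were writing this out in full you would need to either reproduce that or give a self-contained selection argument. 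As a proof sketch pointing to the cited source, though, your proposal is accurate and matches what the paper intends.
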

Accordingly, each closed and complete subset $Z$ of rotations (in $\mathcal{I}^M$) corresponds to a set of matchings $\mathcal{M}_Z$ in $\mathcal{I}$. 
For each agent $a\in A$ it holds that it has the same rank for its partner in all matchings from $\mathcal{M}_Z$. We refer to this as  $\rk_a(Z)$.
Let $G^Z$ be the graph containing the agent set as vertices with an edge between two agents $\{a,a'\}$ if and only if $\rk_a(a')=\rk_a(Z)$ and $\rk_{a'}(a)=\rk_{a'}(Z)$. 
Then, we have that the matchings from $\mathcal{M}_Z$ are exactly the complete matchings in $G^Z$.
Instead of $G^Z$ we sometimes also write $G^{\mathcal{M}_Z}$.
\newcommand{\prh}{\text{prh}}
\newcommand{\nec}{\text{nec}}

\subsection{Necessary and Prohibited Rotations}\label{sec:necprostrong}
In this subsection, we show that somewhat analogous to \Cref{lem:cond-stable-pair},
	for each stable pair~$e$ there are two rotations~$\varphi^e_{\nec}$ and $\varphi^e_{\prh}$ such that $e$ may be contained in a stable matching corresponding to a complete and closed set~$Z$ of rotations if and only if~$\varphi^e_{\nec} \in Z$ and $\varphi^e_{\prh} \notin Z$. This makes it easy to include forced pairs in a matching.
The following observation is useful for proving this: 
\begin{observation}[{\cite[Lemma~2]{DBLP:journals/dam/Manlove02}}]
\label{obs:indifference}
  Let~$M$ and $M'$ be two stable matchings in an instance of \textsc{Strongly SM with Ties}.
  If agent~$a$ is indifferent between $M(a) $ and $M' (a)$,
  then $M(a)$ is indifferent between~$a$ and~$M' (M (a))$.
  If agent~$a$ strictly prefers~$M(a) $ to~$M' (a)$,
  then $M(a)$ strictly prefers~$M' (M (a))$ to~$a$.
\end{observation}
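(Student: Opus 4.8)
The statement splits into a \emph{strict part} (if $a$ strictly prefers $M(a)$ to $M'(a)$, then $M(a)$ strictly prefers $M'(M(a))$ to $a$) and an \emph{indifference part} (if $M(a)\sim_a M'(a)$, then $M(a)$ is indifferent between $a$ and $M'(M(a))$). The key structural fact I would use throughout is the shape of the strong-stability blocking condition: a pair $\{x,y\}$ strongly blocks a matching $N$ exactly when one of the two agents strictly prefers the other to its $N$-partner while the other weakly prefers it. By the Rural Hospitals Theorem for \textsc{Strongly SR with Ties}~\cite{Manlove99} I may assume $M$ and $M'$ are complete, so $M'(M(a))$ is always well defined; since the blocking condition is symmetric between the two sides, I assume without loss of generality that $a$ is a man.

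First I would prove the \textbf{strict part} by a direct blocking argument against $M'$. Write $w:=M(a)$ and suppose for contradiction that $w$ does \emph{not} strictly prefer $M'(w)$ to $a$, i.e.\ $w$ weakly prefers $a$ to its $M'$-partner. Since by hypothesis $a$ strictly prefers $w=M(a)$ to $M'(a)$, the acceptable pair $\{a,w\}$ has $a$ strictly preferring $w$ to its $M'$-partner and $w$ weakly preferring $a$ to its $M'$-partner; hence $\{a,w\}$ strongly blocks $M'$, contradicting stability of $M'$. Therefore $w$ strictly prefers $M'(w)=M'(M(a))$ to $a$, as claimed.

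Next I would \textbf{bootstrap a bijection} from the strict part. Applying the strict part to a man $m$ (with respect to $M,M'$) shows that $M$ sends every man with $M(m)\succ_m M'(m)$ to a woman $w=M(m)$ with $M'(w)\succ_w M(w)$; applying it to a woman $w$ with the roles of $M$ and $M'$ swapped shows that $M'$ sends every woman with $M'(w)\succ_w M(w)$ back to a man $m=M'(w)$ with $M(m)\succ_m M'(m)$. These are two injections between the finite sets $\{m:M(m)\succ_m M'(m)\}$ and $\{w:M'(w)\succ_w M(w)\}$, so the sets have equal size and $M$ restricts to a \emph{bijection} from the former onto the latter. With this in hand the \textbf{indifference part} follows: let $w:=M(a)$ where $M(a)\sim_a M'(a)$. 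On one hand $w$ cannot strictly prefer $m=M(w)$ to $M'(w)$, for then the strict part applied to $w$ would force $M(w)=a$ to strictly prefer $M'(a)$ to $w=M(a)$, contradicting $M(a)\sim_a M'(a)$. On the other hand $w$ cannot strictly prefer $M'(w)$ to $m$: that would place $w$ in the codomain of the bijection, so $w=M(m')$ for some man $m'$ with $M(m')\succ_{m'}M'(m')$, and injectivity of $M$ together with $M(a)=w$ would give $m'=a$, again contradicting $M(a)\sim_a M'(a)$. Hence $w$ is indifferent between $a$ and $M'(w)=M'(M(a))$.

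The genuinely delicate step is the second half of the indifference part: unlike every other case, it cannot be settled by exhibiting a blocking pair, because ruling out $w$ strictly preferring $M'(w)$ to $m$ via a block of $M$ would require knowing the preferences of the third agent $M'(w)$, about which the hypotheses say nothing. This is precisely why the global counting argument is needed, and why it is essential to establish the clean strict part first: it is the strict part, applied in both orientations, that produces the bijection forcing an indifferent man to be matched (in $M$) to an indifferent woman.
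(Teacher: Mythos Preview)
Your proof is correct. Note, however, that the paper does not give its own proof of this observation: it is quoted verbatim as Lemma~2 of Manlove~\cite{DBLP:journals/dam/Manlove02} and used as a black box. So there is no ``paper's approach'' to compare against here.

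For what it is worth, your argument is essentially the standard one. The strict part is the immediate blocking argument, and the indifference part is exactly where the asymmetry of the strong-stability blocking definition bites: one direction (ruling out $w$ strictly preferring $a$) follows by applying the strict part contrapositively, while the other direction (ruling out $w$ strictly preferring $M'(w)$) genuinely needs the global counting/bijection argument you give, since a local blocking argument would require information about the third agent $M'(w)$. Manlove's original proof proceeds along the same lines.

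One cosmetic remark: you invoke the Rural Hospitals Theorem for \textsc{Strongly SR with Ties}, but the statement is for \textsc{Strongly SM with Ties}; the theorem you want is the bipartite version (also in~\cite{Manlove99}), and in fact the paper's preprocessing already ensures all stable matchings under consideration are complete, so $M'(M(a))$ is always defined.
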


This together with the observation that for each man, there is at most one rotation moving its last choice to a given woman~\cite{DBLP:conf/soda/KunyszPG16} implies the existence of necessary and prohibited rotations:

\begin{lemma}\label{le:nec-phr}
 Let~$\mathcal{I}$ be an instance of \textsc{Strongly SM with Ties} and $e = \{m, w\}$ with~$m\in U$ and $w\in W$ be a stable pair that is not fixed.
 Let $\varphi^e_{\nec}$ be the rotation containing~$(m, x_{\nec}, y_\nec)$ with $y_\nec=\rk_m (w)\neq x_\nec$ (or $\varphi^e_{\nec}:=\emptyset$ if $w$ is the most preferred stable partner of $m$), and let $\varphi^e_{\prh}$ be the rotation containing $(m, x_{\prh}, y_\prh)$ with $x_\prh = \rk_m (w) \neq y_{\prh}$ (set $\varphi^e_{\prh}:=\emptyset$ if $w$ is the least preferred stable partner of $m$).
 
 For each closed set~$Z$ of rotations, there exists a stable matching~$M$ corresponding to~$Z$ with~$e \in M$ if and only if $\varphi^e_{\nec} \in Z$ and $\varphi^e_{\prh} \notin Z$. 
\end{lemma}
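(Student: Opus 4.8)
The plan is to reduce the statement to the behaviour of a single quantity: the rank $\rk_m(M(m))$, which by \Cref{thm:strong-rotation-equivalence} is the same for every stable matching $M$ in the equivalence class corresponding to the closed set $Z$; write $r_Z$ for this common value, so that $r_Z=\max(\{y\mid\exists x\colon(m,x,y)\in Z\}\cup\{\rk^*_m\})$. The structural heart, in analogy with \Cref{lem:cond-stable-pair}, is that the rotations that move $m$ form a single chain of the rotation poset. This follows from the two facts recalled just before the lemma: any two rotations with triples $(m,x,y)$ and $(m,x',y')$ are comparable (if $x<x'$ then the first precedes the second), and for each man there is at most one rotation bringing him to a given rank. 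Hence we may list the rotations moving $m$ as $\sigma_1,\dots,\sigma_t$ with $\sigma_i$ preceding $\sigma_{i+1}$ and $(m,x_i,y_i)\in\sigma_i$; considering a maximal sequence of stable matchings shows $x_1=\rk^*_m$, $y_i=x_{i+1}$, and $y_t$ equal to the rank of $m$'s worst stable partner. Setting $r_0:=\rk^*_m$ and $r_i:=y_i$ we get $\rk^*_m=r_0<r_1<\dots<r_t$, and for closed $Z$ closedness forces $Z\cap\{\sigma_1,\dots,\sigma_t\}=\{\sigma_1,\dots,\sigma_j\}$ for some $j$, whence $r_Z=r_j$.

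Now $\rk_m(w)$ is one of the $r_i$ --- say $\rk_m(w)=r_p$ --- because $\{m,w\}$ is a stable pair. The definitions then identify $\varphi^e_{\nec}=\sigma_p$ (with $\varphi^e_{\nec}=\emptyset$ precisely when $p=0$) and $\varphi^e_{\prh}=\sigma_{p+1}$ (with $\varphi^e_{\prh}=\emptyset$ precisely when $p=t$). Using closedness once more (and reading ``$\emptyset\in Z$'' and ``$\emptyset\notin Z$'' as vacuously true), $\varphi^e_{\nec}\in Z$ is equivalent to $j\ge p$ and $\varphi^e_{\prh}\notin Z$ to $j\le p$, so the pair of conditions in the lemma is equivalent to $r_Z=\rk_m(w)$. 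It therefore remains to prove: for the stable pair $\{m,w\}$, some stable matching in the class of $Z$ contains $\{m,w\}$ if and only if $r_Z=\rk_m(w)$. The forward implication is immediate, since any $M$ in the class of $Z$ with $\{m,w\}\in M$ has $r_Z=\rk_m(M(m))=\rk_m(w)$.

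For the converse --- which I expect to be the crux --- assume $r_Z=\rk_m(w)$. Pick any stable matching $M$ in the (nonempty) class of $Z$, and, using that $\{m,w\}$ is a stable pair, a stable matching $N$ with $\{m,w\}\in N$; we may assume $\{m,w\}\notin M$. Since $M$ and $N$ are complete, $M\triangle N$ is a disjoint union of alternating cycles; let $C$ be the one containing the $N$-edge $\{m,w\}$. For $m$ we have $\rk_m(M(m))=r_Z=\rk_m(w)=\rk_m(N(m))$, so $m$ is indifferent between its two $C$-neighbours $M(m)$ and $N(m)$. I would propagate this around $C$: whenever an agent $a$ on $C$ is indifferent between $M(a)$ and $N(a)$, applying \Cref{obs:indifference} at $a$ to the two stable matchings shows that the next agent along $C$ (namely $M(a)$, respectively $N(a)$) is indifferent between $a$ and its partner in the other matching, i.e.\ between its own $M$- and $N$-partners. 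Thus every agent on $C$ is indifferent between its $M$- and $N$-partners, so $M':=M\triangle C$ satisfies $\rk_a(M'(a))=\rk_a(M(a))$ for all $a\in A$. A short argument then shows that any complete matching with the same rank vector as the strongly stable matching $M$ is itself strongly stable (a pair blocking $M'$ would block $M$), so $M'$ lies in the class of $Z$, and $M'(m)=w$ as required.

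The bookkeeping in the first step --- extracting the chain structure, the identities $x_1=\rk^*_m$ and $y_i=x_{i+1}$, and the degenerate cases $p=0$ and $p=t$ --- is where I expect most of the care to be needed, although each piece follows routinely from the cited rotation facts. The genuinely new idea is the cycle-propagation via \Cref{obs:indifference} in the converse direction; given that observation it is short, and it simultaneously shows that the ``women's side'' condition ($\rk_w$ being correct and $\{m,w\}$ extending to a perfect matching of the tight graph) is automatically satisfied, so it need not be treated separately.
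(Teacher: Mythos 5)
Your proposal is correct and follows essentially the same route as the paper: the rotation conditions are translated (via \Cref{thm:strong-rotation-equivalence}) into the statement that every matching in the class of $Z$ matches $m$ at rank exactly $\rk_m(w)$, and the existence direction is then settled by taking the symmetric-difference cycle through $\{m,w\}$ with a stable matching containing it and propagating indifference around that cycle via \Cref{obs:indifference}, which is precisely the paper's argument. Your explicit chain decomposition of the rotations moving $m$ is just a more detailed bookkeeping of what the paper reads off directly from the rank formula, and your blocking-pair remark makes explicit a step the paper leaves implicit; neither changes the substance.
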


\begin{proof}
 By the definition of rotations (and as the set of rotations is independent of the chosen maximal sequence of stable matchings~\cite[Section 3.2]{DBLP:conf/soda/KunyszPG16}), there is at most one rotation containing~$\varphi^e_{\nec} = (m, x_{\nec}, y_\nec)$ with $y_\nec=\rk_m (w)\neq x_\nec$ and at most one rotation~$\varphi^e_{\prh}$ containing $(m, x_{\prh}, y_\prh)$ with $x_\prh = \rk_m (w) \neq y_{\prh}$.
 If $w $ is the best stable partner of~$m$, then $\varphi^e_\nec$ does not exist.
 If $w$ is the worst stable partner of~$m$, then $\varphi^e_\prh$ does not exist.
 In the following, we assume that $w$ is neither the best nor the worst stable partner of~$m$; the other two cases are analogous.
 
 Let~$Z$ be a closed set of rotations with~$\varphi^e_{\nec} \in Z$ and $\varphi^e_{\prh}\notin Z$.
 \Cref{thm:strong-rotation-equivalence} tells us how~$m$ may be matched:
 Because~$\varphi^e_{\nec} \in Z$, man~$m$ cannot be matched better than~$w$.
 Because~$\varphi^e_{\prh} \notin Z$ (and as thereby $Z$ also cannot contain any successors of $\varphi^e_{\prh}$), man~$m$ cannot be matched worse than~$w$.
 Thus, $m$ is indifferent between~$w$ and its partner in any matching corresponding to~$Z$.
 Let~$M^e$ be a stable matching containing~$e$ and let~$M^Z$ be a stable matching corresponding to~$Z$.
 By the Rural Hospitals Theorem~\cite{Manlove99} each strongly stable matching matches the same set of agents.
 Thus, $M^e \oplus M^Z$ consists of cycles.
 Let~$C$ be the cycle in $M^e \oplus M^Z$ containing~$e$.
 Since $m$ is indifferent between its neighbors in~$C$, from \Cref{obs:indifference} it follows that all agents in the cycle are indifferent between their neighbors in~$C$ .
 Consequently, $M^Z \oplus C$ is a strongly stable matching containing~$e$ which corresponds to $Z$ (as for no agent the rank of his partner changes when moving from $M^Z$ to $M^Z \oplus C$).
 
 Let~$Z$ be a closed set of rotations with~$\varphi^e_{\nec} \notin Z$ or $\varphi^e_\prh \in Z$.
 If $\varphi^e_{\nec} \notin Z$, then any stable matching~$M^Z$ corresponding to~$Z$ must match~$m$ to an agent of rank at most~$x_{\nec}$ by \Cref{thm:strong-rotation-equivalence} and the fact that no rotation succeeding~$\varphi^e_\nec$ is contained in~$Z$.
 Since $x_\nec < \rk_m (w)$, it follows that $e \notin M^Z$.
 If $\varphi^e_{\prh} \in Z$, then any stable matching~$M^Z$ corresponding to~$Z$ must match~$m$ to an agent of rank at least~$y_{\prh}$ by \Cref{thm:strong-rotation-equivalence}.
 Since $y_\prh > \rk_m (w)$, it follows that $e \notin M^Z$.
\end{proof}

We get the following easy consequence.

 \begin{lemma}\label{lem:same-nec-prh}
  Let $F$ be a maximal set of edges having the same necessary rotation~$\varphi_\nec$ and prohibited rotation~$\varphi_\prh$ and let~$X$ be the set of all endpoints of~$F$.
  Then any stable matching corresponding to a closed and complete subset~$Z$ containing $\varphi_\nec$ and $\bar \varphi_\prh$ contains a perfect matching on~$X$.
  Further, for each edge~$e$ of this perfect matching, we have that $\varphi_\nec^e = \varphi_\nec$ and $\varphi^e_\prh = \varphi_\prh$.
 \end{lemma}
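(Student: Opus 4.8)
The plan is to use Lemma~\ref{le:nec-phr} together with Theorems~\ref{thm:strong-rotation-equivalence} and~\ref{poset:equiv} to determine exactly how every endpoint of $F$ must be matched in any $M\in\mathcal{M}_Z$, and then to read off the conclusion from the maximality of $F$. By maximality we may assume that $F$ is the set of \emph{all} stable pairs $e$ with $\varphi^e_\nec=\varphi_\nec$ and $\varphi^e_\prh=\varphi_\prh$. The observation that drives everything is that, by the definitions in Lemma~\ref{le:nec-phr}, the rotations $\varphi^e_\nec$ and $\varphi^e_\prh$ of a stable pair $e=\{m,w\}$ depend only on the man $m$ and the rank $\rk_m(w)$; hence if $\{m,w\}$ and $\{m,w'\}$ are stable pairs with $\rk_m(w)=\rk_m(w')$, then $\varphi^{\{m,w\}}_\nec=\varphi^{\{m,w'\}}_\nec$ and $\varphi^{\{m,w\}}_\prh=\varphi^{\{m,w'\}}_\prh$. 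So to each man $m$ appearing as an endpoint of $F$ there is attached a single rank $y_m$ with $(m,\cdot,y_m)\in\varphi_\nec$ (or $y_m=\rk^*_m$ if $\varphi_\nec=\emptyset$) and $(m,y_m,\cdot)\in\varphi_\prh$ (or $y_m$ the rank of the worst stable partner of $m$ if $\varphi_\prh=\emptyset$), and every stable pair $\{m,w\}$ with $\rk_m(w)=y_m$ belongs to $F$.

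The main step is: for every $M\in\mathcal{M}_Z$, each man-endpoint $m$ of $F$ has $M(m)$ among the woman-endpoints of $F$. Fix such an $m$ and a pair $\{m,w_0\}\in F$, so $\rk_m(w_0)=y_m$. Since $\varphi_\nec\in Z$, Theorem~\ref{thm:strong-rotation-equivalence} gives $\rk_m(M(m))\ge y_m$. Conversely, $\varphi_\prh\notin Z$ (as $Z$ is complete and contains $\bar\varphi_\prh$); using that the rotations containing a triple about $m$ are linearly ordered by precedence according to the rank they assign to $m$, and that $\varphi_\prh$ is the rotation moving $m$ out of rank $y_m$, every rotation assigning $m$ a rank larger than $y_m$ is preceded by $\varphi_\prh$ and hence, by closedness of $Z$, is not in $Z$; so Theorem~\ref{thm:strong-rotation-equivalence} also gives $\rk_m(M(m))\le y_m$. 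Thus $\rk_m(M(m))=y_m$, so $\{m,M(m)\}$ is a stable pair with $\rk_m(M(m))=y_m$ and therefore, by the rank-dependence above, has necessary rotation $\varphi_\nec$ and prohibited rotation $\varphi_\prh$; hence $\{m,M(m)\}\in F$ and $M(m)$ is a woman-endpoint of $F$.

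The symmetric statement — each woman-endpoint $w$ of $F$ has $M(w)$ among the man-endpoints of $F$ — follows by exchanging the two sides of the instance $\mathcal{I}^M$. Concretely, $w$ has an associated necessary and prohibited rotation depending only on $w$ and the rank $\rk_w(m_0)$ of any $F$-neighbour $m_0$ of $w$ (these are obtained from the man-side rotations via the involution $\varphi\mapsto\bar\varphi$, which again maps rotations to rotations and preserves closed--complete sets up to reversing the order); and by Lemma~\ref{le:nec-phr} applied to $\{m_0,w\}$ there is a matching in $\mathcal{M}_Z$ in which $w$ is matched to $m_0$, so by Theorem~\ref{poset:equiv} the rank $\rk_w(M(w))$ is the same for all $M\in\mathcal{M}_Z$ and equals $\rk_w(m_0)$ for every $F$-neighbour $m_0$ of $w$. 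Running the argument of the previous paragraph with $w$ in place of $m$ then shows $\{M(w),w\}\in F$. Combining the two cases, $M$ matches every vertex of $X$ to another vertex of $X$, so $M\cap\binom{X}{2}$ is a perfect matching on $X$; every one of its edges has its man-endpoint in $X$ and hence, by the main step, lies in $F$, which yields the ``Further'' part.

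I expect the subtle point to be precisely the woman-side. The necessary and prohibited rotations in Lemma~\ref{le:nec-phr} are defined through the man-endpoint only, and the presence of ties means a woman can have several rank-equivalent stable partners that are reached by different ``primal'' rotations; so one has to make the mirror version of Lemma~\ref{le:nec-phr} and of the ``depends only on the agent and the rank'' property precise using the dual rotations of $\mathcal{I}^M$, and check that, for a roommates pair, the constraint imposed from the man's side and the one imposed from the woman's side coincide. Once this bookkeeping is in place, the rest is the short rank-chasing carried out above.
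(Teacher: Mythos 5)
You follow essentially the same route as the paper: pin the rank of every endpoint of $F$ inside the class via \Cref{thm:strong-rotation-equivalence} (the paper packages this as a proof by contradiction starting from an edge of $M$ leaving $X$), then use that the necessary and prohibited rotations of a stable pair depend only on one endpoint and a rank, and invoke the maximality of $F$. Your man-side half is complete and is exactly the paper's argument.

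The step you defer as ``bookkeeping'', however, is a genuine gap rather than routine, and it is where the substance of the lemma lies. What you have proved shows only that every man-endpoint of $F$ is matched into $X$; it does not show that every woman-endpoint is, and without the woman-side statement the conclusion can actually fail: nothing in the man-side argument rules out a maximal $F$ consisting of two edges $\{m,w_1\},\{m,w_2\}$ sharing the man $m$ (possible because of ties), in which case $|X|=3$ and no perfect matching on $X$ can exist. The only thing excluding such configurations is precisely the claim you leave open, namely that two stable pairs sharing a woman $w$ and giving $w$ the same rank have the same necessary and prohibited rotation, even though these rotations are defined through the (distinct) man-endpoints. This claim is true and provable with the paper's toolkit --- the cycle-exchange argument from the proof of \Cref{le:nec-phr} (propagating \Cref{obs:indifference} around the alternating cycle) shows that a class contains a matching through a stable pair $\{m,w\}$ if and only if $w$'s class rank equals $\rk_w(m)$, and comparing suitable down-closed subsets of the rotation poset then forces the necessary (and prohibited) rotations of the two pairs to coincide --- but your write-up asserts the mirror property instead of establishing it, and your suggestion that it drops out of the involution $\varphi\mapsto\bar\varphi$ is not immediate, since the duals are defined through the roommates splitting $a\mapsto a^m,a^w$ and relate $\{a^m,b^w\}$ to the other pair $\{b^m,a^w\}$, not to a pair sharing the same woman. (The paper's own proof dismisses this case with ``symmetric by swapping the roles of men and women'', so your instinct about where the delicacy sits is right; you still need to actually supply the argument.)
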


 \begin{proof}
  Assume towards a contradiction that there is some matching~$M$ corresponding to~$Z$ which does not induce a perfect matching on~$X$.
  Then $M$ contains an edge~$e= \{a, b\}$ with~$a \in X $ and $b \notin X$.
  We assume that $a \in U$ and $b  \in W$;
  the case that $ b \in U$ and $ a\in W$ is symmetric by swapping the roles of men and women.
  Due to the maximality of~$F$, there is some edge~$f \in F$ incident to~$a$.
  Because $a$ is indifferent between~$e$ and~$f$ (as from $\varphi_\nec,\bar \varphi_\prh\in Z$ we get that $a$ is indifferent between $b$ and its partner in $M$), it follows that $e$ has the same necessary and prohibited rotation as~$f$, a contradiction to the definition of~$X$.
 \end{proof}
 
We extend the notions of necessary and \forrot\ rotations to \textsc{Strongly SR with Ties} as follows:
The rotation poset of a \textsc{Strongly SR with Ties} instance~$\mathcal{I}$ is the rotation poset of the \textsc{Strongly SM with Ties} instance~$\mathcal{I}^M$.
Each pair~$\{a, b\}$ from~$\mathcal{I}$ corresponds to two pairs~$\{a^m, b^w\}$ and $\{b^m , a^w\}$ from~$\mathcal{I}^M$.
Consequently, there are two different natural ways to define the necessary and \forrot\ rotations for~$e$:
We may pick the necessary and \forrot\ rotation for~$\{a^m, b^w\}$ or for $\{b^m, a^w\}$.
Thus, we define~$\varphi_\nec^{e, a} := \varphi_\nec^{\{a^m, b^w\}}$, $\varphi_\prh^{e, a} := \varphi_\prh^{\{a^m, b^w\}}$, $\varphi_\nec^{e, b} := \varphi_\nec^{\{b^m, a^w\}}$, and $\varphi_\prh^{e, b} := \varphi_\prh^{\{b^m, a^w\}}$.
Note that a closed and complete set~$Z$ of rotations contains $\varphi_\nec^{e, a}$ and not $\varphi_\prh^{e,a}$ if and only if $Z$ contains $\varphi_\nec^{e,b}$ and not $\varphi^{e, b}_\prh$ since the necessary rotation for~$\{a^m, b^w\}$ is the dual rotation of the \forrot\ rotation for~$\{b^m, a^w\}$.
Consequently, we say that two pairs~$e=\{a, b\}$ and $e'= \{a', b'\}$ have the same necessary and \forrot\ if either~$\varphi_\nec^{e, a} = \varphi_\nec^{e', a'}$ and $\varphi_\prh^{e, a} = \varphi_\prh^{e', a'}$ or~$\varphi_\nec^{e, a} = \varphi_\nec^{e', b'}$ and $\varphi_\prh^{e, a} = \varphi_\prh^{e', b'}$.
The reason why we distinguish between the necessary and \forrot\ with respect to~$\{a^m, b^w\}$ or $\{b^m, a^w\}$ is that this allows us to control whether $a$ or~$b$ improves when integrating~$\bar \varphi_\nec^{e, \cdot}$.
\subsection{Dealing with Forbidden Pairs} \label{strong:forbidden}
Before we turn to describing our algorithm, we describe how to treat forbidden pairs. 
Recall that as there may be multiple stable matchings for the same set of rotations with only some of them containing a forbidden pair in question, a forbidden pair does not necessarily lead to a constraint on~$Z$, even if this forbidden pair is also contained in~$M_1$.
In order to be able to solve the problem, in the following, we show as a crucial step that we can determine whether (a set of) forbidden pairs lead to constraints on~$Z$.

Using \Cref{obs:indifference}, we start by giving some structure to each stable matching in some equivalence class:
\begin{lemma}
\label{lem:A-sim}
 Let $\mathcal{M}$, $\mathcal{M}'$ be two equivalence classes such that there is an agent~$a^* \in A$ with~$M(a^*) \sim_{a^*} M'(a^*)$ for some~$M \in \mathcal{M}$ and $M' \in \mathcal{M}'$.
 Let~$A^{\sim} := \{a \in A \mid M(a) \sim_a M' (a)\}$.
 Then,
 \begin{enumerate}
  \item every stable matching in~$\mathcal{M}$ and $\mathcal{M}'$ induces a complete matching on~$A^\sim$.
  \item if $F$ is a minimal set of such that every matching from $\mathcal{M}$ which contains all forced pairs contains one pair from $F$, then $ F \cap E(G^{\mathcal{M}}[A^{\sim}])$ or $ F\setminus E(G^{\mathcal{M}}[A^{\sim}])$ is also such a minimal set.
 \end{enumerate}
\end{lemma}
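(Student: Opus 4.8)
The plan is to prove the two parts in sequence, with Part 1 supplying the structural fact on which Part 2 rests.

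\emph{Part 1.} First I would note that $A^{\sim}$ does not depend on the chosen representatives: every matching in $\mathcal{M}$ gives agent $a$ a partner of a fixed rank, so $A^{\sim} = \{a \in A : \rk_a(\mathcal{M}) = \rk_a(\mathcal{M}')\}$, where $\rk_a(\mathcal{M})$ denotes that common rank. Fix any $N \in \mathcal{M}$ and $N' \in \mathcal{M}'$. By the Rural Hospitals Theorem for \textsc{Strongly SR with Ties} together with our standing assumption that all stable matchings are complete, $N$ and $N'$ are perfect matchings on $A$, so $N \oplus N'$ is a disjoint union of even alternating cycles plus the agents on which $N$ and $N'$ agree; and for every agent $a$ we have $a \in A^{\sim}$ exactly when $a$ is indifferent between $N(a)$ and $N'(a)$. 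The key claim is that each cycle $C$ of $N \oplus N'$ is entirely inside $A^{\sim}$ or entirely disjoint from it. To see this, take consecutive agents $a, a'$ on $C$ with, say, $\{a, a'\} \in N$; if $a \in A^{\sim}$ then $a$ is indifferent between $N(a) = a'$ and $N'(a)$, so by \Cref{obs:indifference} — which we transport from \textsc{Strongly SM with Ties} to the roommates setting through the instance $\mathcal{I}^M$, replacing $a$ by $a^m$ and $N, N'$ by the associated symmetric stable matchings — the agent $a' = N(a)$ is indifferent between $N(a') = a$ and $N'(a')$, i.e.\ $a' \in A^{\sim}$; the case $\{a, a'\} \in N'$ is symmetric and the case $N(a) = N'(a)$ is immediate. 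Hence no edge of $N \oplus N'$ crosses the boundary of $A^{\sim}$, and since each $a \in A^{\sim}$ either satisfies $N(a) = N'(a)$ (forcing $N(a) \in A^{\sim}$) or lies on a cycle of $N \oplus N'$ contained in $A^{\sim}$, we get $N(a), N'(a) \in A^{\sim}$. As $N, N'$ were arbitrary, every stable matching of $\mathcal{M}$ and of $\mathcal{M}'$ restricts to a perfect matching on $A^{\sim}$.

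\emph{Part 2.} Write $\mathcal{M}^Q$ for the matchings of $\mathcal{M}$ containing $Q$; by the discussion preceding the lemma these are exactly the perfect matchings of $G^{\mathcal{M}}$ containing $Q$, and we may assume $F \subseteq E(G^{\mathcal{M}})$ (edges of $F$ outside $G^{\mathcal{M}}$ are never used and drop out by minimality). I would argue by contradiction: suppose neither $F_{\mathrm{in}} := F \cap E(G^{\mathcal{M}}[A^{\sim}])$ nor $F_{\mathrm{out}} := F \setminus E(G^{\mathcal{M}}[A^{\sim}])$ meets every matching of $\mathcal{M}^Q$. Then there are $M_1, M_2 \in \mathcal{M}^Q$ with $M_1 \cap F_{\mathrm{in}} = \emptyset$ and $M_2 \cap F_{\mathrm{out}} = \emptyset$; since $F$ meets $M_1$ and $M_2$, all $F$-edges of $M_1$ lie outside $G^{\mathcal{M}}[A^{\sim}]$ while all $F$-edges of $M_2$ lie inside it, and (an edge cannot be both) $M_1 \cap M_2$ contains no $F$-edge. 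By Part 1, $M_1$ and $M_2$ each restrict to perfect matchings on $A^{\sim}$, so every edge of $M_1 \cup M_2$ has both endpoints in $A^{\sim}$ or both outside; hence every cycle of $M_1 \oplus M_2$ is homogeneous with respect to $A^{\sim}$, and since $M_1, M_2$ share the rank vector of $\mathcal{M}$ these are indifference cycles, so flipping any family of them yields another perfect matching of $G^{\mathcal{M}}$. Each $F$-edge of $M_1$, having an endpoint outside $A^{\sim}$ and not lying in $M_1 \cap M_2$, sits on a cycle disjoint from $A^{\sim}$; symmetrically each $F$-edge of $M_2$ sits on a cycle contained in $A^{\sim}$. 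Let $M^{*}$ agree with $M_1$ on every cycle contained in $A^{\sim}$ (and outside all cycles) and with $M_2$ on every cycle disjoint from $A^{\sim}$. Then $M^{*}$ is a perfect matching of $G^{\mathcal{M}}$ containing $Q$ (as $Q \subseteq M_1 \cap M_2$ lies outside all cycles), so $M^{*} \in \mathcal{M}^Q$; yet $M^{*}$ contains no $F$-edge — on an $A^{\sim}$-cycle it uses $M_1$, whose $F$-edges avoid such cycles, on a non-$A^{\sim}$-cycle it uses $M_2$, whose $F$-edges avoid those, and outside all cycles $M_1 = M_2$ has no $F$-edge. This contradicts that $F$ meets every matching of $\mathcal{M}^Q$. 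Hence $F_{\mathrm{in}}$ or $F_{\mathrm{out}}$ meets every matching of $\mathcal{M}^Q$; being a blocking set contained in the minimal blocking set $F$, it equals $F$ and is therefore itself a minimal such set, which is the claim.

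\emph{Main obstacle.} The delicate step is the splicing in Part 2: one must see, via Part 1, that the cycles of $M_1 \oplus M_2$ are homogeneous with respect to $A^{\sim}$, and then that the $F$-edges of $M_1$ and of $M_2$ are \emph{separated} onto cycles of the two kinds, which is exactly what lets the spliced matching dodge all of $F$; getting the direction of the splice right (keep $M_1$ inside $A^{\sim}$, use $M_2$ outside) is easy to invert by accident. A recurring minor technicality, shared by both parts, is that \Cref{obs:indifference} is stated only for \textsc{Strongly SM with Ties}, so it has to be carried over to \textsc{Strongly SR with Ties} through the correspondence between (symmetric) stable matchings of $\mathcal{I}$ and of $\mathcal{I}^M$.
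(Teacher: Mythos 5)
Your proof is correct and takes essentially the same route as the paper: Part~1 propagates indifference along matched edges via \Cref{obs:indifference} (your cycle-decomposition framing is a harmless reformulation), and the cycle-flip construction in Part~2 yields exactly the paper's spliced matching $(M_1\cap E(G^{\mathcal{M}}[A^{\sim}]))\cup(M_2\cap E(G^{\mathcal{M}}[A\setminus A^{\sim}]))$, after which the minimality bookkeeping agrees. The only cosmetic differences are that you make explicit the transport of \Cref{obs:indifference} through $\mathcal{I}^M$, which the paper leaves implicit, and that reusing the name $M_1$ for a local matching clashes with the input matching's name.
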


\begin{proof}
Note that $M(a^*) \sim_{a^*} M'(a^*)$ for some~$M \in \mathcal{M}$ and $M' \in \mathcal{M}'$ in fact implies that $M(a^*) \sim_{a^*} M'(a^*)$ for all~$M \in \mathcal{M}$ and $M' \in \mathcal{M}'$.

 We start by proving the first part.
 Let~$a \in A^\sim$ and assume towards a contradiction that there is some~$M \in \mathcal{M}$ with $\{a, b\} \in M$ for some~$b\notin A^\sim$.
 For every~$M' \in \mathcal{M}'$, \Cref{obs:indifference} implies that $b$ is indifferent between~$M(b)$ and $M' (b)$, implying that $b \in A^\sim$, a contradiction to $b \notin A^\sim$.
 Thus, agents from~$A^\sim$ can only be matched to other agents from~$A^\sim$.
 Since each stable matching is complete due to our preprocessing step, the first part of the lemma follows.
 
 Using the first part, we now prove the second part.
 From the first part, it in particular follows that for two matchings $N,N'\in \mathcal{M}$ containing all forced pairs we have $(N\cap  E(G^{\mathcal{M}}[A^{\sim}]))\cup (N' \cap E(G^{\mathcal{M}}[A\setminus A^{\sim}]))$ is also a matching containing all forced pairs in $\mathcal{M}$. 
	Consequently, 	
	$ F \cap E(G^{\mathcal{M}}[A^{\sim}])$ or $ F\setminus E(G^{\mathcal{M}}[A^{\sim}])$ is a minimal set of pairs such that every stable matching in~$\mathcal{M}$ contains a pair from the set
	(if $F$ contains pairs from both $E(G^{\mathcal{M}}[A^{\sim}])$ and $E(G^{\mathcal{M}}[A\setminus A^{\sim}])$, pairs from one of the two sets are not necessary to cover all matchings).
\end{proof}

In order to check whether certain forbidden pairs impose constraints on the selected subset of rotations,  we prove that, for each minimal set~$F$ of pairs such that every stable matching which contains all forced pairs and belongs to some equivalence class contains at least one pair of~$F$, it holds that the necessary and \forrot\ are the same for each pair of~$F$.

We will call a stable matching which contains all forced pairs a \Pvalid{} matching.

\begin{lemma}
\label{cor:nec-prh-rot-F}
 Let~$F$ be a minimal set of pairs such that, for every \Pvalid{} matching $M$ which belongs to some equivalence class~$\mathcal{M}$, matching~$M$ contains at least one pair of~$F$. 
 Then, the necessary and \forrot\ rotation are the same for each pair of~$F$.
\end{lemma}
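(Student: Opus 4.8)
We read the statement with $\mathcal{M}$ a fixed equivalence class of stable matchings and $F$ a minimal set of pairs such that every \Pvalid{} matching in $\mathcal{M}$ contains a pair of $F$ (as in \Cref{lem:A-sim}(2)); if $|F|\le 1$ there is nothing to prove, so assume $|F|\ge 2$. The plan is to show that any two distinct $e=\{a,b\},e'\in F$ have the same necessary and \forrot\ rotation. Two consequences of minimality come first. No pair of $F$ is fixed, since a fixed pair lies in every stable matching, so $\{e\}$ would already meet every \Pvalid{} matching of $\mathcal{M}$ and force $F=\{e\}$; hence \Cref{le:nec-phr} applies to every $e\in F$. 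Moreover, for each $e\in F$ there is a \Pvalid{} matching $M_e\in\mathcal{M}$ with $M_e\cap F=\{e\}$: minimality supplies a \Pvalid{} $M_e\in\mathcal{M}$ disjoint from $F\setminus\{e\}$, and since $F$ is a covering set $M_e$ still meets $F$, so $M_e\cap F=\{e\}$ and in particular $e\in M_e$.

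Fix distinct $e=\{a,b\},e'\in F$ with matchings $M_e,M_{e'}$ as above and set $D:=M_e\oplus M_{e'}$. Since $M_e\sim^{\rk}M_{e'}$, every agent has the same rank for its partner in both, hence is indifferent between them; so $D$ is a disjoint union of alternating even cycles, and along each such cycle every agent is indifferent between its two neighbours. The pair $e\in M_e\setminus M_{e'}$ lies on one such cycle $C$. \textbf{The crucial step is that $e'$ lies on $C$ too.} Suppose not, and let $\hat M:=M_e\oplus E(C)$ be the matching obtained from $M_e$ by rotating along $C$. Every agent on $C$ is then matched to its other $C$-neighbour (which is its partner in $M_e$ or $M_{e'}$, of rank $\rk(\mathcal{M})$), and agents off $C$ keep their $M_e$-partner, so $\hat M$ is a complete matching in $G^{\mathcal{M}}$, hence a stable matching in $\mathcal{M}$. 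It is \Pvalid, since every forced pair lies in $M_e\cap M_{e'}$, hence off $D$ and a fortiori off $C$, and is kept. And $\hat M\cap F=\emptyset$: the only $F$-edge of $M_e$ is $e\in E(C)$, which is removed, and the only $F$-edge of $M_{e'}$ is $e'\notin E(C)$, which is not added. Thus $\hat M$ is a \Pvalid{} matching of $\mathcal{M}$ avoiding $F$ entirely, contradicting that $F$ is a covering set. Hence $e$ and $e'$ both lie on the single alternating cycle $C$.

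Write $C=(v_0,v_1,\dots,v_{2t-1})$ with edges $f_\ell:=\{v_\ell,v_{\ell+1}\}$ (indices mod $2t$), so $e=f_0$ and $e'=f_j$ for some $j$. Consecutive edges $f_{\ell-1},f_\ell$ share the agent $v_\ell$, and both $v_{\ell-1}$ and $v_{\ell+1}$ appear as a partner of $v_\ell$ in a matching of $\mathcal{M}$ (namely $M_e$ or $M_{e'}$), so $\rk_{v_\ell}(v_{\ell-1})=\rk_{v_\ell}(v_{\ell+1})=\rk_{v_\ell}(\mathcal{M})$. By \Cref{le:nec-phr} applied in $\mathcal{I}^M$ to $v_\ell$, the rotation $\varphi_\nec^{f,v_\ell}$ (resp.\ $\varphi_\prh^{f,v_\ell}$) depends on the incident edge $f$ only through this rank — it is the unique rotation of $\mathcal{I}^M$ containing a triple $(v_\ell^m,\cdot,y)$ with $y$ equal to that rank, or $\emptyset$ in the extreme case — so $\varphi_\nec^{f_{\ell-1},v_\ell}=\varphi_\nec^{f_\ell,v_\ell}$ and $\varphi_\prh^{f_{\ell-1},v_\ell}=\varphi_\prh^{f_\ell,v_\ell}$; that is, $f_{\ell-1}$ and $f_\ell$ have the same necessary and \forrot\ rotation. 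Using the duality $\varphi_\nec^{g,x}=\bar\varphi_\prh^{g,y}$ for $g=\{x,y\}$ noted before the definition of that notion (together with $\bar{\bar\rho}=\rho$), this relation is an equivalence relation, so walking once around $C$ yields that all edges of $C$, and in particular $e$ and $e'$, have the same necessary and \forrot\ rotation. Since $e,e'\in F$ were arbitrary, the lemma follows.

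The main obstacle is the crucial step in the second paragraph, as it is the only place the covering-and-minimality hypothesis is used: one has to produce a single matching $\hat M$ that simultaneously stays in $\mathcal{M}$, remains \Pvalid, and drops every pair of $F$, so that its bare existence contradicts the covering property. Once $e$ and $e'$ are pinned to a common indifference cycle, the remainder is a routine walk along that cycle using \Cref{le:nec-phr}.
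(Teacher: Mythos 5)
Your proof is correct, and it takes a genuinely different route from the paper's. The paper first proves, via the exchange technique of \Cref{lem:A-sim} (splitting the agents into those indifferent between two matchings and the rest, and recombining the two matchings across this split), a claim about \emph{all} equivalence classes: if some class $\mathcal{M}'$ admits a \Pvalid{} matching containing one pair of $F$, then for every $e'\in F$ it admits a \Pvalid{} matching containing $e'$. From this it deduces that for every closed and complete rotation set $Z$ either all or none of the pairs of $F$ occur in \Pvalid{} matchings corresponding to $Z$, and equality of the necessary and \forrot{} rotations is then read off. You instead never leave the class $\mathcal{M}$: minimality gives $M_e,M_{e'}\in\mathcal{M}$ with $M_e\cap F=\{e\}$ and $M_{e'}\cap F=\{e'\}$, flipping the cycle of $M_e\oplus M_{e'}$ that contains $e$ (in effect the same exchange idea, applied inside a single class) forces $e$ and $e'$ onto a common indifference cycle, and a walk along that cycle finishes the argument using that, by the definition in \Cref{le:nec-phr}, the necessary and \forrot{} rotation of a pair $\{a,b\}$ taken with respect to $a$ depend only on $a$ and $\rk_a(b)$, combined with the stated duality between the two orientations of a pair --- the same local fact the paper exploits only in \Cref{lem:same-nec-prh}. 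Your version is more elementary and makes fully explicit the final step (from ``same admissible rotation sets'' to ``same rotations'') that the paper leaves implicit; the paper's version buys the stronger intermediate statement about arbitrary classes $\mathcal{M}'$ and channels all exchange arguments through \Cref{lem:A-sim}, which it reuses elsewhere in the appendix.
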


\begin{proof}
Note that in case $\mathcal{M}$ does not contain a matching including all forced pairs, $F$ is the empty set and thus the lemma statement is trivially fulfilled.
Thus, assume in the following that $\mathcal{M}$ contains at least one matching including all forced pairs.
 We start the proof by proving the following claim:
\begin{claim}
\label{lem:F}
 For every equivalence class~$\mathcal{M}'$ of stable matchings such that there is a \Pvalid{} matching~$M'\in \mathcal{M}' $ and~$e \in M'$ for some~$e \in F$, it holds that for every pair~$e' \in F$ there is some \Pvalid{} matching~$M''\in \mathcal{M}'$ with~$e' \in M''$.
\end{claim}

\begin{claimproof}
  Let $\mathcal{M}'$ be an equivalence class of stable matchings such that there is a \Pvalid{} matching~$M'\in \mathcal{M}' $ with~$e \in M'$ for some~$e \in F$.
  Assume towards a contradiction that there is some pair~$e' \in F$ with $e' \notin M''$ for every \Pvalid{} matching~$M'' \in \mathcal{M}'$.
  
  Let~$M \in \mathcal{M}$ be a matching containing all forced pairs.
  Let~$A^\sim := \{a \in A \mid M(a) \sim_a M' (a)\}$.
  \Cref{lem:A-sim} implies that $ F \cap E(G^{\mathcal{M}}[A^{\sim}])$ or $ F\setminus E(G^{\mathcal{M}}[A^{\sim}])$ is a minimal set of pairs such that every \Pvalid{} matching in~$\mathcal{M}$ contains an edge from the set.
	
	As $e \in M'$, \Cref{lem:A-sim} further implies
	that the endpoints of~$e$ are both contained in~$A^\sim$.
	This implies that $ F\setminus E(G^{\mathcal{M}}[A^{\sim}]) \subsetneq F$.
	Let $M''\in \mathcal{M}$ be a \Pvalid{} matching containing pair $e'$ (which is guaranteed to exist by the minimality of $F$).
	If both endpoints of~$e'$ would be contained in~$A^\sim$, then as $G^{\mathcal{M}}[A^{\sim}] = G^{\mathcal{M}'} [A^{\sim}]$,
	matching~$(M''\cap  E(G^{\mathcal{M}}[A^{\sim}]))\cup (M' \cap E(G^{\mathcal{M'}}[A\setminus A^{\sim}]))$ is a complete matching in~$G^{\mathcal{M'}}$  (and thus part of $\mathcal{M}'$), which contains $e'$ and all forced pairs (as $M''$ and $M'$ contain all forced pairs), a contradiction to our initial assumption. 
	Thus, we have~$ F \cap E(G^{\mathcal{M}}[A^{\sim}])  \subsetneq F$.
	We reached a contradiction to the minimality of~$F$, proving the claim.
\end{claimproof}
By \Cref{lem:F}, 
for each closed and complete subset of rotations~$Z$, 
we have one of the following two possibilities 
\begin{enumerate*}[label=(\roman*)]
  \item for every pair~$e\in F$, there is some \Pvalid{} matching corresponding to $Z$ containing~$e$, or 
  \item for every pair $e\in F$, there is no \Pvalid{} matching corresponding to $Z$ containing~$e$.
\end{enumerate*}
 This implies that the necessary and \forrot\ rotation are the same for each pair of~$F$.
 \end{proof}

 Next, we show that for each such set~$F$ that any stable matching matches either all agents from~$F$ as good as in~$\mathcal{M}$ or different than in $\mathcal{M}$:
 \begin{lemma}
 \label{lem:strictly-other}
  Let~$F$ be a minimal set of pairs such that every \Pvalid{} matching of some equivalence class~$\mathcal{M}$ of stable matchings contains at least one pair of~$F$.
  Then for every equivalence class~$\mathcal{M}'$ of stable matchings which contains at least \Pvalid{} matching which is disjoint from~$F$, in all matchings from $\mathcal{M}'$ all agents incident to a pair from~$F$ are matched with a different rank than in~$\mathcal{M}$.
 \end{lemma}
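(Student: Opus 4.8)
The plan is to argue by contradiction, using \Cref{lem:A-sim} as the main tool. We may assume $F \neq \emptyset$, since otherwise no agent is incident to a pair of $F$ and the statement is vacuous; in particular $\mathcal{M}$ then contains a \Pvalid{} matching, and by the minimality of $F$ every $e \in F$ lies in some \Pvalid{} matching of $\mathcal{M}$ (take a \Pvalid{} matching avoiding $F \setminus \{e\}$; it must still meet $F$), so every pair of $F$ is an edge of $G^{\mathcal{M}}$. Now fix an equivalence class $\mathcal{M}'$ with a \Pvalid{} matching $M' \in \mathcal{M}'$, $M' \cap F = \emptyset$, and suppose for contradiction that some agent $a^*$ incident to a pair $e^* \in F$ is matched at the same rank in $\mathcal{M}$ and $\mathcal{M}'$. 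Fixing any $M \in \mathcal{M}$, this means $\rk_{a^*}(M(a^*)) = \rk_{a^*}(M'(a^*))$, hence $M(a^*) \sim_{a^*} M'(a^*)$; so with $A^{\sim} := \{a \in A \mid M(a) \sim_a M'(a)\}$ we have $a^* \in A^{\sim}$, and \Cref{lem:A-sim} applies.

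By \Cref{lem:A-sim}(1), every stable matching of $\mathcal{M}$ and of $\mathcal{M}'$ induces a complete matching on $A^{\sim}$; moreover every $a \in A^{\sim}$ has the same partner-rank in $\mathcal{M}$ and in $\mathcal{M}'$, so $G^{\mathcal{M}}[A^{\sim}] = G^{\mathcal{M}'}[A^{\sim}]$. Picking a \Pvalid{} matching $M^{*} \in \mathcal{M}$ with $e^{*} \in M^{*}$, the partner of $a^{*} \in A^{\sim}$ in $M^{*}$ lies in $A^{\sim}$ by \Cref{lem:A-sim}(1); hence both endpoints of $e^{*}$ are in $A^{\sim}$, i.e.\ $e^{*} \in E(G^{\mathcal{M}}[A^{\sim}])$. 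Thus $F \cap E(G^{\mathcal{M}}[A^{\sim}]) \neq \emptyset$, so $F \setminus E(G^{\mathcal{M}}[A^{\sim}]) \subsetneq F$. By \Cref{lem:A-sim}(2), one of $F \cap E(G^{\mathcal{M}}[A^{\sim}])$ and $F \setminus E(G^{\mathcal{M}}[A^{\sim}])$ is again a minimal set of pairs meeting every \Pvalid{} matching of $\mathcal{M}$; the latter is a proper subset of $F$, contradicting the minimality of $F$, so the former is such a set, and minimality of $F$ forces $F \subseteq E(G^{\mathcal{M}}[A^{\sim}])$.

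It remains to contradict $F \subseteq E(G^{\mathcal{M}}[A^{\sim}])$. Consider
\[
 N := \bigl(M' \cap E(G^{\mathcal{M}'}[A^{\sim}])\bigr) \cup \bigl(M \cap E(G^{\mathcal{M}}[A \setminus A^{\sim}])\bigr).
\]
By \Cref{lem:A-sim}(1) both $M'$ and $M$ restrict to perfect matchings on $A^{\sim}$ and on $A \setminus A^{\sim}$, and since $G^{\mathcal{M}}[A^{\sim}] = G^{\mathcal{M}'}[A^{\sim}]$ all edges of $N$ lie in $G^{\mathcal{M}}$; hence $N$ is a complete matching in $G^{\mathcal{M}}$ and therefore $N \in \mathcal{M}$. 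Every forced pair is contained in both $M$ and $M'$ and, as $M$ induces a complete matching on $A^{\sim}$, has both endpoints in $A^{\sim}$ or both outside $A^{\sim}$; in the first case it lies in $M' \cap E(G^{\mathcal{M}'}[A^{\sim}])$ and in the second in $M \cap E(G^{\mathcal{M}}[A \setminus A^{\sim}])$, so $N$ is \Pvalid{}. Finally, $N \cap E(G^{\mathcal{M}}[A^{\sim}]) = M' \cap E(G^{\mathcal{M}'}[A^{\sim}])$ is disjoint from $F$ because $M' \cap F = \emptyset$, and since $F \subseteq E(G^{\mathcal{M}}[A^{\sim}])$ this gives $N \cap F = \emptyset$ --- contradicting that $F$ meets every \Pvalid{} matching of $\mathcal{M}$.

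I expect the main obstacle to be organizational rather than computational: one must make sure the dichotomy supplied by \Cref{lem:A-sim}(2) is fully exhausted (one branch being ruled out by minimality of $F$, the other forcing $F \subseteq E(G^{\mathcal{M}}[A^{\sim}])$), and then verify that the hybrid matching $N$ is simultaneously in $\mathcal{M}$, \Pvalid{}, and disjoint from $F$. The equality $G^{\mathcal{M}}[A^{\sim}] = G^{\mathcal{M}'}[A^{\sim}]$ is the linchpin making $N$ a legitimate matching of $\mathcal{M}$, and it rests on the same observation already used for \Cref{cor:nec-prh-rot-F}, namely that agents in $A^{\sim}$ receive the same partner-rank in $\mathcal{M}$ and $\mathcal{M}'$.
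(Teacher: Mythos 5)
Your proof is correct and follows essentially the same route as the paper's: both rest on \Cref{lem:A-sim}, the identity $G^{\mathcal{M}}[A^{\sim}]=G^{\mathcal{M}'}[A^{\sim}]$, and a hybrid matching combining $M'$ on $A^{\sim}$ with a matching of $\mathcal{M}$ outside $A^{\sim}$; you merely reorder the steps (first forcing $F\subseteq E(G^{\mathcal{M}}[A^{\sim}])$ via minimality of $F$, then contradicting the covering property with the explicit matching $N$), whereas the paper deduces that $F$ cannot lie entirely inside $E(G[A^{\sim}])$ first and leaves the hybrid construction implicit. One small repair: at the end you assert that every forced pair lies in $M$, so you should fix $M$ to be a \Pvalid{} matching of $\mathcal{M}$ (which exists, as you note, and changes nothing since $A^{\sim}$ does not depend on the chosen representatives) rather than an arbitrary $M\in\mathcal{M}$.
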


 \begin{proof}
 Note that in case $\mathcal{M}$ does not contain a \Pvalid{} matching, $F$ is the empty set and thus the lemma statement is trivially fulfilled.
Thus, assume in the following that $\mathcal{M}$ contains at least one \Pvalid{} matching.

  Let~$M \in \mathcal{M}$ and $M' \in \mathcal{M}'$.
  Assume towards a contradiction that there is an agent~$a$ incident to a pair~$e \in F$ with~$M(a) \sim_a M' (a)$.
  Let $A^\sim  := \{a \in A \mid M(a) \sim_{a} M'(a)\}$.
  By \Cref{lem:A-sim},  each stable matching from~$\mathcal{M} \cup \mathcal{M}'$ contains a complete matching from~$A^\sim$ and one on the remaining agents.
  Since $\mathcal{M}'$ contains a stable matching disjoint from~$F$ containing all forced pairs but $\mathcal{M}$ does not contain a stable matching disjoint from $F$ containing all forced pairs, it follows that $F \subsetneq E(G[A^\sim])$.
  \Cref{lem:A-sim} implies that $ F \cap E(G^{\mathcal{M}}[A^{\sim}])$ or $ F\setminus E(G^{\mathcal{M}}[A^{\sim}])$ is a minimal set of pairs such that every stable matching containing all forced pairs~$P$ in~$\mathcal{M}$ contains a pair from the set.
  Since~$e \in E(G[A^\sim])$, it follows that $F\cap E(G[A^\sim])$ or $F\setminus E(G[A^\sim])$ is a smaller minimal set of pairs for~$\mathcal{M}$, a contradiction to the minimality of~$F$.
 \end{proof}

\begin{lemma}\label{lem:forced-forbidden}
 Let $\mathcal{M}$ be an equivalence class containing a \Pvalid{} matching.
 Let $F$ be a minimal set of pairs such that every \Pvalid{} matching from $\mathcal{M}$ contains at least one pair of~$F$. 
 Assume that $\mathcal{M}$ also contains a stable matching~$M^d$ disjoint from~$F$.
 Then no \Pvalid{} matching is disjoint from~$F$.
\end{lemma}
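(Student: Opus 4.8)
The plan is to argue by contradiction. Suppose some \Pvalid{} matching $M'$ is disjoint from $F$, and let $\mathcal{M}'$ be its equivalence class. Since, by the choice of $F$, every \Pvalid{} matching of $\mathcal{M}$ contains a pair of $F$, the matching $M'$ cannot lie in $\mathcal{M}$, so $\mathcal{M}' \neq \mathcal{M}$. Fix a \Pvalid{} matching $M \in \mathcal{M}$ (one exists by assumption) and recall that $\mathcal{M}$ also contains the matching $M^d$ disjoint from $F$. The goal is to splice the ``$F$-relevant part'' of $M^d$ together with the ``forced part'' of $M$ so as to obtain a single \Pvalid{} matching inside $\mathcal{M}$ that is still disjoint from $F$ --- contradicting the choice of $F$.

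To set up the splice, let $A^{\sim} := \{a \in A \mid M(a) \sim_a M'(a)\}$ as in \Cref{lem:A-sim}. First I would dispose of the degenerate case $Q = \emptyset$: then $M^d$ would itself be \Pvalid{} and hence would meet $F$, a contradiction; so $Q \neq \emptyset$ (and likewise $F \neq \emptyset$, since $\mathcal{M}$ has a \Pvalid{} matching). In particular $A^{\sim} \neq \emptyset$: as $M$ and $M'$ are both \Pvalid{}, they agree on every pair of $Q$, so each endpoint of a forced pair lies in $A^{\sim}$; hence \Cref{lem:A-sim} applies to $\mathcal{M}$ and $\mathcal{M}'$. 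The two structural facts I need are: (i) every endpoint of a pair of $Q$ lies in $A^{\sim}$ (just observed); and (ii) every endpoint of a pair of $F$ lies in $A \setminus A^{\sim}$, which is precisely \Cref{lem:strictly-other} applied to $\mathcal{M}'$ (it contains the \Pvalid{} matching $M'$ disjoint from $F$, so every agent incident to a pair of $F$ is matched with a rank in $\mathcal{M}'$ different from its rank in $\mathcal{M}$, hence is not indifferent and not in $A^{\sim}$).

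By \Cref{lem:A-sim}(1), every stable matching of $\mathcal{M}$ --- in particular $M$ and $M^d$ --- induces a complete matching on $A^{\sim}$ and therefore also on $A \setminus A^{\sim}$. I then define $M^* := (M \cap E(G^{\mathcal{M}}[A^{\sim}])) \cup (M^d \cap E(G^{\mathcal{M}}[A \setminus A^{\sim}]))$, i.e., I keep $M$ on $A^{\sim}$ and $M^d$ on the complement. Since both parts consist of $G^{\mathcal{M}}$-edges and together cover every agent exactly once, $M^*$ is a complete matching of $G^{\mathcal{M}}$, hence a stable matching belonging to $\mathcal{M}$. It contains every forced pair, because by (i) these pairs sit inside $A^{\sim}$, where $M^*$ agrees with the \Pvalid{} matching $M$; and it is disjoint from $F$, because by (ii) the pairs of $F$ sit inside $A \setminus A^{\sim}$, where $M^*$ agrees with $M^d$, which avoids $F$. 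Thus $M^*$ is a \Pvalid{} matching of $\mathcal{M}$ disjoint from $F$, contradicting the defining property of $F$; this contradiction establishes the lemma.

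The main obstacle is the clean separation into the two ``halves'': establishing (ii) --- that the endpoints of $F$ all escape $A^{\sim}$ --- via \Cref{lem:strictly-other}, and verifying that the spliced object $M^*$ really lies in $\mathcal{M}$ rather than merely in some equivalence class. Both of these rest on \Cref{lem:A-sim}(1) and \Cref{lem:strictly-other} together with the description of an equivalence class as the family of complete matchings of $G^{\mathcal{M}}$; once these are in place, checking that $M^*$ is a matching containing $Q$ and missing $F$ is routine bookkeeping.
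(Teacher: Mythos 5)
Your proof is correct, and it reaches the contradiction by a genuinely different (and somewhat more elementary) mechanism than the paper, even though the global strategy is the same: both proofs assume a \Pvalid{} matching disjoint from~$F$ exists and then splice $M$ with $M^d$ to manufacture a \Pvalid{} matching inside~$\mathcal{M}$ avoiding~$F$. The paper performs the splice along the rotation-defined set $X$ of endpoints of all pairs sharing the necessary and \forrot{} rotation of the pairs in~$F$: it needs \Cref{cor:nec-prh-rot-F} to know this rotation pair is common to~$F$, \Cref{lem:same-nec-prh} to get the perfect matching on~$X$ and to control $M \oplus M'$, and a separate rotation argument (via \Cref{lem:strictly-other} and the \Pvalid{}ness of the counterexample) to show that no forced pair shares these rotations, so that exchanging inside $E(G[X])$ cannot destroy a forced pair. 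You instead splice along the indifference region $A^{\sim}$ defined directly from the hypothetical counterexample: forced pairs land in $A^{\sim}$ for the trivial reason that both matchings contain them, the endpoints of~$F$ escape $A^{\sim}$ by \Cref{lem:strictly-other} (the only place minimality of~$F$ enters), and \Cref{lem:A-sim}(1) together with the characterization of an equivalence class as the complete matchings of $G^{\mathcal{M}}$ (which the paper itself uses when asserting its spliced matching lies in~$\mathcal{M}$) makes the spliced matching a member of~$\mathcal{M}$. Your route avoids the rotation machinery entirely and makes the preservation of forced pairs immediate, at the price of tying the splice set to the particular counterexample rather than to the structurally defined set~$X$; your handling of the side conditions (nonemptiness of $A^{\sim}$ via $Q \neq \emptyset$, completeness of matchings from the preprocessing) is sound.
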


\begin{proof}
 Assume towards a contradiction that there exists some \Pvalid{} matching~$M^*$ which is disjoint from~$F$.
 By the definition of $F$, matching~$M^*$ is not contained in~$\mathcal{M}$.
 Let $M$ be a \Pvalid{} matching from $\mathcal{M}$.
 By \Cref{lem:strictly-other}, no agent incident to an endpoint of a pair from~$F$ is indifferent between~$M^*$ and~$M$.
 Thus, for each pair $e$ from $F$, the rotation poset corresponding to $M^*$ either needs to contain the prohibited rotation corresponding to $e$ or does not include the necessary one.
 Since $M^*$ is \Pvalid, it follows that no forced pair has the same necessary and \forrot\ rotation as any pair from~$F$.
 Let~$X$ be the set of endpoints of pairs having the same necessary rotation and the same prohibited rotation as a pair from~$F$ (recall that the necessary and prohibited rotation are the same for all pairs from $F$ by \Cref{cor:nec-prh-rot-F}).
 Then any stable matching corresponding to~$\mathcal{M}$ contains a perfect matching on~$X$ by \Cref{lem:same-nec-prh}.
 
 Because matchings $M$ and $M^d$ corresponding to $\mathcal{M}$ contain a perfect matching on~$X$, we have that $M' := (M\setminus E(G[X])) \cup (E(G[X]) \cap M^d)$ is a matching.
 Clearly, $M'$ is contained in~$\mathcal{M}$.
 Further, $M'$ is disjoint from~$F$ as $M^d$ is disjoint from $F$ and each edge from~$F$ has one endpoint in~$X$.
 Observe that matchings~$M$ and $M'$ differ only in edges for which both endpoints are in~$X$, and all vertices of~$X$ are indifferent between~$M$ and $M'$.
 By \Cref{lem:same-nec-prh}, every edge from~$M \oplus M'$ has the same necessary and prohibited rotation as edges from $F$. 
 As we have observed that no forced pair has the same necessary and prohibited rotation as a pair from $F$, this implies that $M \oplus M'$ contains no forced pair.
 Using this and as $M$ contained all forced edges, it follows that $M'$ contains every forced pair.
 As we have already observed that $M'$ is disjoint from $F$, from this we reach a contradiction to the definition of $F$.
\end{proof}

 We have one last lemma before describing the algorithm:

\begin{lemma}
\label{lem:Z'irrelevant}
 Let $F$ be a set of edges with the same necessary~$\varphi_{\nec}$ and prohibited rotation~$\varphi_{\prh}$.
 Then either  \begin{enumerate*}[label=(\roman*)]
\item each \Pvalid{} matching which corresponds to some closed and complete subset~$Z$ of rotations containing~$\varphi_{\nec}$ and $\bar \varphi_\prh$ contains an edge from~$F$, or 
\item for each closed and complete subset~$Z$ of rotations containing $\varphi_{\nec}$ and $\bar \varphi_\prh$ for which there is some \Pvalid{} matching corresponding to~$Z$  there is also a \Pvalid{} matching corresponding to~$Z$ which is disjoint from~$F$.
\end{enumerate*}
\end{lemma}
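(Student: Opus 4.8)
The plan is to prove the dichotomy in its contrapositive form: I will show that if (ii) fails then (i) holds. So suppose there is a closed and complete subset $Z^*$ of the rotation poset with $\varphi_{\nec}\in Z^*$ and $\bar{\varphi}_{\prh}\in Z^*$ such that the equivalence class $\mathcal{M}^*:=\mathcal{M}_{Z^*}$ contains a \Pvalid{} matching but no \Pvalid{} matching corresponding to $Z^*$ is disjoint from $F$; we must deduce that every \Pvalid{} matching $M$ corresponding to a closed and complete set $Z$ with $\varphi_{\nec},\bar{\varphi}_{\prh}\in Z$ meets $F$. Assume towards a contradiction that such an $M$ exists with $M\cap F=\emptyset$, and set $\mathcal{M}:=\mathcal{M}_Z$. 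We may assume $F$ consists of stable pairs (so that $\varphi_{\nec},\varphi_{\prh}$ are defined) none of which is fixed (a fixed pair in $F$ would make (i) hold trivially).

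First I would fix the relevant vertex set. Let $F'\supseteq F$ be the maximal set of stable pairs all having necessary rotation $\varphi_{\nec}$ and prohibited rotation $\varphi_{\prh}$, and let $X$ be the set of endpoints of edges of $F'$. By \Cref{lem:same-nec-prh}, every stable matching corresponding to a closed and complete set containing $\varphi_{\nec}$ and $\bar{\varphi}_{\prh}$ induces a perfect matching on $X$ made up of edges of $F'$; this applies to every matching of $\mathcal{M}^*$ and of $\mathcal{M}$. The key observation is that $\varphi_{\nec}$ and $\varphi_{\prh}$ by themselves determine, for each $a\in X$, the rank $\rk_a(\cdot)$ of $a$'s partner throughout the class: for $\{a,b\}\in F'$, the necessary (resp.\ prohibited) rotation of $\{a,b\}$ relative to $a$ is one of $\varphi_{\nec}$ or $\bar{\varphi}_{\prh}$ (resp.\ $\varphi_{\prh}$ or $\bar{\varphi}_{\nec}$), using that the necessary rotation of $\{a^m,b^w\}$ equals the dual of the prohibited rotation of $\{b^m,a^w\}$; hence, by \Cref{le:nec-phr}, whenever $\varphi_{\nec}$ lies in and $\varphi_{\prh}$ lies outside the chosen rotation set, the rank $\rk_a(b)$ is forced. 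Thus $\rk_a(Z^*)=\rk_a(b)=\rk_a(Z)$ for all $a\in X$, so $G^{\mathcal{M}^*}[X]=G^{\mathcal{M}}[X]=:G[X]$; and the same bookkeeping gives $F\subseteq F'\subseteq E(G[X])$. Finally, no forced pair has exactly one endpoint in $X$: by \Cref{lem:same-nec-prh} that endpoint is matched inside $X$ in the \Pvalid{} matching that $\mathcal{M}^*$ contains by assumption, contradicting it being matched to a forced partner outside $X$.

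The contradiction then follows by splicing. Take any \Pvalid{} matching $M^*\in\mathcal{M}^*$ and set $M^{**}:=(M^*\setminus E(G[X]))\cup(M\cap E(G[X]))$. As $M^*$ and $M$ each match $X$ internally by a perfect matching of the common graph $G[X]$ and match $A\setminus X$ internally, $M^{**}$ is a complete matching all of whose edges lie in $G^{\mathcal{M}^*}$, so $M^{**}\in\mathcal{M}^*$. It is \Pvalid{}, since forced pairs inside $X$ are supplied by $M$, forced pairs outside $X$ by $M^*$, and none cross $X$. And since $F\subseteq E(G[X])$, we get $M^{**}\cap F=(M\cap E(G[X]))\cap F=M\cap F=\emptyset$. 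Hence $M^{**}$ is a \Pvalid{} matching corresponding to $Z^*$ and disjoint from $F$, contradicting the choice of $Z^*$. Therefore (i) holds, proving the lemma.

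The step I expect to be the main obstacle is the middle one: establishing $G^{\mathcal{M}^*}[X]=G^{\mathcal{M}}[X]$, i.e.\ that the rank of every endpoint of $F'$ is determined by $\varphi_{\nec}$ and $\varphi_{\prh}$ alone. This requires carefully tracking the two orientations of each pair (the quantities $\varphi^{e,a}_{\nec}$ versus $\varphi^{e,b}_{\nec}$ and their duality relations between the copies $\{a^m,b^w\}$ and $\{b^m,a^w\}$ in $\mathcal{I}^M$) and ruling out forced pairs that straddle $X$. Once these structural facts are secured, the splicing argument is routine and mirrors the gluing already used in \Cref{lem:A-sim,lem:forced-forbidden}.
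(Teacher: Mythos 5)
Your proof is correct, and at its core it is the paper's argument run in the contrapositive: both proofs rest on the observation that once $\varphi_{\nec}$ and $\bar\varphi_{\prh}$ are in the rotation set, the ranks of all endpoints of $F$ are pinned down (via \Cref{le:nec-phr} and the duality between the two orientations of a pair), so the ``$F$-region'' of any two such equivalence classes is interchangeable, and one then splices two matchings across that region. The difference is which cut set carries the splice: the paper takes $A^{\sim}$, the set of agents indifferent between the two concrete matchings $M$ and $M^*$, and invokes \Cref{lem:A-sim} to get in one stroke that the spliced matching is stable, lies in the intended class, and is \Pvalid{}; you instead cut along $X$, the endpoint set of the maximal family $F'$, invoke \Cref{lem:same-nec-prh} for the perfect matchings on $X$, and must then supply two small extra facts by hand (that $G^{\mathcal{M}}[X]=G^{\mathcal{M}^*}[X]$, and that no forced pair straddles $X$). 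Your cut set is smaller and more explicitly tied to the rotations $\varphi_{\nec},\varphi_{\prh}$, which makes the ``why the $F$-edges all lie inside the cut'' step transparent, at the price of the extra bookkeeping you yourself flag; the paper's choice of $A^{\sim}$ outsources exactly that bookkeeping to \Cref{lem:A-sim}, which it has already proved for use elsewhere (e.g.\ in \Cref{lem:strictly-other,lem:forced-forbidden}). Also note the directions are trivially equivalent: the paper shows $\neg$(i)$\Rightarrow$(ii), transplanting the $F$-avoiding part of $M^*$ into the class of $Z$, whereas you show $\neg$(ii)$\Rightarrow$(i), transplanting the $F$-avoiding part of $M$ into the class of $Z^*$; either way the exchange argument is the same.
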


\begin{proof}
We show that in case statement (i) is not true, statement (ii) is guaranteed to hold.
For this, assume that there is a closed and complete subset~$Z^*$ of rotations containing~$\varphi_{\nec}$ and $\bar \varphi_\prh$ such that there is a \Pvalid{} matching~$M^*$ corresponding to~$Z^*$ which is disjoint from~$F$.
 Let~$Z$ be a (different) closed and complete subset of rotations containing~$\varphi_{\nec}$ and $\bar \varphi_\prh$ and let~$M$ be a \Pvalid{} matching corresponding to~$Z$.
 We must show that there is a \Pvalid{} matching corresponding to~$Z$ which is disjoint from~$F$.

 Let~$A^{\sim}$ be the set of agents which are indifferent between~$M$ and~$M^*$.
 By \Cref{lem:A-sim}, $M' := \bigl(M \setminus E(G[A^{\sim}])\bigr) \cup \bigl(M^* \cap E(G[A^{\sim}])\bigr)$ is also a \Pvalid{} matching.
 Because each edge from~$F$ is contained in~$E(G[A^{\sim}])$ (as both $Z$ and $Z^*$ contain rotations $\varphi_{\nec}$ and $\bar \varphi_\prh$ ) and $M^*$ contains no pair from~$F$, it follows that $M'$ is a matching corresponding to $Z$ which does not contain a pair from~$F$.
\end{proof}

\subsection{The Algorithm}\label{strong:alg}

Using the machinery developed in the previous two subsections, we now present our algorithm to solve \ISRStrongtiesForcedForbidden. 
The general approach of the algorithm is similar as for our algorithm for the variant without ties (we tried to make both the description of the algorithm and its proof of correctness parallel to the case of strict preferences from \Cref{thm:fpt-new}). 
However, tackling the general variant comes at the cost of a slower running time. 
\SMstrong*
\begin{proof}
Recall that we can compute matchings corresponding to a closed and complete subset $Z$ of rotations by computing complete matchings in $G^Z$.
	We apply a similar algorithm as for \Cref{thm:fpt-new}:
	\begin{enumerate}
		\item \label{item:poset-strong}
		Compute the rotation poset of $\mathcal{I}$.
		Let $Z_1$ be the closed complete subset of rotations corresponding to the equivalence class containing~$M_1$ (\Cref{poset:equiv}).
		Set $Z:= Z_1$.
		
		\item \label{item:forced-new}
		For each forced pair $e = \{a,b\}\in Q$ that is not a fixed pair, we
		integrate~$\varphi^{e, a}_{\nec}$ and $\bar \varphi^{e, a}_{\prh}$ to~$Z$ (if existent).
		If no \Pvalid{} matching corresponding to~$Z$ exists, then return that there is no \Pvalid{} matching.
		
		\item[(2.5)] 
		Let~$Z_1^*$ be the current set of rotations and $M_1^*$ a \Pvalid{} matching corresponding to~$Z_1^*$ and containing as many pairs from~$M_1$ as possible.
		Let $P^{\reduced} := \emptyset$.
		For each set~$F$ of forbidden pairs with the same necessary rotation~$\rho$ and the same \forrot\ rotation~$\phi$, we pick an arbitrary closed and complete subset~$Z'$ of rotations containing~$\rho$ and~$\bar \phi$ as well as the necessary but not the \forrot\ rotation for each forced pair and check whether there is a stable matching corresponding to~$Z'$ containing no pair of~$F$.
		If this is the case, then do nothing.
		Otherwise, we add either a pair from~$F \cap M_1$ (if $F\cap M_1 \neq \emptyset$) or an arbitrary pair from~$F$ (if $F\cap M_1 = \emptyset$) to~$P^{\reduced}$.
		
		\item \label{item:forbidden-new}
		For each pair~$e = \{a, b\} \in P^{\reduced} \cap M_1$, do the following:
		If $a$ and $b$ are indifferent between~$M_1$ and $M_1^*$, then guess whether~$a$ or~$b$ shall strictly prefer~$M_2$ to~$M^*_1$.
		Otherwise $a$ or $b$ prefers~$M_1^*$ to~$M_1$.
		Without loss of generality assume that we either guessed that~$a$ prefers~$M_2$ to~$M_1^*$ or that $a$ prefers~$M_1^*$ to~$M_1$.
		Integrate $\bar \varphi^{e, a}_\nec$ into~$Z$.
		(If $\varphi^{e, a}_\nec$ does not exist, then reject the guess.)
		
		\item \label{item:forbidden-new-notM1} 
        Pick a forbidden pair~$e=\{a, b\} \in P^{\reduced}$ such that $\varphi^{e, a}_\nec \in Z$ and $\varphi^{e, a}_\prh \notin Z$ (recall that this is equivalent to $\varphi^{e, b}_\nec \in Z$ and $\varphi^{e, b}_\prh \notin Z$).
		If there is no such forbidden pair, then go to \Cref{item:return}.
		If $a$ is indifferent between~$M_1$, $M_1^*$, and~$b$, then reject this guess.
		Without loss of generality assume that $a$ weakly prefers~$b$ to~$M_1^*$ and weakly prefers~$M^*_1$ to~$M_1$.
		Integrate $\bar \varphi^{e, a}_\nec$ into~$Z$.
		If $\varphi^{e, a}_\nec\in Z\setminus Z_1$ , then reject this guess.
		Repeat this step.
		
		\item \label{item:return}
		Return the \Pvalid{} matching corresponding to~$Z$ that contains as many pairs as possible from~$M_1$.
	\end{enumerate}

	\paragraph{Proof of Correctness.}
	
	We start by showing that all changes made to $Z$ over the course of the algorithm can indeed be done and are indeed necessary.
	\begin{claim}\label{claim:strong-FPT-algo}
		Let~$M^*$ be a \Pvalid{} matching containing no forbidden pairs which respects our guesses.
		Further, let $Z^*$ be the corresponding closed and complete subset of rotations.
		Then $Z^*$ contains all rotations added in \Cref{item:forced-new,item:forbidden-new-notM1,item:forbidden-new}.
		Moreover, in \Cref{item:forbidden-new}, pair~$\{a,b\}$ always contains an agent strictly preferring $M_2$ to $M_1^*$, and, in \Cref{item:forbidden-new-notM1}, pair~$\{a, b\}$ always contains an agent weakly preferring the other endpoint from the pair to~$M_1^*$ and weakly preferring~$M^*_1$ to~$M_1$.
		Further, from \Cref{item:forced-new} on, there is always a \Pvalid{} matching corresponding to $Z$.
	\end{claim}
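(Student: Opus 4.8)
The plan is to prove \Cref{claim:strong-FPT-algo} by induction on the number of rotation integrations performed so far, following the same skeleton as the proof of \Cref{claim:FPT-algo}, but replacing every appeal to \Cref{lem:cond-stable-pair} and \Cref{lem:blocking-rotation} by an appeal to \Cref{le:nec-phr}, and every use of \Cref{lem:circular-prefs} by a combination of \Cref{obs:indifference} with the grouping machinery of \Cref{cor:nec-prh-rot-F,lem:strictly-other,lem:forced-forbidden,lem:Z'irrelevant}. As in the strict case it suffices to argue about the explicitly \emph{integrated} rotations: every further rotation added to $Z$ is a predecessor of an integrated one, and every rotation removed from $Z$ is the dual of, or a successor of the dual of, an integrated one. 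Hence, once we know $\varphi\in Z^*$ for every integrated $\varphi$, closedness of $Z^*$ yields ``every added rotation lies in $Z^*$'' and completeness of $Z^*$ yields ``no removed rotation lies in $Z^*$''; in particular we never integrate $\varphi$ and $\bar\varphi$ simultaneously, so the integrations remain mutually consistent and $Z$ stays closed and complete.

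For the base case, \Cref{item:forced-new}, fix a non-fixed forced pair $e=\{a,b\}$. Since $M^*$ is \Pvalid{} we have $e\in M^*$, so \Cref{le:nec-phr}, applied in $\mathcal I^M$ to $\{a^m,b^w\}$, gives $\varphi^{e,a}_{\nec}\in Z^*$ and $\varphi^{e,a}_{\prh}\notin Z^*$, hence $\bar\varphi^{e,a}_{\prh}\in Z^*$; this is the first assertion for this step. For the last assertion of the claim after \Cref{item:forced-new}, I would group the non-fixed forced pairs by their common necessary and prohibited rotation, let $X_F$ be the endpoint set of the maximal group $F$ with necessary rotation $\varphi_\nec$ and prohibited rotation $\varphi_\prh$, and observe via \Cref{thm:strong-rotation-equivalence} that, since $Z$ contains $\varphi_\nec$ and $\bar\varphi_\prh$, every agent of $X_F$ has the same rank in $\mathcal M_Z$ as in $\mathcal M_{Z^*}$; hence $G^Z[X_F]=G^{Z^*}[X_F]$ and, by \Cref{lem:same-nec-prh}, every matching of either class is perfect on $X_F$. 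Thus the restriction of $M^*$ to $X_F$ is a perfect matching of $G^Z[X_F]$ containing all forced pairs of $F$, and the union of these restrictions over all groups together with the restriction of an arbitrary matching of $\mathcal M_Z$ to the remaining agents is a complete matching of $G^Z$ containing every forced pair. Since all later steps integrate only rotations that (by the induction hypothesis) belong to $Z^*$, re-running this argument with the current $Z$ keeps a \Pvalid{} matching available, so in particular the abort in \Cref{item:forced-new} is never triggered.

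For the inductive steps, \Cref{item:forbidden-new} and \Cref{item:forbidden-new-notM1}, the rotation integrated is $\bar\varphi^{e,a}_{\nec}$, and by \Cref{le:nec-phr} (together with completeness of $Z^*$) we have $\bar\varphi^{e,a}_{\nec}\in Z^*$ precisely when $a$ is matched strictly better than $b$ in $M^*$. So the two ``moreover'' statements — that in \Cref{item:forbidden-new} some endpoint of $e$ strictly prefers the target matching to $M_1^*$, and that in \Cref{item:forbidden-new-notM1} some endpoint weakly prefers the other endpoint of $e$ to $M_1^*$ and weakly prefers $M_1^*$ to $M_1$ — are exactly what identifies such an $a$ and lets us conclude $\bar\varphi^{e,a}_{\nec}\in Z^*$ (the degenerate cases where $\varphi^{e,a}_{\nec}$ does not exist, or where $\varphi^{e,a}_{\nec}\in Z\setminus Z_1$, are disposed of by the explicit rejection clauses, using the induction hypothesis to see that a correct guess is never rejected). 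To establish the ``moreover'' statements I would invoke the correctness of the set $P^{\reduced}$ assembled in Step~(2.5): \Cref{cor:nec-prh-rot-F} makes all forbidden pairs sharing a necessary and prohibited rotation behave uniformly, \Cref{lem:Z'irrelevant} makes the membership test in Step~(2.5) independent of the chosen witness set $Z'$, and \Cref{lem:strictly-other,lem:forced-forbidden} guarantee that a pair enters $P^{\reduced}$ exactly when it genuinely forces a change of rank for its endpoints. Given $e\in P^{\reduced}$, a patching argument in the style of the base case (using \Cref{obs:indifference} and \Cref{lem:A-sim}) then rules out the ``symmetric'' configuration in which both endpoints of $e$ would have the same rank in $M^*$ as in $M_1^*$, respectively $M_1$: otherwise $e$ could be patched into a \Pvalid{} matching of $\mathcal M_{Z^*}$, which is impossible because $M^*$ contains no forbidden pair while \Cref{lem:Z'irrelevant} would force an edge of $e$'s group into every \Pvalid{} matching of that class.

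The rotation bookkeeping and the base case are routine; the genuine obstacle is this last point, i.e.\ proving the two ``moreover'' statements. With ties and strong stability the clean dichotomy of \Cref{lem:circular-prefs} fails — for a stable pair outside a matching, both endpoints may be strictly better off, or both may be indifferent — so one cannot simply read off which endpoint to pick. Instead one has to combine \Cref{obs:indifference}, the class-level structure of \Cref{lem:A-sim}, and the uniformity and irrelevance statements \Cref{cor:nec-prh-rot-F,lem:strictly-other,lem:forced-forbidden,lem:Z'irrelevant} to show that every remaining ``symmetric'' configuration would permit patching the forbidden pair into a \Pvalid{} matching of $\mathcal M_{Z^*}$, contradicting $e\in P^{\reduced}$ and the fact that $M^*$ avoids all forbidden pairs. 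Making this patching argument interact correctly with the reference matching $M_1^*$ (chosen in Step~(2.5) to contain as many pairs of $M_1$ as possible) and with the guesses of \Cref{item:forbidden-new} is where most of the care is needed.
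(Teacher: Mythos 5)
Your overall architecture is sound and matches the paper's: the rotation bookkeeping (integrated rotations suffice, closedness/completeness of $Z^*$ handle the rest), the treatment of \Cref{item:forced-new} via \Cref{le:nec-phr}, the patching argument for the continued existence of a \Pvalid{} matching (the paper does this with $A^\sim$ and \Cref{lem:A-sim} rather than your group-wise $X_F$ argument, but both work), and the observation that $\bar\varphi^{e,a}_\nec\in Z^*$ iff $a$ is matched strictly better than $b$ in $M^*$. For the \Cref{item:forbidden-new} ``moreover'' statement your sketch is also essentially the paper's: rule out the configuration where both endpoints keep the pair rank in $M^*$ by noting that then $Z^*$ contains the necessary and not the \forrot\ rotation of $e$'s group, so (since the Step~(2.5) test failed for that group) \Cref{lem:Z'irrelevant} forces a pair of $F^e\subseteq P$ into $M^*$; the remaining case is settled with \Cref{obs:indifference} applied to a matching of $\mathcal{M}_{Z_1^*}$ containing $e$.

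The genuine gap is the ``moreover'' statement for \Cref{item:forbidden-new-notM1} (and the soundness of its rejections), which is exactly where you admit the care is needed but where your proposed contradiction does not do the job. That statement makes no reference to $M^*$'s equivalence class at all: it asserts a relation among $b$, $M_1^*$ and $M_1$, and the dangerous configuration is a pair $e\in P^{\reduced}$ (possibly $e\notin M_1$) both of whose endpoints sit at the pair rank in $M_1$, in $M_1^*$ and in the current class. Your plan --- patch $e$ into a \Pvalid{} matching of $\mathcal{M}_{Z^*}$ and contradict \Cref{lem:Z'irrelevant} --- targets the wrong class and is anyway redundant (if statement (i) of \Cref{lem:Z'irrelevant} applied to $Z^*$ one would contradict $M^*$ directly, without any patching), and it cannot establish a statement about $M_1$ and $M_1^*$. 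The paper's argument is different and essential: in the fully indifferent configuration one forms $N_1^* := \bigl(M_1^*\cap E(G[V\setminus A^\sim])\bigr)\cup\bigl(M_1\cap E(G[A^\sim])\bigr)$ via \Cref{lem:A-sim}, uses the Step~(2.5) rule of preferring a pair from $F\cap M_1$ (so that if $N_1^*$ met $F^e$, a pair of $M_1\cap F^e$ would already have been handled in \Cref{item:forbidden-new}) to conclude $N_1^*\cap F^e=\emptyset$, deduces that $N_1^*$ then misses a forced pair, and finally invokes \Cref{lem:forced-forbidden} to conclude that \emph{no} \Pvalid{} matching avoids the forbidden pairs at all --- contradicting the existence of $M^*$ and thereby justifying both the rejection and the ``weakly prefers'' assertion (the strict-rank case is then handled as in \Cref{item:forbidden-new}). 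Your sketch contains none of this chain; moreover your description of \Cref{lem:strictly-other,lem:forced-forbidden} as ``a pair enters $P^{\reduced}$ exactly when it genuinely forces a change of rank'' misstates their content (\Cref{lem:forced-forbidden} is the global-infeasibility step, not a characterization of $P^{\reduced}$), and you do not address why the rejection when $\varphi^{e,a}_\nec\in Z\setminus Z_1$ is correct. As written, the hardest third of the claim is therefore not proved.
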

	
	\begin{claimproof}
	We prove all parts of the claim simultaneously. For the first part, it suffices to prove the statement for all integrated rotations.
        
        \begin{enumerate}
         \item[(2)] 
		Each rotation integrated in \Cref{item:forced-new} is contained in $Z^*$ by \Cref{le:nec-phr}.
		If the algorithm does not return that there is no solution, then further clearly there is some \Pvalid{} matching corresponding to~$Z_1^*$.
		So assume towards a contradiction that the algorithm rejects in \Cref{item:forced-new} but there is some \Pvalid{} matching~$N$.
		This can happen for two reasons:
		First, the necessary rotation (or the dual of the forbidden rotation) of some forced pair has been eliminated while integrating some $\varphi^{e, a}_{\nec}$ or $\bar \varphi^{e, a}_{\prh}$ for some other forced pair, implying that no \Pvalid{} matching can exist.
		Second, assume that the first case never happened (but there is still no \Pvalid{} matching corresponding to~$Z^*$).
		Then each endpoint of~$Q$ is indifferent between any matching~$M^*$ corresponding to~$Z^*$ and $N$.
		Let~$A^{\sim}$ be the set of agents which are indifferent between~$M^*$ and $N$.
		By \Cref{lem:A-sim}, it follows that $\bigl(M^* \cap E(G[V\setminus A^{\sim}])\bigr) \cup \bigl(N\cap E(G[A^{\sim}])\bigr)$ is a \Pvalid{}matching corresponding to~$Z_1^*$, a contradiction.

		\item[(3)]
		Next, we consider the rotations integrated in \Cref{item:forbidden-new}.
	First, we consider the case that $a$ and $b$ are indifferent between~$M_1$ and $M_1^*$ and prove that as a consequence of \Cref{cor:nec-prh-rot-F,lem:strictly-other}, for each pair~$e =\{a, b\} \in P^{\reduced} \cap M_1$, either~$a$ or $b$ must prefer~$M_2 $ to~$M_1^*$, justifying our guessing:
	Let~$F^e$ be the set of forbidden pairs with the same necessary and \forrot\ rotation as~$e$.
	Then by the definition of~$P^{\reduced}$, there is a minimal subset~$F$ of~$F^e$ such that each \Pvalid{} matching corresponding to~$Z'$ (where $Z'$ is the set from Step~(2.5) constructed for $F^e$) contains a pair from~$F$.
	Consequently, by \Cref{lem:strictly-other}, every \Pvalid{} matching disjoint from~$F$ matches each endpoint of a pair from~$F$ to a different rank than it is matched in~$M_1$ and $M^*_1$ (note that by our initial assumption all agents from $F^e$ are indifferent between $M_1$ and $M^*_1$).
	By \Cref{le:nec-phr} and as $M_2$ cannot contain pairs from $F$, it follows that we need to integrate the \forrot\ rotation or the dual of the necessary rotation of pairs from $F^e$ to~$Z$:
	By the definition of~$F^e$, this changes the rank that agents~$a$ and~$b$ have for their partner in the matching (implying that their rank is different in $M_1^*$ and $M_2$).
	By \Cref{obs:indifference}, one of~$a$ and $b$ prefers~$M_1^*$ to~$M_2$ while the other prefers~$M_2$ to~$M_1^*$.
	
	We now continue by arguing that all changes made to $Z$ are indeed necessary. 
	If $a$ and $b$ are indifferent between~$M_1 $ and $M_1^*$, then we assume without loss of generality that we guess that $a$ prefers $M_2(a)$ to $M_1^*(a)=b$.
	If $a$ and $b$ are not indifferent between~$M_1$ and $M_1^*$ (without loss of generality assume that $a$ prefers~$M_1^*$ to~$M_1$), then $a$ prefers~$M_2$ to~$M_1^*$ by \Cref{lem:strictly-other} and as all changes so far were necessary.
	In both cases, 
		$a$ must be matched strictly better than~$b$ in the desired matching and we can proceed as follows.
		If $a$ has no stable partner it prefers to~$b$, then $\varphi_\nec^{e, a}$ does not exist and the algorithm correctly rejects.
		So assume that $\varphi_{\nec}^{e, a}$ exists.
		As $a$ must be matchted strictly better than~$b$, set~$Z^*$ cannot contain a rotation $(a^m,x,y)$ with $y\geq \rk_{a^m}(b^w)$ and thus in particular not $\varphi_{\nec}^{e, a}$ (see \Cref{poset:equiv}).
		Thus $\bar \varphi_{\nec}^{e, a} \in Z^*$. 
        
      \item[(4)]
        Let $M$ be any strongly stable matching corresponding to~$Z$ which contains~$e$ (such a matching has to exist as $\varphi_\nec^{e,a} \in Z$ and $\varphi_\prh^{e,a} \notin Z$).
        We first show that if $a$ and $b$ are indifferent between~$M$, $M_1^*$, and $M_1$, then the algorithm correctly rejects the guess.
        Note that as both $a$ and $b$ are indifferent between $M$, $M_1^*$, and $M_1$ and as $M$ contains $e$ (implying that $Z$ contains $\varphi_\nec^{e,a} \in Z$ and $\varphi_\prh^{e,a} \notin Z$), by \Cref{le:nec-phr} also $Z_1$ and $Z^*_1$  contain the necessary but not the \forrot\ rotation for~$e$.
        Again, let~$F^e$ be the set of forbidden pairs with the same necessary and \forrot\ rotation as~$e$. 
        By the definition of~$P^{\reduced}$, for the closed and complete subset $Z'$ from Step~(2.5) constructed for $F^e$, every \Pvalid{} matching corresponding to~$Z'$ contains at least one pair from~$F^e$.
        Since $Z_1^*$ and $Z'$ contain the necessary but not the \forrot\ rotation for~$e$, it follows that each stable matching corresponding to~$Z_1^*$, and in particular~$M_1^*$, contains at least one pair~$e'$ from~$F^e$ (if $M_1^*$ is disjoint from $F^e$, then from \Cref{lem:strictly-other} we get that both endpoints of one of the pairs from $F^e$ have a different rank for their partner in $M_1^*$ and matchings corresponding to $Z'$, implying that $Z_1^*$ does not contain the necessary or contains the prohibited rotation of pairs from~$F^e$, a contradiction).
        Let $A^{\sim} := \{ a \in A \mid M_1 (a ) \sim_a M_1^*( a)\}$ and 
        $N_1^* := \bigl(M_1^* \cap E(G[V\setminus A^{\sim}])\bigr) \cup \bigl(M_1\cap E(G[A^{\sim}])\bigr)$.
        By \Cref{lem:A-sim},~$N_1^*$ is a matching. 
        It follows that~$N_1^*$ is a stable matching corresponding to~$Z_1^*$ and containing all pairs from~$M_1 $ whose endpoints are indifferent between~$M_1^*$ and $M_1$.
        If there exists some~$ e'\in N_1^* \cap F^e \subseteq M_1 \cap F^e$, then by \Cref{lem:Z'irrelevant} (and as from $e'\in P^{\reduced}$ it follows that there is a closed and complete subset of rotations containing $\varphi_\nec^{e,a}$ and $\bar \varphi_\prh^{e,a}$ for which no \Pvalid{} matching is disjoint from $F^e$ and thus statement (ii) from \Cref{lem:Z'irrelevant} cannot be fulfilled) there is also some edge~$e'' \in M_1 \cap F^e$ and we would have added $e''$ to~$P^{\reduced}$ in Step~(2.5) and handled~$e''$ in \Cref{item:forbidden-new}.
        Thus, we may assume that $N_1^* \cap F^e = \emptyset$.
        It follows that $N_1^*$ does not contain all forced pairs (otherwise we would not have added~$e $ to~$P^\reduced$ in Step~(2.5)).
        By \Cref{lem:forced-forbidden}, it follows that there is no \Pvalid{} matching containing no forbidden pairs.
        Consequently, our algorithm correctly rejects the guess.

        Now consider the case that one of the endpoints, say~$a$, prefers~$M$ to~$M_1^*$ or $M_1^*$ to~$M_1$.
        As the changes done so far are necessary, it follows that $M (a) \succsim M_1^* (a) \succsim M_1 (a)$.
        Since the changes done so far are necessary, it follows by \Cref{thm:strong-rotation-equivalence} that $a$ must be matched at least as good as~$b$ in a stable matching corresponding to~$Z^*$. 
        \Cref{lem:strictly-other} implies that $a$ must be matched better than $b$ in a stable matching corresponding to~$Z^*$.
        From here on, the proof is analogous to \Cref{item:forbidden-new}.
      \end{enumerate}
      This finishes the proof of the claim.
	\end{claimproof}
	
	Using \Cref{claim:strong-FPT-algo}, we finish the proof of correctness as follows:
    First observe that if the algorithm rejects a guess in \Cref{item:forbidden-new-notM1}, then there is no \Pvalid{} matching containing no forbidden pair and this guess due to \Cref{claim:strong-FPT-algo} and \Cref{thm:strong-rotation-equivalence}.
    Therefore, assume from now on that the current guess was not rejected.
    Thus, there is no rotation~$\rho$ such that $\rho$ as well as $\bar \rho$ get added to~$Z$ during the algorithm.
	\Cref{item:forced-new} now ensures by \Cref{le:nec-phr} that $M$ contains all forced pairs.
	\Cref{item:forbidden-new-notM1} ensures that $M$ contains no forbidden pair.
	
	Next, we show the optimality of the returned matching~$M$.
	Let~$Z^*$ be the subset of the rotation poset corresponding to an optimal stable matching~$M^*$ (if no stable matching obeying our guesses exists, then we reject all guesses).
	By \Cref{claim:strong-FPT-algo}, we get 
	$Z_1 \triangle Z \subseteq Z_1 \triangle Z^*$ (\Cref{claim:strong-FPT-algo} directly implies that $Z\setminus Z_1\subseteq Z^*\setminus Z_1$ but also gives us $Z_1\setminus Z\subseteq Z_1\setminus Z^*$ as deleting a rotation corresponds to adding its dual).
	We now show that we can conclude from this that $|M\cap M_1| \ge |M^* \cap M_1|$:
	Let $A^\sim$ be the set of agents indifferent between $M$ and $M^*$.
	We claim that any pair from~$M^*\cap M_1$ contains only agents from~$A^\sim$.
	Assume towards a contradiction that 
	there is some pair~$e = \{a, b\} \in M_1 \cap M^*$ with~$a \notin A^\sim$.
	Without loss of generality assume that $a$ prefers~$M^* (a) = M_1 (a)$ to~$M (a)$.
	Then the dual of the necessary or the \forrot\ rotation of~$e$ has to be contained in~$Z$.
	Since $Z_1 \triangle Z \subseteq Z_1 \triangle Z^*$, it follows that the dual of the necessary rotation or the \forrot\ of~$e$ is contained in~$Z^*$, contradicting~$e \in M^*$.
	As every pair from~$M^* \cap M_1$ has both endpoints in $A^\sim$, it follows that $M':= (M \setminus G[A^\sim]) \cup (M^* \cap G[A^\sim])$ is a strongly stable matching corresponding to~$Z$ with $|M' \cap M_1| > |M \cap M_1|$, a contradiction to the definition of~$M$.

\paragraph{Running Time.}
  \Cref{item:poset-strong} can be done in $\mathcal{O} (n m)$ time~\cite{DBLP:conf/soda/KunyszPG16}.
  The necessary and \forrot\ of a pair~$e$ can be computed in constant time.
  For each set~$F$, we can construct~$Z^*$ in $\mathcal{O}(m)$ time.
  Testing whether there is a stable matching corresponding to~$Z^*$ and disjoint from~$F$ can be done in $\mathcal{O} (\sqrt{n}m \log n)$ time~\cite{DBLP:journals/talg/DuanPS18} by computing a maximum-weight complete matching.
  Thus, Step~(2.5) can be done in $\mathcal{O}(\sqrt{n} m^2 \log n)$ time.
  Since any rotation gets at most once integrated, it follows that all integrations of rotations together take $\mathcal{O}(m)$ time.
  \Cref{item:forced-new,item:forbidden-new-notM1} only integrate rotations and thus runs in $\mathcal{O}( m)$ time.
  \Cref{item:forbidden-new} makes up to~$|P \cap M_1| \le |P|$ many guesses.
  Apart from this, only rotations get integrated.
  Thus, its running time is $\mathcal{O}(2^{|P \cap M_1|}  \cdot m)$.
  Thus, \Cref{item:forbidden-new,item:forbidden-new-notM1,item:forced-new} can be done in $\mathcal{O}( m )$ time per guess.
  \Cref{item:return} computes one maximum-weight matching and thus runs in $\mathcal{O}(\sqrt{n} m \log n)$ time.
  Consequently, the total running time is $\mathcal{O} \bigl( (2^{|P\cap M_1|} + m) \cdot \sqrt{n} m \log n\bigr)$.
\end{proof}


\begin{thebibliography}{10}

\bibitem{DBLP:conf/icalp/BhattacharyaHHK15}
Sayan Bhattacharya, Martin Hoefer, Chien{-}Chung Huang, Telikepalli Kavitha,
  and Lisa Wagner.
\newblock Maintaining near-popular matchings.
\newblock In {\em Proceedings of the 42nd International Colloquium on Automata,
  Languages, and Programming ({ICALP} '15)}, pages 504--515. Springer, 2015.

\bibitem{DBLP:conf/mfcs/BoehmerHN22}
Niclas Boehmer, Klaus Heeger, and Rolf Niedermeier.
\newblock Deepening the (parameterized) complexity analysis of incremental
  stable matching problems.
\newblock In {\em Proceedings of the 47th International Symposium on
  Mathematical Foundations of Computer Science ({MFCS} '22)}, pages
  21:1--21:15. Schloss Dagstuhl - Leibniz-Zentrum f{\"{u}}r Informatik, 2022.

\bibitem{uschanged}
Niclas Boehmer, Klaus Heeger, and Rolf Niedermeier.
\newblock Theory of and experiments on minimally invasive stability
  preservation in changing two-sided matching markets.
\newblock In {\em Proceedings of the Thirty-Sixth {AAAI} Conference on
  Artificial Intelligence ({AAAI} '22)}, pages 4851--4858. {AAAI} Press, 2022.

\bibitem{DBLP:conf/atal/BoehmerN21}
Niclas Boehmer and Rolf Niedermeier.
\newblock Broadening the research agenda for computational social choice:
  Multiple preference profiles and multiple solutions.
\newblock In {\em Proceedings of the 20th International Conference on
  Autonomous Agents and Multiagent Systems ({AAMAS} '21)}, pages 1--5. {ACM},
  2021.

\bibitem{DBLP:conf/aaai/BredereckCKLN20}
Robert Bredereck, Jiehua Chen, Du\v{s}an Knop, Junjie Luo, and Rolf
  Niedermeier.
\newblock Adapting stable matchings to evolving preferences.
\newblock In {\em Proceedings of the Thirty-Fourth {AAAI} Conference on
  Artificial Intelligence ({AAAI} '20)}, pages 1830--1837. {AAAI} Press, 2020.

\bibitem{DBLP:journals/siamcomp/CharikarCFM04}
Moses Charikar, Chandra Chekuri, Tom{\'{a}}s Feder, and Rajeev Motwani.
\newblock Incremental clustering and dynamic information retrieval.
\newblock {\em {SIAM} J.\ Comput.}, 33(6):1417--1440, 2004.

\bibitem{DBLP:journals/disopt/CsehH20}
{\'{A}}gnes Cseh and Klaus Heeger.
\newblock The stable marriage problem with ties and restricted edges.
\newblock {\em Discret. Optim.}, 36:100571, 2020.

\bibitem{DBLP:journals/disopt/CsehM16}
{\'{A}}gnes Cseh and David~F. Manlove.
\newblock Stable marriage and roommates problems with restricted edges:
  Complexity and approximability.
\newblock {\em Discret. Optim.}, 20:62--89, 2016.

\bibitem{DBLP:journals/tcs/DiasFFS03}
V{\^{a}}nia M.~F. Dias, Guilherme~Dias da~Fonseca, Celina M.~H. de~Figueiredo,
  and Jayme~Luiz Szwarcfiter.
\newblock The stable marriage problem with restricted pairs.
\newblock {\em Theor. Comput. Sci.}, 306(1-3):391--405, 2003.

\bibitem{DBLP:journals/talg/DuanPS18}
Ran Duan, Seth Pettie, and Hsin{-}Hao Su.
\newblock Scaling algorithms for weighted matching in general graphs.
\newblock {\em {ACM} Trans. Algorithms}, 14(1):8:1--8:35, 2018.

\bibitem{DBLP:conf/icalp/EisenstatMS14}
David Eisenstat, Claire Mathieu, and Nicolas Schabanel.
\newblock Facility location in evolving metrics.
\newblock In {\em Proceedings of the 41st International Colloquium on Automata,
  Languages, and Programming ({ICALP} '14)}, pages 459--470. Springer, 2014.

\bibitem{DBLP:journals/jcss/Feder92}
Tom{\'{a}}s Feder.
\newblock A new fixed point approach for stable networks and stable marriages.
\newblock {\em J. Comput. Syst. Sci.}, 45(2):233--284, 1992.

\bibitem{DBLP:journals/algorithmica/Feder94}
Tom{\'{a}}s Feder.
\newblock Network flow and 2-satisfiability.
\newblock {\em Algorithmica}, 11(3):291--319, 1994.

\bibitem{Feigenbaum17}
Itai Feigenbaum, Yash Kanoria, Irene Lo, and Jay Sethuraman.
\newblock Dynamic matching in school choice: Efficient seat reallocation after
  late cancellations.
\newblock {\em Manag. Sci.}, 66(11):5341--5361, 2020.

\bibitem{DBLP:journals/tcs/FleinerIM07}
Tam{\'{a}}s Fleiner, Robert~W. Irving, and David~F. Manlove.
\newblock Efficient algorithms for generalized stable marriage and roommates
  problems.
\newblock {\em Theor. Comput. Sci.}, 381(1-3):162--176, 2007.

\bibitem{DBLP:conf/fsttcs/GajulapalliLMV20}
Karthik Gajulapalli, James~A. Liu, Tung Mai, and Vijay~V. Vazirani.
\newblock Stability-preserving, time-efficient mechanisms for school choice in
  two rounds.
\newblock In {\em Proceedings of the 40th {IARCS} Annual Conference on
  Foundations of Software Technology and Theoretical Computer Science ({FSTTCS}
  '20)}, pages 21:1--21:15. Schloss Dagstuhl - Leibniz-Zentrum f{\"{u}}r
  Informatik, 2020.

\bibitem{DBLP:journals/tamm/GaleS13}
D.~Gale and L.~S. Shapley.
\newblock College admissions and the stability of marriage.
\newblock {\em Am. Math. Mon.}, 120(5):386--391, 2013.

\bibitem{DBLP:conf/fsttcs/Gupta0R0Z20}
Sushmita Gupta, Pallavi Jain, Sanjukta Roy, Saket Saurabh, and Meirav Zehavi.
\newblock On the (parameterized) complexity of almost stable marriage.
\newblock In {\em Proceedings of the 40th {IARCS} Annual Conference on
  Foundations of Software Technology and Theoretical Computer Science ({FSTTCS}
  '20)}, pages 24:1--24:17. Schloss Dagstuhl - Leibniz-Zentrum f{\"{u}}r
  Informatik, 2020.

\bibitem{DBLP:journals/siamcomp/Gusfield88a}
Dan Gusfield.
\newblock The structure of the stable roommate problem: Efficient
  representation and enumeration of all stable assignments.
\newblock {\em {SIAM} J. Comput.}, 17(4):742--769, 1988.

\bibitem{DBLP:books/daglib/0066875}
Dan Gusfield and Robert~W. Irving.
\newblock {\em The Stable Marriage Problem -- Structure and Algorithms}.
\newblock Foundations of computing series. {MIT} Press, 1989.

\bibitem{HMS21}
Kathrin Hanauer, Monika Henzinger, and Christian Schulz.
\newblock Recent advances in fully dynamic graph algorithms (invited talk).
\newblock In {\em Proceedings of the 1st Symposium on Algorithmic Foundations
  of Dynamic Networks ({SAND} '22)}, pages 1:1--1:47. Schloss Dagstuhl -
  Leibniz-Zentrum f{\"{u}}r Informatik, 2022.

\bibitem{DBLP:journals/jal/Irving85}
Robert~W. Irving.
\newblock An efficient algorithm for the "stable roommates" problem.
\newblock {\em J. Algorithms}, 6(4):577--595, 1985.

\bibitem{DBLP:conf/coco/Karp72}
Richard~M. Karp.
\newblock Reducibility among combinatorial problems.
\newblock In {\em Proceedings of a symposium on the Complexity of Computer
  Computations, held March 20-22, 1972, at the {IBM} Thomas J. Watson Research
  Center, Yorktown Heights, New York, {USA}}, The {IBM} Research Symposia
  Series, pages 85--103. Plenum Press, New York, 1972.

\bibitem{Knuth76}
Donald~E. Knuth.
\newblock {\em Mariages stables et leurs relations avec d'autres probl\`emes
  combinatoires}.
\newblock Les Presses de l'Universit\'{e} de Montr\'{e}al, Montreal, Que.,
  1976.
\newblock Introduction \`a l'analyse math\'{e}matique des algorithmes,
  Collection de la Chaire Aisenstadt.

\bibitem{DBLP:conf/esa/Kunysz16}
Adam Kunysz.
\newblock The strongly stable roommates problem.
\newblock In Piotr Sankowski and Christos~D. Zaroliagis, editors, {\em
  Proceedings of the 24th Annual European Symposium on Algorithms ({ESA} '16)},
  volume~57 of {\em LIPIcs}, pages 60:1--60:15. Schloss Dagstuhl -
  Leibniz-Zentrum f{\"{u}}r Informatik, 2016.

\bibitem{DBLP:conf/isaac/Kunysz18}
Adam Kunysz.
\newblock An algorithm for the maximum weight strongly stable matching problem.
\newblock In {\em Proceedings of the 29th International Symposium on Algorithms
  and Computation ({ISAAC} '18)}, pages 42:1--42:13. Schloss Dagstuhl -
  Leibniz-Zentrum f{\"{u}}r Informatik, 2018.

\bibitem{DBLP:conf/soda/KunyszPG16}
Adam Kunysz, Katarzyna~E. Paluch, and Pratik Ghosal.
\newblock Characterisation of strongly stable matchings.
\newblock In {\em Proceedings of the Twenty-Seventh Annual {ACM-SIAM} Symposium
  on Discrete Algorithms ({SODA} '16)}, pages 107--119. {SIAM}, 2016.

\bibitem{Manlove99}
David~F. Manlove.
\newblock Stable marriage with ties and unacceptable partners.
\newblock Technical report, University of Glasgow, Department of Computing
  Science, 1999.

\bibitem{DBLP:journals/dam/Manlove02}
David~F. Manlove.
\newblock The structure of stable marriage with indifference.
\newblock {\em Discret. Appl. Math.}, 122(1-3):167--181, 2002.

\bibitem{DBLP:books/ws/Manlove13}
David~F. Manlove.
\newblock {\em Algorithmics of Matching Under Preferences}, volume~2 of {\em
  Series on Theoretical Computer Science}.
\newblock WorldScientific, 2013.

\bibitem{DBLP:journals/tcs/ManloveIIMM02}
David~F. Manlove, Robert~W. Irving, Kazuo Iwama, Shuichi Miyazaki, and Yasufumi
  Morita.
\newblock Hard variants of stable marriage.
\newblock {\em Theor. Comput. Sci.}, 276(1-2):261--279, 2002.

\bibitem{DBLP:journals/algorithmica/MarxS10}
D{\'{a}}niel Marx and Ildik{\'{o}} Schlotter.
\newblock Parameterized complexity and local search approaches for the stable
  marriage problem with ties.
\newblock {\em Algorithmica}, 58(1):170--187, 2010.

\bibitem{roth1986allocation}
Alvin Roth.
\newblock On the allocation of residents to rural hospitals: a general property
  of two-sided matching markets.
\newblock {\em Econometrica}, pages 425--427, 1986.

\end{thebibliography}
\end{document}